\newtheorem{theorem}{Theorem}[section]
\newtheorem{lemma}[theorem]{Lemma}
\newtheorem{proposition}[theorem]{Proposition}
\newtheorem{corollary}[theorem]{Corollary}
\newtheorem{remark}[theorem]{Remark}
\newtheorem{assumption}[theorem]{Assumption}
\newcommand{\C}{\ensuremath{\mathbb{C}}}
\newcommand{\R}{\ensuremath{\mathbb{R}}}
\renewcommand{\S}{\mathbb{S}}
\def\V{\mathcal V}
\def\FF{\mathcal F}
\def\H{\mathcal H}
\def\d{\mathrm{d}}
\def\SS{\mathcal S}
\newcommand{\K}{\mathcal K}
\newcommand{\B}{\mathcal B}
\newcommand{\D}{\mathcal D}
\def\ci{\mathrm{ci}}
\def\si{\mathrm{si}}
\def\G{\mathcal G}
\newcommand{\T}{\mathcal T}
\def\Ker{\mathrm{Ker}}
\def\Nul{\mathrm{Nul}}
\def\HS{\mathfrak h}
\def\Hrond{\mathscr H}
\newcommand\cf{{\em cf.}}
\begin{document}

\title{
Low energy spectral and scattering theory \\
for relativistic Schr\"odinger operators
}

\author{Serge Richard$\,^1$\footnote{On leave from Universit\'e de Lyon; Universit\'e
Lyon 1; CNRS, UMR5208, Institut Camille Jordan, 43 blvd du 11 novembre 1918, F-69622
Villeurbanne-Cedex, France.
This work was done during a period of support by the Japan Society for the Promotion of Science and by
``Grants-in-Aid for scientific Research''.}~~and Tomio Umeda$\,^2$\footnote{Supported by
 the Japan Society for the Promotion of Science
     ``Grant-in-Aid for Scientific Research'' (C)
    No.  21540193.}}
\date{}
\maketitle

\vspace{-8mm}

\begin{quote}
\begin{itemize}
\item[$^1$] Graduate school of mathematics,
University of Nagoya,
Chikusa-ku, \\
Nagoya 464-8602,
Japan \\
E-mail: {\tt richard@math.nagoya-u.ac.jp}
\item[$^2$] Department of Mathematical Sciences, University of Hyogo, Shosha,\\ Himeji 671-2201, Japan \\
E-mail: {\tt umeda@sci.u-hyogo.ac.jp}
\end{itemize}
\end{quote}

\begin{abstract}
Spectral and scattering theory at low energy for the relativistic Schr\"odinger operator are investigated. Some striking properties at thresholds of this operator are exhibited, as for example the absence of $0$-energy resonance. Low energy behavior of the wave operators and of the scattering operator are studied, and stationary expressions in terms of generalized eigenfunctions are proved for the former operators. Under slightly stronger conditions on the perturbation the absolute continuity of the spectrum on the positive semi axis is demonstrated. Finally, an explicit formula for the action of the free evolution group is derived. Such a formula, which is well known in the usual Schr\"odinger case, was apparently not available in the relativistic setting.
\end{abstract}

\textbf{2000 Mathematics Subject Classification:} 81U05, 35Q40, 47F05

\smallskip

\textbf{Keywords:} relativistic Schr\"odinger operators, low energy, scattering theory, wave operators, dilation group


\section{Introduction}\label{sec3}
\setcounter{equation}{0}

The aim of this paper is to study
the spectral and scattering theory of the operator $H=\sqrt{-\Delta}+V$ in
$L^2(\R^3)$
 with a special emphasize on low but positive energies.
Various properties of this so-called relativistic Schr\"odinger operator have already been exhibited in \cite{BN,U0,U, W}, but its corresponding wave operators and scattering operator still deserved investigations. Obviously, the natural comparison operator is the free operator $H_0:=\sqrt{-\Delta}$, while for the perturbation it will be assumed that $V$ is a measurable real function on $\R^3$ satisfying
\begin{equation}\label{condition1}
|V(x)| \leq {\rm Const.}~\langle x \rangle^{-\sigma}
\end{equation}
for some $\sigma>1$ and almost every $x \in \R^3$. Here, we have used the standard notation $\langle x \rangle:= (1+x^2)^{1/2}$.

Now, note that similar investigations for the scattering theory in the usual Schr\"odinger case ({\it i.e.}~for the operator $-\Delta + V$) are part of a piece of folklore. Indeed, based on the seminal work \cite{JK}, the low energy behavior of the wave operators and of the scattering operator can be derived from stationary expressions for these operators.
As for the relativistic Schr\"odinger operator,
on the other hand, the absence of existing information on the behavior of $(H-\lambda \mp i 0)^{-1}$ as $\lambda \searrow 0$ prevented such a study. For that reason, part of the present work is dedicated to the study of various properties at low energy of the resolvent of the free operator as well as of the perturbed operator. Only once these preliminary results are obtained, further investigations on the scattering theory can be performed.

So, let us be more precise about the framework and about the results. By assuming that $V$ satisfies Condition \eqref{condition1}, then both $H_0$ and $H$ are self-adjoint operators with domain equal to the Sobolev space of order $1$ on $\R^3$. In addition, the spectrum of $H_0$ consists only of an absolutely continuous part on $[0,\infty)$, while $H$ possesses absolutely continuous spectrum on $[0,\infty)$ together with a possible discrete set of eigenvalues on $\R$ which can accumulate only at $0$ or at $\infty$. These results follow from limiting absorption principles which have already been derived in \cite{BN}.

Now, our first task is the study of the $0$-energy threshold. In particular, one shows that in suitable spaces the operator $(H_0-\lambda\mp i 0)^{-1}$ admits an explicit limit as $\lambda \searrow 0$. Then, one proves that $0$ is generically not an eigenvalue for $H$, and that this operator does not possess $0$-energy resonance, see the discussion following Lemma \ref{lem:0eigenvl-1} for a precise statement.
In the same vein, one also shows that if $0$ is not an eigenvalue of $H$, then $0$ cannot be an accumulation point of positive eigenvalues of $H$.
One should note that such a property has no analog for usual Schr\"odinger operators.
These various spectral results are all derived in Section \ref{sec0behavior}.

Our next task is the derivation of a particular stationary expression for the wave operators $W_\pm$;
the definition of $W_\pm$ can be found at the beginning of Section \ref{secsta}. In fact, such a formula was already announced in \cite{U0} but the full proof was lacking. The construction is based on generalized eigenfunctions which can be proved to exist if $V$ satisfies Condition \eqref{condition1} for $\sigma>2$. The entire Section \ref{secsta} is devoted to this proof and the main result expressing the wave operators in terms of generalized eigenfunctions is contained in Proposition \ref{proposition_stationary}.

Section \ref{secwave} contains our main new results on the wave operators. Obviously, since $W_\pm$ can not be diagonalized in
the spectral representation of $H_0$ or of $H$, studying the low energy behavior of $W_\pm$ has to be suitably defined. In fact, our approach relies on the use of the unitary dilation group, which has often been at the root of investigations on rescaled Schr\"odinger operators, see for example \cite{AGHH}.
So, let us recall the action of the dilation group $\{U_\tau\}_{\tau \in \R}$ on any $f \in L^2(\R^3)$,
namely $[U_\tau f](x) = e^{3\tau/2}f(e^\tau x)$ for any $x \in \R^3$. Then, the following two relations are of importance, namely $U_{-\tau}H_0U_\tau = e^\tau H_0 $ and
\begin{equation}\label{firsttime}
U_{-\tau}\ W_\pm(H_0+V,H_0)\ U_\tau = W_\pm(H_0+e^{-\tau}V_\tau,H_0),
\end{equation}
where $V_\tau(x) = V(e^{-\tau}x)$ for all $x \in \R^3$. Note that for clarity, the dependence of $W_\pm$ on both self-adjoint operators used to define them is mentioned. In that setting, our investigations are concentrating on the behavior of the r.h.s.~term of \eqref{firsttime} as $\tau \to -\infty$. As we shall see in Section \ref{secS}, this study has a direct consequence on the behavior of the scattering operator at low energy, which is well defined since the scattering operator is diagonal in the spectral representation of $H_0$.

Now, as already mentioned above, asymptotic properties of $W_\pm$ can only be derived once suitable information on the resolvent of $H$ are obtained. For that purpose, we provide a rather detailed analysis of the operator $\big(1+u(H_0-\lambda\mp i 0)^{-1} v\big)^{-1}$, with $v = |V|^{1/2}$ and $u = |V|^{1/2}{\rm sgn}(V)$, as $\lambda \searrow 0$, see Proposition \ref{lembehavior0}
where $\big(1+u(H_0-\lambda\mp i 0)^{-1} v\big)^{-1}$ is denoted by $B(\lambda \pm i 0)$.
Note that our analysis holds if $0$ is not an accumulation point of positive eigenvalues. A comment on this implicit assumption is formulated below. Then, with this information at hand, the main result of Section \ref{secwave} states that the strong limit $s-\lim_{\tau \to -\infty} U_{-\tau} W_\pm(H,H_0)U_\tau$ is equal to $1$.

The main consequence of this statement concerns the low energy behavior of the scattering operator $S$ defined by $W_+^*W_-$. In that setting, this corollary states that $s-\lim_{\tau \to -\infty} U_{-\tau} S U_\tau =1$.
Additionally, one also proves a uniform convergence
of the scattering operator
in the spectral representation of $H_0$, namely $u-\lim_{\lambda \searrow 0} S(\lambda) = 1$,
where $S(\lambda)$ is the scattering matrix.
This result indicates that there is a significant difference
between usual
Schr\"odinger operators and relativistic Schr\"odinger operators
in terms of the low energy asymptotics of the scattering matrices: compare
the result of the present paper with the corresponding ones of
\cite{DS1,DS2,F,JK}. What causes
this difference is the absence of $0$-energy resonances for
relativistic Schr\"odinger operators.
These statements and their proofs correspond to the content of Section \ref{secS}.

Now, the non-existence of embedded eigenvalues should certainly deserve more attention for the present model. However, since investigations on this question for Schr\"odinger operators always involve a rather heavy machinery, we do not expect that this question can be easily solved for the present relativistic model. On the other hand, by assuming stronger conditions on $V$, one can rather easily deduce from an abstract argument that the spectrum of $H$ on $[0,\infty)$ is purely absolutely continuous. Section \ref{sectionac} is devoted to such a result.
We clearly suspect that the assumptions on $V$ are much too strong for the non-existence of positive eigenvalues, but since the argument is rather simple we have decided
to present it for completeness.
The proof is based on an abstract result obtained in \cite{Ri06}

Finally, in an appendix, we derive an explicit formula for the action of
the unitary propagator $e^{-itH_0}$.
Such a formula, which is well known in the Schr\"odinger case,
was apparently not known in the relativistic case.

In summary, this work contains various results on
the low energy behavior
of the spectral and the scattering theory
of relativistic Schr\"odinger operators.
A similar study for the high energy behavior of these operators would certainly be valuable, and accordingly, a better understanding of the existence or the absence of positive eigenvalues should also deserve some attention. Only once these pre-requisites are fulfilled, a rather complete picture of the scattering theory for relativistic Schr\"odinger operators would be at hand.
Note finally that we have confined our attention to the three dimensional case, although it is apparent that some results of the present paper could be generalized to higher dimensional cases (see for example  \cite{Wei} for a detailed study of the completeness of the generalized eigenfunctions for odd dimensional relativistic Schr\"odinger operators).
However, it is well-known that for non-relativistic Schr\"odinger operators, the $0$-energy behavior of the resolvent highly depends on the space dimension, and we expect that a similar phenomenon also takes place for its relativistic counterpart.

\paragraph{Notations:}
We introduce the notations which will be used in the present paper.

We shall mainly work in the Hilbert space $\H:=L^2(\R^3)$ with norm and scalar product denoted by $\|\cdot\|_\H$ and $\langle \cdot,\cdot\rangle_\H$. Our convention is that the scalar product is linear in its first argument. The weighted Sobolev spaces of order
$t\in \R$ and weight $s\in \R$
are denoted by $\H^t_s$.
Note that if $s$ or $t$ is equal to $0$, we simply omit it. A norm on $\H_s^t$ is provided by the expression
$$
\big\|f\big\|_{\H_s^t} = \big\|\langle X\rangle^s \langle D\rangle^t f \big\|_\H,
$$
where $X$ is the position operator and $D=-i\nabla$ is its conjugate operator in $\H$. With these notations, the usual Laplace operator $-\Delta$ is equal to $D^2$.

The notation $C_0(\R^3)$ denotes the set of continuous functions on $\R^3$ which vanish at infinity. The Schwartz space on $\R^3$ is denoted by $\SS(\R^3)$ while $C_c^\infty(\R^3)$ defines the set of smooth functions on $\R^3$ with compact support.

By extension, for any $s \in \R$ we denote by  $\langle  \cdot, \cdot \rangle_{s,-s}$ the pairing between
$\H_s$ and $\H_{-s}$, namely for $f\in \H_s$ and $g \in \H_{-s}$:
\begin{equation*}
\langle f, g \rangle_{s, -s} = \int f(x) \overline{g(x)} \,\d x\ .
\end{equation*}
If $f$ belongs to $\SS(\R^3)$ and $g$ is a tempered distribution, we shall use the notation $\langle f,g\rangle_{\SS,\SS'}$ for their pairing.
Similarly, if $f\in\H_s^{-t}$ and $g\in\H_{-s}^{t}$, we shall use the notation
$\langle f,g\rangle_{\H_s^{-t},\H_{-s}^{t}}$.
The usual Fourier transform of $f$ is denoted both by $\hat f$ and $\FF f$ and is defined explicitly on any $f\in \SS(\R^3)$ by
\begin{equation*}
[\FF f](k)= (2\pi)^{-3/2} \int_{\R^3}f(x)\;\!e^{-ix\cdot k}\;\!\d x\ .
\end{equation*}
The same notation is used for its standard extension to tempered distributions. As well known, this map is a unitary operator in $\H$, and its inverse is denoted by $\FF^*$.

For a pair of Hilbert spaces $\G$ and $\H$, $\B(\G;\H)$ denotes the Banach space of all bounded and linear operators from $\G$ to $\H$, and $\K(\G;\H)$ the subset of compact operators. We set $\B(\H)$ for $\B(\H;\H)$ and $\K(\H)$ for $\K(\H;\H)$.

For complex numbers, we use the standard notation $\C_\pm :=\{z \in \C \mid \pm\Im z>0\}$.

\vspace{10pt}

{\sc Acknowledgments}

The authors would like to thank the anonymous referee for
careful reading and
constructive suggestions which enable them to improve the paper.


\section{$\boldsymbol{0}$-energy threshold}\label{sec0behavior}
\setcounter{equation}{0}

In this section, we derive various results about the behavior of the resolvent of $H_0$ at $0$. We also provide information about the $0$-energy eigenvalue of $H$ and about the absence of $0$-energy resonance for this operator.
Finally, we show that if $0$ is not an eigenvalue of $H$,
then this operator can not have an accumulation of positive eigenvalues at $0$.

We start by studying an auxiliary operator which will be related to the behavior of the resolvent of $H_0$ at $0$.
Following \cite[Sec.~2]{U}, let us set $G_0$ for the operator defined for $f\in C_c^\infty(\R^3)$ by
\begin{equation*}
[G_0f](x) :=\frac{1}{2\pi^2}\int_{\R^3}\frac{1}{|x-y|^2}f(y)\;\!\d y.
\end{equation*}
Clearly, this corresponds to the operator of convolution by the function
\begin{equation}\label{defg0}
g_0: \R^3 \to \R \quad \hbox{with}\quad  g_0(x):= \frac{1}{2\pi^2}|x|^{-2}.
\end{equation}
It has been shown in \cite[Lem.~5.1]{U} that this operator continuously extends
to an element of $\B(\H_s,\H)$ as well as an element of $\B(\H,\H_{-s})$ for any $s>3/2$.
The following statement is an improvement of this result.

\begin{lemma}\label{lemG0}
For any $s>1$, the operator $G_0$ belongs to $\K(\H_s,\H)$ and to $\K(\H,\H_{-s})$.
\end{lemma}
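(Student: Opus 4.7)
The plan is to work on the Fourier side, exploiting the fact that $G_0$ acts as the Fourier multiplier by $|k|^{-1}$: the standard Riesz-kernel computation gives $[\FF g_0](k) = (2\pi)^{-3/2}|k|^{-1}$, hence $[\FF(G_0 f)](k) = |k|^{-1}[\FF f](k)$. I first establish compactness of $G_0 : \H_s \to \H$ by a sequential argument combining a Rellich-type compactness with a low/high frequency splitting, and then derive $G_0 \in \K(\H,\H_{-s})$ from a duality argument that uses the symmetry of the convolution kernel.

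For the first assertion, I would take an arbitrary bounded sequence $(f_n) \subset \H_s$ with $\|f_n\|_{\H_s} \leq M$. Since $\FF$ is unitary and intertwines $X$ with $D$, the sequence $\hat f_n := \FF f_n$ is bounded in the Sobolev space $\H^s$. Because $s > 3/2$, Sobolev embedding provides a uniform $C_0(\R^3)$-bound together with a common modulus of continuity for the $\hat f_n$; Arzel\`a--Ascoli applied on an exhaustion of $\R^3$ by balls, followed by a diagonal extraction, yields a subsequence (still denoted $\hat f_n$) converging uniformly on every compact subset of $\R^3$ to some $\hat f \in C_0(\R^3)\cap \H$, with $f:=\FF^{-1}\hat f$.

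The conclusion $G_0 f_n \to G_0 f$ in $\H$ would then follow from a Plancherel-based low/high frequency splitting: for any $R > 0$,
\begin{equation*}
\|G_0 f_n - G_0 f\|_{\H}^2
= \int_{|k| \leq R}\frac{|\hat f_n(k) - \hat f(k)|^2}{|k|^2}\,\d k
+ \int_{|k| > R}\frac{|\hat f_n(k) - \hat f(k)|^2}{|k|^2}\,\d k.
\end{equation*}
The second term is dominated by $R^{-2}\|f_n - f\|_{\H}^2 \leq 4M^2 R^{-2}$, uniformly in $n$. For the first term, I would factor out $\sup_{|k|\leq R}|\hat f_n(k) - \hat f(k)|^2$ and use that $\int_{|k|\leq R}|k|^{-2}\,\d k = 4\pi R < \infty$, a feature specific to three dimensions. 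Uniform convergence on $\{|k|\leq R\}$ then drives the first term to zero for each fixed $R$; sending first $n \to \infty$ and afterwards $R \to \infty$ yields the compactness.

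For the second claim, observe that $G_0$ is formally self-adjoint since its integral kernel $g_0(x-y)$ is real and symmetric in $(x,y)$. Under the identifications $(\H_s)^* \simeq \H_{-s}$ via the pairing $\langle\cdot,\cdot\rangle_{s,-s}$ and $\H^* \simeq \H$, a direct computation shows that the Banach adjoint of $G_0 : \H_s \to \H$ is exactly the operator $G_0 : \H \to \H_{-s}$, which is then compact by Schauder's theorem. The genuinely delicate point in the whole proof is the singularity $|k|^{-2}$ at the origin, controlled by its local integrability in $\R^3$ — precisely the three-dimensional feature that underlies the paper's entire low-energy program.
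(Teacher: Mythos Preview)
Your argument is correct but takes a genuinely different route from the paper. The paper works in position space: writing $\varrho=\langle\cdot\rangle^{-s}$, it splits the kernel as $g_0=\chi_{b(0,1)}g_0+(1-\chi_{b(0,1)})g_0=:g_1+g_2$ with $g_1\in L^1(\R^3)$ and $g_2\in L^2(\R^3)$, so that $G_0\varrho(X)=\widehat{g_1}(D)\varrho(X)+\widehat{g_2}(D)\varrho(X)$; the first summand is compact because $\widehat{g_1},\varrho\in C_0(\R^3)$, and the second is Hilbert--Schmidt since $\widehat{g_2},\varrho\in L^2(\R^3)$. You instead pass entirely to the Fourier side and run a sequential argument combining the embedding $\H^s\hookrightarrow C^{0,\alpha}$, Arzel\`a--Ascoli on compacta, and the local integrability of $|k|^{-2}$ in $\R^3$. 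Both approaches ultimately rest on the same three-dimensional feature, but the paper's decomposition is reused later (the same splitting of $g_0$ reappears in the proofs of Lemmas~\ref{funclimit} and~\ref{endfunctions}), so there is some economy in introducing it here; your argument, by contrast, is self-contained and avoids invoking the standard compactness criterion for products $\phi(D)\psi(X)$. For the second assertion the paper simply repeats its computation for $\varrho(X)G_0$, whereas your Schauder duality is arguably cleaner. One small gap worth closing: you should justify that the Arzel\`a--Ascoli limit $\hat f$ actually lies in $\H$ --- pass to a further subsequence converging weakly in $\H^s$ (or just in $\H$) and identify the weak and locally-uniform limits.
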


\begin{proof}
Let us set $\varrho$ for $\langle \cdot\rangle^{-s}$ with $s>1$.
Clearly, one has to show that the operators $G_0\varrho(X)$ and $\varrho(X)G_0$ belong to $\K(\H)$. However, since $\varrho \in L^3(\R^3)$ and since the map $\R^3\ni x \mapsto|x|^{-1}\in \R$ belong to the weak $L^p$-space $L^3_w(\R^3)$, these inclusions follow directly from \cite[Thm.~4.2]{Simon}.
\end{proof}

It clearly follows from this result that $G_0$ belongs to $\K(\H_s,\H_{-s})$ for any $s>1$. In fact, by real interpolation  one also obtains that the operator $G_0$ belongs to $\K(\H_{(1-\theta) s},\H_{-\theta s})$ for any $\theta \in [0,1]$.
Indeed, this result follows from \cite{H} together with the identification of the interpolation spaces
$S(\theta, 2; \H_s, \H)$, resp.~$S(\theta, 2; \H, \H_{-s})$, introduced in that reference with $\H_{(1-\theta)s}$, resp.~$\H_{-\theta s}$ (see also \cite[Sec.~2.8.1]{ABG} for additional information on real interpolation).
In particular, by choosing $\theta =1/2$, one deduces that $G_0$ belongs to $\K(\H_{s},\H_{-s})$, for any $s>1/2$.

Now, it is shown in  \cite{BN} that the resolvents $(H_0-\lambda \mp i\varepsilon)^{-1}$ admit limits as $\varepsilon \searrow 0$ in $\B(\H_{s},\H_{-s})$ for any $s>1/2$ and $\lambda>0$. In that respect, it would be interesting to understand the behaviour of $(H_0-\lambda \mp i0)^{-1}\in \B(\H_s,\H_{-s})$ for $\lambda \searrow 0$ and for $s>1/2$. In the next lemma, we obtain such a description but our approach is valid only in $\K(\H_{s},\H_{-s})$  with $s>1$. It is not clear to us if the convergences still hold in $\K(\H_{s},\H_{-s})$ with $s>1/2$.

\begin{lemma}\label{funclimit}
For any $s>1$ and $\lambda \in (0,\infty)$, the operators $(H_0-\lambda \mp i0)^{-1}$ belong to $\K(\H_s,\H_{-s})$. Furthermore, the maps $(0,\infty)\ni \lambda \mapsto  (H_0-\lambda \mp i0)^{-1} \in \K(\H_s,\H_{-s})$ are continuous in norm and converge to $G_0$ as $\lambda \searrow 0$.
\end{lemma}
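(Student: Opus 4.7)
The approach is to derive the explicit position-space integral kernel of $(H_0-\lambda\mp i 0)^{-1}$ and compare it directly with the kernel $g_0$ of $G_0$. Exploiting the radial symmetry of the Fourier multiplier $(|k|-\lambda\mp i0)^{-1}$ and switching to spherical coordinates reduces the kernel to the one-dimensional integral
\begin{equation*}
K_\lambda^\mp(w)=\frac{1}{2\pi^2|w|}\int_0^\infty\frac{r\sin(r|w|)}{r-\lambda\mp i 0}\,\d r.
\end{equation*}
The decomposition $r/(r-\lambda)=1+\lambda/(r-\lambda)$ together with the classical closed-form evaluations of the Hilbert transforms of $\sin$ and $\cos$ on the half-line (expressible via the sine and cosine integrals $\si,\ci$, whose notations are introduced in the preamble) and the Sokhotski--Plemelj formula produces
\begin{equation*}
K_\lambda^\mp(w)=g_0(w)+\frac{\lambda}{2\pi^2|w|}\,h^\mp(\lambda|w|),
\end{equation*}
with $h^\mp(t)=\cos(t)\bigl(\si(t)+\pi\bigr)-\sin(t)\,\ci(t)\pm i\pi\sin(t)$. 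A direct verification shows that $h^\mp$ is continuous and bounded on $[0,\infty)$, with $h^\mp(0)=\pi/2$ (the product $\sin(t)\ci(t)$ vanishing at $0$ despite the logarithmic blow-up of $\ci$) and $h^\mp(t)=\pi e^{\pm it}+O(1/t)$ as $t\to\infty$.

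The uniform boundedness of $h^\mp$ delivers the central pointwise estimate
\begin{equation*}
|K_\lambda^\mp(w)-g_0(w)|\leq \frac{C\lambda}{|w|}\qquad\text{for all } w\neq 0 \text{ and } \lambda>0.
\end{equation*}
Viewing $(H_0-\lambda\mp i0)^{-1}-G_0$ as the convolution operator with kernel $K_\lambda^\mp - g_0$, the Hilbert--Schmidt norm between the weighted spaces satisfies
\begin{equation*}
\bigl\|(H_0-\lambda\mp i 0)^{-1}-G_0\bigr\|_{\mathrm{HS}(\H_s,\H_{-s})}^2\leq C\lambda^2\iint_{\R^3\times\R^3}\frac{\langle x\rangle^{-2s}\langle y\rangle^{-2s}}{|x-y|^2}\,\d x\,\d y.
\end{equation*}
A direct asymptotic analysis of this double integral, splitting $\R^3\times\R^3$ into the regions $|y|<|x|/2$, $|x|/2\leq|y|\leq 2|x|$, and $|y|>2|x|$, shows that it is finite precisely when $s>1$, matching exactly the hypothesis. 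This delivers simultaneously the norm convergence $(H_0-\lambda\mp i0)^{-1}\to G_0$ in $\B(\H_s,\H_{-s})$ at the explicit rate $O(\lambda)$ as $\lambda\searrow 0$, and the compactness $(H_0-\lambda\mp i 0)^{-1}\in\K(\H_s,\H_{-s})$ for each $\lambda>0$: the weighted difference is Hilbert--Schmidt (hence compact), and $G_0\in\K(\H_s,\H_{-s})$ for every $s\geq 1$ by the interpolation remark following Lemma~\ref{lemG0}.

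Norm continuity of $\lambda\mapsto(H_0-\lambda\mp i0)^{-1}$ on $(0,\infty)$ is already provided by the limiting absorption principle of \cite{BN} cited in the introduction; alternatively it is recovered from the present kernel analysis via local Lipschitz continuity of $h^\mp$ applied to the difference $K_{\lambda_1}^\mp-K_{\lambda_2}^\mp$. The main technical obstacle is establishing the uniform pointwise bound $|K_\lambda^\mp-g_0|\leq C\lambda/|w|$, which requires controlling the cancellation between $\sin(t)$ and $\ci(t)$ as $t\to 0^+$ and the oscillatory asymptotics $h^\mp(t)\sim\pi e^{\pm it}$ as $t\to\infty$; together these keep the bracket uniformly bounded on $[0,\infty)$. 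The sharp exponent $s>1$ emerges directly from the weighted $L^2$-integral above, mirroring the interpolated regularity of $G_0$ established after Lemma~\ref{lemG0}.
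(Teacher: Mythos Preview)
Your argument is correct and takes a genuinely different route from the paper. The paper splits $R_0(\lambda\pm i0)-G_0$ into two pieces $K_\lambda^\pm+M_\lambda$ and treats them separately: $K_\lambda^\pm=2\lambda(-\Delta-\lambda^2\mp i0)^{-1}$ is handled by invoking the Jensen--Kato analysis of the Schr\"odinger resolvent, while $M_\lambda$ is shown to be Hilbert--Schmidt from $\H_s$ to $\H$ only for $s>3/2$ (using $m_\lambda\in L^2(\R^3)$ and the scaling $m_\lambda(x)=\lambda^2 m_1(\lambda x)$, yielding the rate $O(\lambda^{1/2})$); the range $s>1$ is then reached by real interpolation. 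You instead keep the two pieces together, observe the single pointwise bound $|g_\lambda^\pm-g_0|\le C\lambda/|w|$ (your boundedness of $h^\mp$ is exactly the combination of $|k_\lambda^\pm|\le C\lambda/|w|$ with the paper's estimate \eqref{defg2}), and compute the Hilbert--Schmidt norm of $\langle X\rangle^{-s}(R_0(\lambda\pm i0)-G_0)\langle X\rangle^{-s}$ directly via the weighted double integral $\iint\langle x\rangle^{-2s}\langle y\rangle^{-2s}|x-y|^{-2}\,\d x\,\d y$, which is finite precisely for $s>1$. This is more self-contained---no appeal to \cite{JK} and no interpolation---and it yields the sharper rate $O(\lambda)$ rather than $O(\lambda^{1/2})$. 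The paper's modular decomposition, on the other hand, isolates the Schr\"odinger piece $K_\lambda^\pm$ and the scaling structure of $M_\lambda$, both of which are reused later (e.g.\ in the proof of Proposition~\ref{lembehavior0}).

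One minor point: your explicit expression for $h^\mp$ has a sign discrepancy in the $\sin(t)\ci(t)$ term when compared with the paper's kernel formula $g_\lambda^\pm=g_0+k_\lambda^\pm+m_\lambda$ (the correct bracket is $\cos(t)(\pi+\si(t))+\sin(t)\ci(t)\pm i\pi\sin(t)$), but this does not affect the boundedness claim on which your estimate rests.
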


\begin{proof}
Recall from \cite[Eq.~(5.3)]{U} that for any $\lambda>0$ the following formal equalities hold:
\begin{equation}\label{sum3}
R_0(\lambda \pm i0):=(H_0-\lambda\mp i 0)^{-1} = G_0 + K_\lambda^\pm +M_\lambda,
\end{equation}
where the definitions of $K_\lambda^\pm$ and of $M_\lambda$ are going to be recalled below. Thus, the present proof consists first in introducing the rigorous meaning of \eqref{sum3} and then in showing that for $s>1$ the operators $K_\lambda^\pm$ and  $M_\lambda$ belong to $\K(\H_s,\H_{-s})$, that they are continuous in norm as functions of $\lambda$, and that they converge in norm to $0$ as $\lambda \searrow 0$. Equivalently, one can show the same properties for the operators $\langle X \rangle^{-s}K_\lambda^\pm\langle X \rangle^{-s}$ and $\langle X \rangle^{-s}M_\lambda\langle X \rangle^{-s}$ in $\K(\H)$.

It has been proved in \cite[Eq.~(4.14)]{U} that $R_0(\lambda\pm i0)f= G_\lambda^\pm f$ for any $f \in C^\infty_c(\R^3)$, where $G_\lambda^\pm$ are the integral operators defined by
\begin{equation*}
[G_\lambda^\pm f](x):= \int_{\R^3} g_\lambda^\pm (x-y)\;\!f(y)\;\!\d y
\end{equation*}
with
\begin{equation}\label{defg}
g_\lambda^\pm(x):=\frac{1}{2\pi^2|x|^2} + k_\lambda^\pm(x) + m_\lambda(x)
\end{equation}
and
\begin{align}
&k_\lambda^\pm(x):=\frac{\lambda}{2\pi}\cdot \frac{e^{\pm i \lambda |x|}}{|x|}, \nonumber\\
&m_\lambda(x):= \frac{\lambda}{2\pi^2|x|}\big(
\sin(\lambda|x|)\;\ci(\lambda|x|)+ \cos(\lambda|x|)\;\si(\lambda|x|)\big),  \label{defg1}
\end{align}
where $\ci$ and $\si$ are respectively the cosine integral and the sine integral functions.
Note that these expressions explicitly define each term in \eqref{sum3}.

Now, let us observe that
$K_\lambda^\pm =2\lambda(-\Delta-\lambda^2 \mp i 0)^{-1}$.
It is well known (see for example \cite{JK}) that the map
$z \mapsto (-\Delta-z)^{-1}\in \B(\H^{-1}_{s},\H^1_{-s'})$
is continuous for $z \in \overline{\C_\pm}$ and for any
$s,s'>1/2$ with $s+s'>2$.
In particular, this resolvent is continuous as $z \to 0$ in $\overline{\C_\pm}$. Then,
by an adequate choice of $s$
and $s'$, one infers that the maps
$\R \ni \lambda \mapsto \langle X \rangle^{-s} K_\lambda^\pm \langle X \rangle^{-s}\in \K(\H)$
are continuous in norm and that
$\lim_{\lambda\searrow 0} \langle X \rangle^{-s} K_\lambda^\pm \langle X \rangle^{-s}=0$ in norm.

For the compactness of the operator $\langle X \rangle^{-s} M_\lambda \langle X \rangle^{-s}$ for $\lambda>0$, let us set $\varrho(\cdot)$ for $\langle \cdot \rangle^{-s}$ for some $s>3/2$. By taking the estimate \cite[Eq.~(5.16)]{U} into account, namely
\begin{equation}  \label{defg2}
\big|
\sin(r)\;\ci(r)+ \cos(r)\;\si(r)\big|\leq {\rm Const.}~(1+r)^{-1},  \quad  \ \
  0<  r <\infty,
\end{equation}
it is easily seen that the function $m_\lambda$ belongs to $L^2(\R^3)$ and thus the operator $\varrho(X)M_\lambda$ is a Hilbert-Schmidt operator.
Then, let us observe that the relation $m_\lambda(x)= \lambda^2 m_1(\lambda x)$ holds for any $\lambda>0$. One deduces that
\begin{align}
\nonumber \|\varrho(X) M_\lambda - \varrho(X) M_{\lambda'}\|_{\B(\H)}
&\leq \|\varrho(X) M_\lambda - \varrho(X) M_{\lambda'}\|_{HS} \\
\nonumber &= \|\varrho\|_{L^2(\R^3)}\;\|m_\lambda-m_{\lambda'}\|_{L^2(\R^3)} \\
\label{diff}&=\|\varrho\|_{L^2(\R^3)}\;\big\|\lambda^2 m_1(\lambda\cdot)-(\lambda')^2 m_1\big(\lambda {\textstyle \frac{\lambda'}{\lambda}}\cdot \big)\big\|_{L^2(\R^3)}
\end{align}
and that \eqref{diff} vanishes as $\lambda'\to \lambda$ because of the continuity of the dilation group in $L^2(\R^3)$.
Finally, from the equality $\|m_1(\lambda\cdot)\|_{L^2(\R^3)} = \lambda^{-3/2}\|m_1\|_{L^2(\R^3)}$ one infers that
\begin{align*}
\|\varrho(X) M_\lambda\|_{HS} & = \|\varrho\|_{L^2(\R^3)}\;\|m_\lambda\|_{L^2(\R^3)} \\
&= \lambda^2 \|\varrho\|_{L^2(\R^3)}\;\|m_1(\lambda\cdot)\|_{L^2(\R^3)}\\
&=\lambda^{1/2} \|\varrho\|_{L^2(\R^3)}\;\|m_1\|_{L^2(\R^3)}
\end{align*}
which implies that $\|\varrho(X) M_\lambda\|_{\B(\H)}\leq {\rm Const.}~\lambda^{1/2}$.

Clearly, the same estimates and results hold for the operator $M_\lambda \varrho(X)$. Thus, one has obtained that $M_\lambda\in \K(\H_{s},\H) \cap \K(\H,\H_{-s})$ for any $s>3/2$, and that the norm of this operator is continuous in $\lambda$ and vanishes as $\lambda^{1/2}$ when $\lambda\searrow 0$ in both norms. By a real interpolation argument, one obtains that the same result holds in $\K(\H_1,\H_{-1})$. Note that the control on the dependence on $\lambda$ for the norm in $\K(\H_1,\H_{-1})$ can be obtained by taking \cite[Eq.~(2.6.2)]{ABG} into account.
\end{proof}

In Propositions \ref{prp:sparse1} and \ref{prp:sparse2} below,
we show that $0$ is
generically not an eigenvalue of the operator
$H$. To this end, we follow the arguments presented in
\cite{BE0} in the context of Weyl-Dirac operators.
For that purpose, we introduce the set
$L^3(\R^3;\R)$
as a natural class for the potential $V$. Note that any measurable and real function $V$ satisfying Condition \eqref{condition1} with $\sigma>1$ belongs to $L^3(\R^3;\R)$.

\begin{lemma} \label{lem:sa}
If $V\in L^3({\R}^3;\R)$, then $V$ is $H_0$-bounded
with relative bound $0$. In particular, $H:= H_0 +V$ is a self-adjoint
operator in $\H$ with domain $\H^1$.
\end{lemma}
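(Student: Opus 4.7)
The plan is to establish the relative-bound-zero estimate directly from a Hölder--Sobolev estimate, then invoke the Kato--Rellich theorem.

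First I would observe that since $H_0 = |D|$, for any $f \in \mathcal{H}^1$ one has $\|H_0 f\|_{\mathcal{H}}^2 = \int |\xi|^2|\hat f(\xi)|^2\,\d\xi = \|\nabla f\|_{\mathcal{H}}^2$. Combining the Sobolev embedding $\mathcal{H}^1 \hookrightarrow L^6(\R^3)$ (in dimension $3$), which yields $\|f\|_{L^6} \leq C\|\nabla f\|_{\mathcal{H}}$ for some absolute constant $C$, with Hölder's inequality $\|Vf\|_{\mathcal{H}} \leq \|V\|_{L^3}\|f\|_{L^6}$, one obtains the basic bound
\begin{equation*}
\|Vf\|_{\mathcal{H}} \leq C\,\|V\|_{L^3}\,\|H_0 f\|_{\mathcal{H}}, \qquad f \in \mathcal{H}^1.
\end{equation*}

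To upgrade this to relative bound zero, I would use a truncation argument. For $N > 0$, set $V_N := V\,\chi_{\{|V|\leq N\}}$, so that $V_N \in L^\infty(\R^3)$ with $\|V_N\|_{L^\infty} \leq N$, while $V - V_N \in L^3(\R^3;\R)$. By dominated convergence applied to $|V-V_N|^3 \leq |V|^3 \in L^1(\R^3)$, one has $\|V - V_N\|_{L^3} \to 0$ as $N \to \infty$. Given any $\varepsilon > 0$, choose $N$ so large that $C\|V - V_N\|_{L^3} < \varepsilon$. Then for every $f \in \mathcal{H}^1$,
\begin{equation*}
\|Vf\|_{\mathcal{H}} \leq \|(V-V_N)f\|_{\mathcal{H}} + \|V_N f\|_{\mathcal{H}} \leq \varepsilon\,\|H_0 f\|_{\mathcal{H}} + N\,\|f\|_{\mathcal{H}},
\end{equation*}
which shows that $V$ is $H_0$-bounded with relative bound $0$.

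Finally, since $V$ is real-valued it acts as a symmetric operator on $\mathcal{H}^1$, so the Kato--Rellich theorem applies and yields that $H = H_0 + V$ is self-adjoint on the domain $\mathrm{dom}(H_0) = \mathcal{H}^1$. There is no real obstacle here; the only point worth a line of justification is the identification $\mathrm{dom}(H_0) = \mathcal{H}^1$ and the identity $\|H_0 f\|_{\mathcal{H}} = \|\nabla f\|_{\mathcal{H}}$, both of which are immediate from the Fourier definition of $H_0 = \sqrt{-\Delta}$.
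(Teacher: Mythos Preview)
Your argument is correct: the H\"older--Sobolev bound $\|Vf\|_{\H}\le C\|V\|_{L^3}\|H_0 f\|_{\H}$ together with the truncation $V=V_N+(V-V_N)$ and Kato--Rellich is exactly the standard route, and every step is justified. The paper does not write out its own proof but simply refers to \cite[Lem.~2]{BE0}, whose argument is of the same type, so your proposal matches the intended approach.
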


The proof of Lemma \ref{lem:sa} can be mimicked from the proof of \cite[Lem.~2]{BE0}.

\begin{lemma} \label{lem:cpt}
If $\;V\in L^3({\R}^3; \R)$,
then $(-\Delta)^{-1/4}V(-\Delta)^{-1/4}$ is a compact
operator in $\H$ satisfying
\begin{equation}  \label{eq:cpt1}
\big\|(-\Delta)^{-1/4}V(-\Delta)^{-1/4} \big\|_{\B(\H)}
\leq 2^{-1/3}{\pi}^{-2/3}\;\!\|V \|_{L^3(\R^3)}.
\end{equation}
\end{lemma}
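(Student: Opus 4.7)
I would split the proof into two logically independent pieces: first derive the operator norm bound \eqref{eq:cpt1} from H\"older's inequality combined with the sharp fractional Sobolev embedding, then deduce compactness by a density argument.

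For the norm bound, I begin from the identity
\begin{equation*}
\big\langle (-\Delta)^{-1/4} V (-\Delta)^{-1/4} f, g\big\rangle_\H
= \int_{\R^3} V(x)\;\![(-\Delta)^{-1/4} f](x)\;\!\overline{[(-\Delta)^{-1/4} g](x)}\,\d x,
\end{equation*}
valid for $f,g\in\H$ by the self-adjointness of $(-\Delta)^{-1/4}$. H\"older's inequality with three factors of common exponent $3$ bounds the modulus of the right-hand side by $\|V\|_{L^3}\,\|(-\Delta)^{-1/4} f\|_{L^3}\,\|(-\Delta)^{-1/4} g\|_{L^3}$. I would then invoke the sharp Sobolev embedding
\begin{equation*}
\|(-\Delta)^{-1/4} h\|_{L^3(\R^3)}\;\!\leq\;\! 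S\,\|h\|_\H, \qquad S^2 = 2^{-1/3}{\pi}^{-2/3},
\end{equation*}
which is the specialization of Lieb's sharp fractional Sobolev inequality to $n=3$, $s=1/2$; its extremal is the Aubin-Talenti profile $(1+|x|^2)^{-1}$, and the stated value of $S^2$ can be verified by direct computation of $\|(1+|x|^2)^{-1}\|_{L^3(\R^3)} = 2^{-2/3}\pi^{2/3}$ and $\|(-\Delta)^{1/4}(1+|x|^2)^{-1}\|_\H = \pi/\sqrt 2$. Taking the supremum over unit $f, g$ then yields \eqref{eq:cpt1}.

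For compactness, I approximate $V$ in $L^3(\R^3)$ by a sequence $V_n \in C_c^\infty(\R^3)$; the norm bound just established gives $(-\Delta)^{-1/4} V_n (-\Delta)^{-1/4}\to (-\Delta)^{-1/4} V (-\Delta)^{-1/4}$ in $\B(\H)$, reducing the problem to the case of a smooth compactly supported potential. For such a $V_n$ I would factor
\begin{equation*}
(-\Delta)^{-1/4} V_n (-\Delta)^{-1/4}
= \big[(-\Delta)^{-1/4}\langle X\rangle^{-2}\big]\cdot\big[\langle X\rangle^2 V_n \langle X\rangle^2\big]\cdot\big[\langle X\rangle^{-2}(-\Delta)^{-1/4}\big],
\end{equation*}
the middle factor being a bounded multiplication operator on $\H$. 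Compactness of $\langle X\rangle^{-2}(-\Delta)^{-1/4}$ can then be shown by splitting its Fourier symbol $|\xi|^{-1/2}$ at $|\xi|=1$: the low-frequency piece has symbol in $L^p(\R^3)$ for any $p<6$, the high-frequency piece for any $p>6$, while $\langle X\rangle^{-2}\in L^p(\R^3)$ for every $p>3/2$. The Kato-Seiler-Simon inequality then places each piece in a Schatten ideal, hence both are compact, and so is the whole product.

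The main obstacle, as in any sharp constant result, is matching the precise numerical value $2^{-1/3}\pi^{-2/3}$ rather than merely \emph{some} constant: the H\"older-plus-Sobolev scheme yields the announced bound only once the full Aubin-Talenti-Lieb extremal computation is invoked, which is classical but not derived within this paper.
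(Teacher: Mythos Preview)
Your treatment of the norm bound is essentially the paper's own argument: both apply H\"older with three factors of exponent $3$ and then the sharp fractional Sobolev inequality $\|(-\Delta)^{-1/4}h\|_{L^3(\R^3)}\le 2^{-1/6}\pi^{-1/3}\|h\|_{\H}$; the paper cites \cite[Sec.~8.4]{LL} and \cite[p.~119]{Stein} for this inequality rather than invoking the Aubin--Talenti--Lieb extremal, but the constant is identical. One point of care: your opening identity cannot be justified ``for $f,g\in\H$ by self-adjointness of $(-\Delta)^{-1/4}$'', since that operator is unbounded on $\H$ and $(-\Delta)^{-1/4}f$ need not lie in $\H$ at all. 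The paper handles this by first taking $f,g\in\SS(\R^3)$, rewriting the pairing on the Fourier side as a distributional identity, and only then extending by density once the bound is in hand.

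For compactness the two arguments differ. The paper gives no details and simply refers to \cite[Lem.~1]{BE0}. Your route---approximate $V$ in $L^3$ by $C_c^\infty$ potentials via the norm bound just proved, insert and remove weights $\langle X\rangle^{\pm 2}$, split the multiplier $|\xi|^{-1/2}$ at $|\xi|=1$, and place each piece of $\langle X\rangle^{-2}(-\Delta)^{-1/4}$ in a Schatten class by Kato--Seiler--Simon---is a correct and fully self-contained alternative. It trades a citation for a short explicit computation and has the advantage of making the proof independent of \cite{BE0}.
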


\begin{proof}
We only prove the inequality \eqref{eq:cpt1}, the proof of the compactness can be mimicked directly from the proof of \cite[Lem.~1]{BE0}.

To prove the inequality \eqref{eq:cpt1}, we first borrow
the Sobolev inequality for $\sqrt{-\Delta}$
from \cite[Sec.~8.4]{LL} or from \cite[p.119, Thm.~1]{Stein}, namely
that for any $f \in \H$:
\begin{equation}  \label{eq:cpt2}
\| f \|_{L^2(\R^3)}
\ge 2^{1/6} {\pi}^{1/3}
\big\|(-\Delta)^{-1/4} f \big\|_{L^3(\R^3)}.
\end{equation}
Now let $f$, $g\in \SS({\R}^3)$.
We then see that $(-\Delta)^{-1/4} f\in {\H}^1$, hence
$V(-\Delta)^{-1/4} f\in {\H}$ by Lemma \ref{lem:sa}
and
$(-\Delta)^{-1/4} V (-\Delta)^{-1/4} f \in L^3(\R^3)$
by \eqref{eq:cpt2}.
Therefore, one can appeal to the definition of the
Fourier transform of tempered distributions, and gets
\begin{align}
\nonumber \big\langle g, \, (-\Delta)^{-1/4} V (-\Delta)^{-1/4} f
\big \rangle_{\SS, {\SS}'}
&=
\big \langle \FF g, \, \FF[(-\Delta)^{-1/4} V (-\Delta)^{-1/4} f] \big \rangle_{\SS, {\SS'}} \\
\nonumber &=
\int_{{\R}^3} |k|^{-1/2} \,(\FF g)(k)
 \, \overline{\FF \big( V
 (-\Delta)^{-1/4}f\big)(k)} \, \d k  \\
&=
\int_{{\R}^3} \big((-\Delta)^{-1/4}g\big)(x)
 \, \overline{V(x)
\big( (-\Delta)^{-1/4}f\big)(x)} \, \d x.    \label{eq:cpt3}
\end{align}
By applying H\"older inequality twice to \eqref{eq:cpt3},
one obtains
\begin{gather}  \label{eq:cpt5}
\begin{split}
\big|\big\langle g, \, (-\Delta)^{-1/4} V (-\Delta)^{-1/4} f
\big\rangle_{\SS, {\SS}'} \big|
&\le
\big\| (-\Delta)^{-1/4} g \big\|_{L^3(\R^3)}
\;\!\| V \|_{L^3(\R^3)}
\;\! \big\| (-\Delta)^{-1/4}f \big\|_{L^3(\R^3)}  \\
&\le 2^{-1/3} {\pi}^{-2/3}  \|g \|_{L^2(\R^3)}
\;\!\|V \|_{L^3(\R^3)} \;\!\|f \|_{L^2(\R^3)}.
\end{split}
\end{gather}
In the second inequality of \eqref{eq:cpt5}, we have used \eqref{eq:cpt2}.
Since $\SS({\R}^3)$ is dense
in $\H$, we find that
the inequality \eqref{eq:cpt5} is valid for all $g\in \H$. Hence, it follows that $(-\Delta)^{-1/4} V (-\Delta)^{-1/4} f \in \H$, and the  estimate \eqref{eq:cpt1} is then obtained by  density argument.
\end{proof}

For the next statements, we need the notation
$H_V:=H_0 + V$ to indicate the dependence on $V$.
Let us also denote
by $\sigma_p(H_V)$ the point spectrum of $H_V$,
by $\Ker(H_V)$ the subset
$\{f\in \H^1 \mid H_V f = 0 \}$ and by $\Nul(H_V)$  the dimension of this subset.

\begin{proposition} \label{prp:sparse1}
Let $V$ be in $L^3({\R}^3;\R)$.
Then
$0 \not \in \sigma_p(H_0+aV)$
for all $a\in \R$ except for a discrete subset of $\R$.
\end{proposition}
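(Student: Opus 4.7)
The plan is to reduce the claim to a Birman--Schwinger type statement using the compact self-adjoint operator
\[
K:=H_0^{-1/2}V H_0^{-1/2}=(-\Delta)^{-1/4}V(-\Delta)^{-1/4}
\]
provided by Lemma \ref{lem:cpt}. Since $V$ is real-valued, $K$ is self-adjoint and compact, hence its non-zero spectrum consists of a sequence of real eigenvalues with $0$ as only possible accumulation point. Accordingly the set $\Sigma:=\{-\mu^{-1}: \mu\in\sigma_p(K)\setminus\{0\}\}$ is discrete in $\R$, and the target of the proof is the inclusion $\{a\in\R : 0\in\sigma_p(H_0+aV)\}\subset\Sigma$.

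The main step is the following one-way implication: if $a\in\R$ and $f\in\H^1\setminus\{0\}$ satisfy $(H_0+aV)f=0$, then $-1/a\in\sigma_p(K)$. First $a\neq 0$, since $H_0$ has purely absolutely continuous spectrum. Set $\phi:=H_0^{1/2}f$; the inclusion $\H^1\subset\H^{1/2}=D(H_0^{1/2})$ gives $\phi\in\H$, and $\phi\neq 0$ because $H_0=\sqrt{-\Delta}$ has trivial kernel in $\H$. Lemma \ref{lem:sa} ensures $Vf\in\H$, so the equation $H_0f=-aVf$ is an identity in $\H$, and applying $H_0^{-1/2}$ to both sides should yield
\[
\phi=-a\,H_0^{-1/2}Vf=-a\,H_0^{-1/2}V H_0^{-1/2}\phi=-aK\phi,
\]
that is, $K\phi=-\phi/a$. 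Consequently $-1/a\in\sigma_p(K)\setminus\{0\}$, hence $a\in\Sigma$, which closes the argument.

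The one non-trivial point I expect to handle carefully is the application of the unbounded operator $H_0^{-1/2}$ in the centred display, since one must check that each intermediate vector lies in $D(H_0^{-1/2})$. I plan to verify this by direct spectral calculus: if $f\in\H^{1/2}$ then
\[
\int_0^{\infty}\lambda^{-1}\,\d\big\|E_{H_0}(\lambda)\,H_0f\big\|_\H^2=\|H_0^{1/2}f\|_\H^2<\infty,
\]
which gives $H_0f\in D(H_0^{-1/2})$ together with $H_0^{-1/2}H_0f=H_0^{1/2}f=\phi$, and similarly $\phi\in D(H_0^{-1/2})$ with $H_0^{-1/2}\phi=f$. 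It is fortunate that only this one direction of the Birman--Schwinger correspondence is needed: the converse implication $-1/a\in\sigma_p(K)\Rightarrow 0\in\sigma_p(H_0+aV)$ would require an elliptic bootstrap promoting a candidate eigenfunction from $\H^{1/2}$ to $\H^1$, which looks delicate under the sole hypothesis $V\in L^3(\R^3;\R)$ and is not required for the present statement.
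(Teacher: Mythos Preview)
Your proof is correct and follows the same Birman--Schwinger reduction as the paper: both set $\phi=H_0^{1/2}f$ and show that $(H_0+aV)f=0$ forces $(I+aK)\phi=0$, exactly as in the paper's equation \eqref{eq:sprs2}. The one substantive difference lies in how discreteness of the exceptional set is obtained. The paper applies the analytic Fredholm theorem to the entire family $z\mapsto zK$, whereas you exploit the additional fact that $K$ is \emph{self-adjoint} (since $V$ is real-valued), so its nonzero spectrum is automatically a real sequence accumulating only at $0$, and the set $\Sigma=\{-\mu^{-1}:\mu\in\sigma_p(K)\setminus\{0\}\}$ is discrete without further machinery. Your route is slightly more elementary in this real-valued setting; the paper's argument would survive unchanged for complex-valued $V$, where $K$ need not be self-adjoint. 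Your domain check for $H_0^{-1/2}$ is handled with the appropriate care, and your remark that only the one-way implication is needed is well taken.
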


\begin{proof}
Let us define a $\B(\H)$-valued analytic function on $\C$
by
\begin{equation*}
K_{z}:=(-\Delta)^{-1/4}zV(-\Delta)^{-1/4}=z(-\Delta)^{-1/4}V(-\Delta)^{-1/4}.
\end{equation*}
By Lemma \ref{lem:cpt}, $K_{z}$ is a compact operator for each $z \in \C$. Therefore, one can apply the analytic Fredholm theorem (see for example \cite[p.~201]{RS1})
and deduce that
$(I + K_{z})$ is invertible in $\B(\H)$ for all $z \in \C$
except for a discrete subset of $\C$.
In particular, one infers that $(I + K_{a})^{-1}\in \B(\H)$ for all $a \in \R$ except for a discrete subset of $\R$.

Now, let $a\in \R$ such that
$(I + K_{a})^{-1}\in \B(\H)$,
and let us assume that there exists $f \in \Ker(H_{aV})$. Clearly, one has
$H_0f= -aVf$. Then, let us set $g:= (-\Delta)^{1/4}f  \in \H$ which satisfies
\begin{equation}  \label{eq:sprs2}
g = - (-\Delta)^{-1/4}aV(-\Delta)^{-1/4} g.
\end{equation}
It is obvious that \eqref{eq:sprs2} is
equivalent to $(I + K_{a})g=0$.
This implies that $g =0$, because $(I + K_{a})$ is invertible in $\B(\H)$. Since $(-\Delta)^{1/4}$
is an injective mapping from $\H^1$ to $\H$, it follows that $f=0$,  and we can conclude that $\Nul(H_{aV}) = 0$
whenever $(I + K_{a})$ is invertible in $\B(\H)$.
\end{proof}

\begin{proposition} \label{prp:sparse2}
The set
$\V:=\{ V \in L^3({\R}^3; \R) \mid 0 \not \in \sigma_p(H_0+V) \}$
contains an open and dense subset of $L^3({\R}^3;\R)$.
\end{proposition}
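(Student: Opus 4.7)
My plan is to exhibit an open and dense subset $\U$ of $L^3(\R^3;\R)$ contained in $\V$, thereby establishing both conditions at once. Setting $K_V := (-\Delta)^{-1/4} V (-\Delta)^{-1/4} \in \B(\H)$, I would define
\begin{equation*}
\U := \big\{V \in L^3(\R^3;\R) \mid I + K_V \text{ is invertible in } \B(\H) \big\}.
\end{equation*}
The inclusion $\U \subset \V$ is already encoded in the second half of the proof of Proposition \ref{prp:sparse1}: any $f \in \Ker(H_V)$ lies in $\H^1$ by Lemma \ref{lem:sa}, so $g := (-\Delta)^{1/4} f \in \H$ satisfies $(I + K_V) g = 0$ via the calculation \eqref{eq:sprs2}, and invertibility of $I + K_V$ then forces $g = 0$ and hence $f = 0$.

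For openness, I would invoke Lemma \ref{lem:cpt} (more precisely, the norm estimate \eqref{eq:cpt1}) which tells us that the linear map $L^3(\R^3;\R) \ni V \mapsto K_V \in \B(\H)$ is bounded and hence continuous. Since the set of invertible elements of $\B(\H)$ is open, $\U$ is the preimage of an open set under a continuous map, so it is open in $L^3(\R^3;\R)$.

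For density, I would fix an arbitrary $V_0 \in L^3(\R^3;\R)$ and revisit the analytic Fredholm argument from the proof of Proposition \ref{prp:sparse1}, this time applied to the entire $\B(\H)$-valued function $z \mapsto I + z K_{V_0}$ on $\C$. Since it equals the identity at $z = 0$, its exceptional set $\Sigma \subset \C$ is discrete in $\C$, so $\Sigma \cap \R$ is discrete in $\R$; picking a sequence $a_n \to 1$ with $a_n \notin \Sigma$ yields $a_n V_0 \in \U$ together with $a_n V_0 \to V_0$ in $L^3(\R^3;\R)$. No genuine obstacle is expected here: the heavy lifting (compactness of $K_V$ from Lemma \ref{lem:cpt} and the analytic Fredholm argument for a one-parameter family from Proposition \ref{prp:sparse1}) has already been assembled, and the present statement is essentially a topological repackaging of those results through the sufficient condition $V \in \U$.
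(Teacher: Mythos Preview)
Your proposal is correct and follows essentially the same approach as the paper: the auxiliary set $\U$ coincides with the paper's $\widetilde{\V}$ (by the Fredholm alternative, since $K_V$ is compact), the inclusion $\U\subset\V$ and the density argument are identical, and your openness argument via continuity of $V\mapsto K_V$ and openness of invertibles is just a slightly more streamlined version of the paper's explicit Neumann-series computation.
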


\begin{proof}
Let us now set
\begin{equation*}
K_V:=(-\Delta)^{-1/4}V(-\Delta)^{-1/4}
\end{equation*}
for any $V \in L^3({\R}^3; \R)$.
In the sequel, we show that the set $\widetilde{\V}$ defined by
\begin{equation*}
\widetilde{\V}:=
\big \{ V \in L^3({\R}^3; \R) \mid
 \Ker(I + K_V) =\{0\} \big \}
\end{equation*}
is open and dense in $L^3({\R}^3; \R)$.
Then, the statement of the Proposition is a consequence of the inclusion $\widetilde \V \subset \V$ which has already been proved in the second half of the previous proof.

Let $V \in \widetilde{\V}$. Since $K_V$ is a compact operator
in $\H$ by Lemma \ref{lem:cpt}, we observe that
$(I + K_V)$ is invertible in $\B(\H)$.
Now choose
a real number  $\delta > 0$ small enough such that
$\|(I + K_V)^{-1}\| \, 2^{-1/3}{\pi}^{-2/3}\delta <1$.
If $V' \in L^3({\R}^3; \R)$
satisfies $\| V - V^{\prime} \|_{L^3(\R^3)} < \delta$,
then the identity
\begin{equation*}
I + K_{V'}
=
(I + K_{V}) \big(
I + (I + K_V)^{-1}
(K_{V'} - K_{V})
\big),
\end{equation*}
together with \eqref{eq:cpt1},
enables one to construct the inverse of $I + K_{V'}$
by a Neumann series.  Hence,
$V' \in \widetilde{\V}$, and then $\widetilde{\V}$ is an open subset of $L^3({\R}^3; \R)$.

To prove the density of $\widetilde{\V}$,
let $\varepsilon>0$ and $V\in L^3({\R}^3; \R)$ be given.
It then follows from the proof of Proposition \ref{prp:sparse1}
that $(I + K_{aV})$ is invertible in $\B(\H)$ for all $a \in \R$ except for
a discrete subset of $\R$.
This means that one can choose $a \in \R$
so that $\|V - aV \|_{L^3(\R^3)} < \varepsilon$ and
that $\Ker(I + K_{aV}) =\{0\}$.
Therefore $aV \in \widetilde \V$, and then $\widetilde{\V}$ is dense in $L^3({\R}^3; \R)$.
\end{proof}

We conclude this section with a theorem, which asserts that if $0$ is not an eigenvalue of $H$, then $0$ cannot be an accumulation
point of positive eigenvalues of $H$. On the way, we also discuss about the non-existence of $0$-energy resonances.

\begin{theorem}  \label{prp:0eigenvl}
Assume that $V$ satisfies Condition \eqref{condition1} with $\sigma > 2$, and that $0 \not \in \sigma_p(H)$. Then there exists a constant $\lambda_0>0$
such that $[0, \, \lambda_0) \cap \sigma_p(H)= \emptyset$.
\end{theorem}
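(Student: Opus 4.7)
The plan is to argue by contradiction, reducing the question to the invertibility of $1 + u_0 R_0(\lambda+i0) v_0$ on $\H$ and exploiting the norm continuity of the free resolvent up to the threshold that was established in Lemma \ref{funclimit}. Decompose $V = u_0 v_0$ with $v_0 = |V|^{1/2}$ and $u_0 = |V|^{1/2}\mathrm{sgn}(V)$, both pointwise dominated by $C\langle \cdot \rangle^{-\sigma/2}$. Since $\sigma > 5/2$ gives $\sigma/2 > 5/4 > 1$, picking any $s \in (1,\sigma/2]$ makes $v_0 : \H \to \H_s$ and $u_0 : \H_{-s} \to \H$ bounded, so Lemma \ref{funclimit} produces a norm-continuous map
\begin{equation*}
[0,\infty) \ni \lambda \mapsto u_0 R_0(\lambda+i0) v_0 \in \K(\H),
\end{equation*}
with limit $u_0 G_0 v_0$ at $\lambda = 0$.

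Suppose for contradiction there is a sequence $\lambda_n \searrow 0$ of positive eigenvalues of $H$ with normalized eigenfunctions $f_n \in \H^1$. A standard limiting absorption argument (using that $H_0$ has no positive eigenvalues) identifies $f_n = -R_0(\lambda_n+i0) V f_n$ in $\H_{-s}$; applying $u_0$ gives
\begin{equation*}
(1 + u_0 R_0(\lambda_n+i0) v_0)(u_0 f_n) = 0.
\end{equation*}
Moreover $u_0 f_n \ne 0$, since $u_0 f_n = 0$ would give $V f_n = v_0 u_0 f_n = 0$ and hence $H_0 f_n = \lambda_n f_n$, contradicting the pure absolute continuity of $\sigma(H_0)$. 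Consequently $1 + u_0 R_0(\lambda_n+i0) v_0$ is not invertible in $\B(\H)$ for any $n$.

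The crux is to prove that the limit operator $1 + u_0 G_0 v_0$ is invertible in $\B(\H)$. Since $u_0 G_0 v_0 \in \K(\H)$, the Fredholm alternative reduces this to showing the kernel is trivial. Let $g \in \H$ with $(1 + u_0 G_0 v_0)g = 0$ and set $f := -G_0 v_0 g$; the relation $g = u_0 f$ follows, whence $v_0 g = v_0 u_0 f = Vf$. For any $\alpha \in (1,\sigma/2)$, $v_0 g \in \H_\alpha$, and the real interpolation consequence of Lemma \ref{lemG0} (taking $s=2$, $\theta = 1-\alpha/2$) yields $G_0 \in \K(\H_\alpha, \H_{\alpha-2})$, so $f \in \H_{-(2-\alpha)}$. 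Choosing $\alpha$ slightly above $1$ places the exponent $2-\alpha$ in $(0,1) \subset (0, \min\{\sigma-3/2, 5/2\})$, where the inclusion uses the hypothesis $\sigma > 5/2$, i.e.\ $\sigma - 3/2 > 1$. Next, using $f = -G_0 Vf$ together with the Fourier-side identity $G_0 H_0 = I$ on $\SS(\R^3)$, one computes for every $\phi \in \SS(\R^3)$ that
\begin{equation*}
\langle \phi, H_0 f \rangle_{\SS, \SS'} = \langle H_0\phi, -G_0 Vf\rangle_\H = -\langle G_0 H_0\phi, Vf\rangle_\H = -\langle \phi, Vf\rangle_\H,
\end{equation*}
so $Hf = 0$ in the sense of distributions. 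Lemma \ref{absence} then gives $f \in \H^1$, the hypothesis $0 \notin \sigma_p(H)$ forces $f = 0$, and hence $g = u_0 f = 0$.

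Having established invertibility of $1 + u_0 G_0 v_0$, the norm continuity from the first paragraph supplies some $\lambda_0 > 0$ such that $1 + u_0 R_0(\lambda+i0) v_0$ is invertible in $\B(\H)$ for every $\lambda \in [0,\lambda_0)$, contradicting the non-invertibility at $\lambda_n$ for $n$ large. The main technical obstacle is producing an $f$ with decay compatible with the admissible range of weights in Lemma \ref{absence}: this is exactly what the assumption $\sigma > 5/2$ provides, since it forces $\sigma - 3/2 > 1$ and so allows the exponent $2-\alpha < 1$ coming from the regularity of $G_0 v_0 g$ to fall inside $(0, \min\{\sigma-3/2, 5/2\})$.
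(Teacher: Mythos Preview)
Your argument is correct and follows essentially the same route as the paper: show that the relevant Birman--Schwinger-type operator at energy $0$ is invertible (via the Fredholm alternative, reducing to Lemma~\ref{absence} and the hypothesis $0\notin\sigma_p(H)$), then use the norm continuity of Lemma~\ref{funclimit} to propagate invertibility to small $\lambda$, and finally link positive eigenvalues to non-invertibility. The paper organizes this slightly differently: it works with $I+VG_0$ and $I+VR_0(\lambda\pm i0)$ acting on the weighted space $\H_s$ (Lemmas~\ref{lem:0eigenvl-1} and~\ref{lem:0eigenvl-2}), whereas you use the symmetrized operators $1+u_0G_0v_0$ and $1+u_0R_0(\lambda+i0)v_0$ on $\H$. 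Both choices are standard and entirely equivalent here; your version has the mild advantage of avoiding weighted spaces in the Fredholm step, while the paper's asymmetric version makes the passage from an eigenfunction to the resolvent identity $R_0(\lambda\pm i0)Vf=-f$ (your ``standard limiting absorption argument'') more explicit---the paper isolates this as a separate lemma with a short proof via the dominated convergence theorem. One notational point: in your verification that $Hf=0$ for $f=-G_0Vf$, the pairings labeled $\langle\cdot,\cdot\rangle_\H$ are not literally $L^2$ inner products (since $f$ is only known to lie in $\H_{-(2-\alpha)}$ at that stage); the cleanest justification is simply the Fourier-multiplier identity $H_0G_0=I$ applied to $Vf\in\H$, which the paper also invokes without further comment.
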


To prove this assertion, we need a preliminary lemma. For its statement, we use a convenient decomposition of the potential $V$ into two parts, namely $V=uv$ with
\begin{equation}\label{decompV}
v := |V|^{1/2} \qquad {\rm and} \qquad
u := |V|^{1/2} \;{\rm sgn}(V).
\end{equation}

\begin{lemma}  \label{lem:0eigenvl-1}
Assume that $V$ satisfies Condition \eqref{condition1} with $\sigma > 2$, and that $0 \not \in \sigma_p(H)$. Then, the operator $I + uG_0v$ is invertible in $\B(\H)$.
\end{lemma}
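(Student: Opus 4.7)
The plan is to apply the Fredholm alternative, reducing invertibility of $I+VG_0$ in $\B(\H_s)$ to the triviality of its kernel, and then to identify any element of that kernel with a $0$-energy eigenfunction of $H$ via the relation $G_0=H_0^{-1}$. The hypothesis $0\notin\sigma_p(H)$ together with Lemma \ref{absence} will then close the argument.

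First I would check that $VG_0$ is a compact operator on $\H_s$. The real interpolation extension of Lemma \ref{lemG0} recorded just after its proof yields $G_0\in\K(\H_a,\H_{-b})$ whenever $a,b\ge 0$ satisfy $a+b>3/2$, and in particular $G_0\in\K(\H_s,\H_{-s})$ for every $s>3/4$. On the other hand, Condition \eqref{condition1} makes multiplication by $V$ bounded from $\H_{-s}$ to $\H_s$ as soon as $2s\le\sigma$. Composing these two maps shows $VG_0\in\K(\H_s)$ for every $s\in(1,\sigma/2]$, so the Fredholm alternative reduces the proof to verifying $\Ker(I+VG_0)=\{0\}$ in $\H_s$.

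For injectivity, I would exploit that $G_0$ is convolution by $(2\pi^2)^{-1}|x|^{-2}$, whose Fourier multiplier is $|k|^{-1}$; consequently $G_0=H_0^{-1}$ on tempered distributions, and in particular $H_0 G_0 f=f$ in $\SS'(\R^3)$ for every $f\in\H$. Suppose $f\in\H_s$ satisfies $(I+VG_0)f=0$ and set $u:=G_0 f$. Combining $H_0 u=f$ with $Vu=VG_0 f=-f$ gives
\[
Hu \;=\; H_0 u + Vu \;=\; f-f \;=\; 0
\]
in the sense of distributions. By the interpolated Lemma \ref{lemG0}, $u\in\H_{-s'}$ for any $s'>3/2-s$; since $s>1$ and $\sigma>5/2$, one may choose $s'\in\bigl(0,\min\{\sigma-3/2,5/2\}\bigr)$. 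Lemma \ref{absence} then upgrades $u$ to an element of $\H^1$, i.e., a genuine $0$-energy eigenfunction of $H$, which by hypothesis must vanish. Thus $u=0$ and hence $f=H_0 u=0$.

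The delicate point I expect is the rigorous justification of the distributional identity $H_0 G_0=I$ on the class at hand: one must check that $|k|^{-1}\hat f(k)$ defines a tempered distribution when $\hat f\in L^2(\R^3)$, and that $|D|$ applied to it recovers $\hat f$ in $\SS'(\R^3)$. This is a standard consequence of the local integrability of $|k|^{-1}$ in three dimensions combined with $L^2$ decay at infinity, but it is the one step that cannot simply be quoted from the lemmas already established.
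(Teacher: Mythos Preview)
Your proof is correct and follows essentially the same strategy as the paper: establish compactness of $VG_0$ on $\H_s$, invoke the Fredholm alternative, and show that any solution of $(I+VG_0)f=0$ produces via $u=G_0 f$ a distributional solution of $Hu=0$ which Lemma~\ref{absence} upgrades to a genuine $0$-energy eigenfunction, forcing $u=0$ and hence $f=0$. The only cosmetic differences are that the paper quotes Lemma~\ref{funclimit} (rather than the interpolated form of Lemma~\ref{lemG0}) for the compactness of $G_0:\H_s\to\H_{-s}$, and that the paper recovers $f=0$ from $f=-Vu$ rather than from $f=H_0u$; the latter route sidesteps the distributional identity $H_0G_0=I$ you flagged as delicate, though your justification of that identity is also fine.
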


\begin{proof}
From Lemma \ref{lemG0}, we know that $uG_0v$ belongs to $\K(\H)$.
Therefore, it is sufficient to show that $-1$ is not
an eigenvalue of $uG_0v$.

By contradiction, let us assume that there exists $f \in \H$ satisfying  $uG_0v f = -f$, and let us set $g:=G_0vf\in \H$.
One infers then that
$$
G_0Vg = G_0vu(G_0vf) = G_0v(-f)=-g.
$$
Thus, for any $h\in \H^1$ one has
\begin{equation*}
-\langle g, H_0h\rangle_\H= \langle G_0Vg,H_0h\rangle_\H
= \langle \FF G_0Vg,\FF H_0h\rangle_\H
=\big\langle |X|^{-1}\FF Vg, |X| \FF h\big\rangle_\H
= \langle Vg,h\rangle_\H,
\end{equation*}
where we have used the equality $(-\Delta)^{-1/2}=G_0$ as operators on $\H$, see \cite[Corol.~4.5]{MSP99}.
Thus, one deduces that $\langle g, H_0 h\rangle_\H= \langle - Vg,h\rangle_\H$ for any $h\in \H^1$. By the definition of the adjoint of an operator and since $H_0$ is self-adjoint, one directly infers from this equality that $g$ belongs to the domain of $H_0$ and satisfies $H_0g=-Vg$, or equivalently that $g\in \H^1$ and $Hg=0$. By the assumption, it follows that $g=0$, from which one deduces that $f=-ug=0$.
\end{proof}

We should like to stress that Lemma \ref{lem:0eigenvl-1} points towards the non-existence of $0$-energy resonances for $H$. Indeed, $0$-energy bound states or $0$-energy resonances usually prevent the expression $I+uG_0v$ of being invertible: In a vague and formal sense, the function $g$ of the proof of Lemma \ref{lem:0eigenvl-1} is a solution of the equation $Hg=0$, and these solutions usually prevent the invertibility of the mentioned operator. However, Lemma  \ref{lem:0eigenvl-1} asserts that $g=0$ if
$0 \not \in \sigma_p(H)$, and therefore $0$-energy bound states are the only troublesome solutions of the formal equality $Hg=0$.

\begin{proof}[Proof of Theorem \ref{prp:0eigenvl}]
Let us first recall that map
$$
[0,\infty)\ni \lambda\mapsto uR_0(\lambda\pm i0)v\in\B(\H)
$$
is continuous in norm and converges to $uG_0v$ as $\lambda\searrow 0$, see Lemma \ref{funclimit}. In addition, since the set of invertible elements in $\B(\H)$ is open, one infers from Lemma \ref{lem:0eigenvl-1} that there exists a positive constant $\lambda_0$ such that
for each $\lambda \in [0, \, \lambda_0)$, the operators $I + u R_0(\lambda \pm i 0)v$ are invertible in $\B(\H)$. In particular, it implies that $-1$ does not belong to the spectrum of $uR_0(\lambda\pm i0)v$ for any $\lambda\in [0,\lambda_0)$.

To prove the statement, it remains to show that the previous sentence implies the absence of eigenvalue of $H$ in $[0,\lambda_0)$. In fact, such a relation between the eigenvalues of $H$ and the eigenvalue $-1$ for $uR_0(\lambda\pm i0)v$ is well known, see for example \cite[Lem.~4.7.8]{Yaf}. Note that the assumptions of this lemma are fulfilled, since the strong $H_0$-smoothness of $v$ and $u$ (\cf~\cite[Def.~4.4.5]{Yaf} for this definition) is a standard consequence of the so-called trace theorems \cite[Thm.~1.1.4 \& 1.1.5]{Yaf2}.
\end{proof}


\section{Stationary expression for the wave operators}\label{secsta}
\setcounter{equation}{0}

In this section we derive stationary expressions for the wave operators which were already announced in \cite{U0}. Since the limiting absorption principle and
the generalized eigenfunction expansions for the operator $H$ was
established in \cite{U}, we can follow the line of \cite[Sec.~2]{KK}, where the discussions were made in an abstract setting,
and the line of \cite[Chapt.~5]{K}, where the discussions
were given for the three-dimensional Schr\"odinger
operator.

For $z \in \C\setminus\R$, let us recall that $R_0(z)$ and $R(z)$ are used respectively for the resolvents $(H_0-z)^{-1}$ and $(H-z)^{-1}$. The notation $E_0(\cdot)$ is used for the spectral measure of $H_0$. We also recall that the following limiting absorption principle has been proved in \cite{BN}, namely for $s>1/2$ and $\lambda\in (0,\infty)\setminus \sigma_p(H)$ the operators $R(\lambda\pm i 0):=\lim_{\varepsilon \searrow 0}R(\lambda \pm i \varepsilon)$ belong to $\B\big(\H_s,\H_{-s}\big)$. Note that the condition $\sigma>1$ in \eqref{condition1} has been tacitly assumed. As a consequence, the wave operators $W_\pm$ defined by the strong limits
\begin{equation*}
W_\pm:=s-\lim_{t \to \pm \infty} e^{itH}e^{-itH_0}
\end{equation*}
exist and are asymptotically complete. In addition, these expressions are equal to the ones obtained by the usual stationary approach, see for example \cite[Thm.~5.3.6]{Yaf}.

\begin{lemma}\label{lem:wosr1}
Let $s \in (1/2,\sigma-1/2)$ and assume that $f$, $g$ belong to $\H_s$ with $E_0([a,b])g=g$ for some $[a,b] \subset (0,\infty) \setminus \sigma_p(H)$.
Then one has
\begin{equation}\label{waveop+u3}
\langle f, W_\pm g\rangle _\H =
\int_a^b \big \langle \{1 - V R(\lambda\pm i0) \}f ,
E_0^{\prime}(\lambda)g \big \rangle_{s, -s} \,\d\lambda\ ,
\end{equation}
where $E_0^{\prime}(\lambda) := \frac{1}{2\pi i}
\big( R_0(\lambda+i0) - R_0(\lambda-i0) \big)\in \B\big(\H_s,\H_{-s}\big)$.
\end{lemma}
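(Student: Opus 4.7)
The plan is to start from the standard Abel-regularized stationary representation
$$\langle f, W_\pm g\rangle_\H = \lim_{\varepsilon\searrow 0}\frac{\varepsilon}{\pi}\int_\R \langle R(\lambda\pm i\varepsilon)f, R_0(\lambda\pm i\varepsilon)g\rangle_\H\,\d\lambda,$$
which is the classical consequence of Abel regularization applied to $\lim_{t\to\pm\infty}\langle e^{-itH}f, e^{-itH_0}g\rangle$ (see \cite{KK,K} cited at the beginning of the section). The support condition $E_0([a,b])g=g$, together with the Poisson-kernel-type decay of the integrand on $\R\setminus[a,b]$, will localize the $\lambda$-integration to $[a,b]$ in the limit.

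To exhibit the stated structure I would transfer $R(\lambda\pm i\varepsilon)$ to the other side of the pairing via $R(\lambda\pm i\varepsilon)^{*}=R(\lambda\mp i\varepsilon)$, apply the second resolvent identity $R(z)=R_0(z)-R(z)VR_0(z)$ at $z=\lambda\mp i\varepsilon$, and then move $V$ (self-adjoint and real-valued) back across the inner product. Linearity of $\langle\cdot,\cdot\rangle_\H$ in its first argument then produces
$$\langle R(\lambda\pm i\varepsilon)f, R_0(\lambda\pm i\varepsilon)g\rangle_\H = \bigl\langle \{I-VR(\lambda\pm i\varepsilon)\}f,\, R_0(\lambda\mp i\varepsilon)R_0(\lambda\pm i\varepsilon)g\bigr\rangle_\H,$$
while the first resolvent identity identifies
$\tfrac{\varepsilon}{\pi}R_0(\lambda\mp i\varepsilon)R_0(\lambda\pm i\varepsilon) = \tfrac{1}{2\pi i}\bigl(R_0(\lambda+i\varepsilon)-R_0(\lambda-i\varepsilon)\bigr)$,
which converges in $\B(\H_s,\H_{-s})$ to $E_0'(\lambda)$ as $\varepsilon\searrow 0$ by the LAP for $H_0$, uniformly in $\lambda$ on compact subsets of $(0,\infty)$.

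It then remains to pass to the limit in $\varepsilon$ inside the integral. The LAP for $H$ on $[a,b]\subset(0,\infty)\setminus\sigma_p(H)$ yields $R(\lambda\pm i\varepsilon)f\to R(\lambda\pm i0)f$ in $\H_{-s}$ uniformly in $\lambda\in[a,b]$; multiplication by $V$ then maps $\H_{-s}$ into $\H_{\sigma-s}$, and the constraint $s<\sigma-1/2$ (so that $\sigma-s>1/2$) provides the decay needed to control the resulting pairing against $E_0'(\lambda)g\in\H_{-s}$ uniformly in $\lambda$. Dominated convergence then delivers the stated formula. The main obstacle is the justification of this interchange of limit and integration; it relies essentially on the hypothesis $[a,b]\cap\sigma_p(H)=\emptyset$, which is precisely what guarantees that $\|R(\cdot\pm i\varepsilon)\|_{\B(\H_s,\H_{-s})}$ stays uniformly bounded in $\varepsilon\ge 0$ on $[a,b]$, without which dominated convergence would fail.
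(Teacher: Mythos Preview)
Your proposal is correct and follows essentially the same route as the paper. Both start from the Abel-regularized stationary formula, use the second resolvent identity to factor out $\{1-VR(\lambda\pm i\varepsilon)\}$ and produce the approximate spectral density $\frac{\varepsilon}{\pi}R_0(\lambda\mp i\varepsilon)R_0(\lambda\pm i\varepsilon)$, and then pass to the limit $\varepsilon\searrow 0$ using the limiting absorption principle. The only noteworthy difference is in the justification of the interchange of limit and integral together with the reduction of the integration range to $[a,b]$: the paper invokes \cite[Lem.~5.2.2]{Yaf} as a single black box handling both issues, whereas you sketch a direct argument combining the Poisson-kernel localization with dominated convergence driven by the uniform LAP bounds on $[a,b]$. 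The abstract lemma from \cite{Yaf} encapsulates exactly the argument you outline, so the two presentations are equivalent in substance.
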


Note that it follows from the hypothesis on $g$ that
$E_0(\Delta)g=0$ for any Borel set
$\Delta \subset J:=\R \setminus [a, b]$,
and that $E_0^{\prime}(\lambda)g =0$ for all $\lambda \in J$.
Thus, the usual integral over $\R$ reduces to an integral over the finite interval $[a,b]$. The following proof is standard, but we recall it for completeness.

\begin{proof}
Let $\varepsilon >0$ and $f,g$ as in the statement. By Parseval's identity and the equation of the resolvent
$R(z) = R_0(z) \{ 1 - VR(z)\}$, one has:
\begin{align} \label{eqini}
\int_0^{\infty}
\big \langle e^{-\varepsilon t} e^{\mp iHt}f,e^{-\varepsilon t} e^{\mp iH_0t}g\big\rangle_{\H}  \,\d t
&=
\frac{1}{2\pi} \int_{-\infty}^{\infty}
\big\langle
R(\lambda \pm i \varepsilon)f, R_0(\lambda \pm i \varepsilon)g
\big\rangle_{\H} \,\d\lambda \\
\label{waveop+u5}
&=
\frac{1}{2\varepsilon} \int_{-\infty}^{\infty}
\big\langle
\{1 - V R(\lambda \pm i\varepsilon) \} f, \delta_{\varepsilon}(H_0-\lambda)g
\big\rangle_{\H} \, \d\lambda,
\end{align}
where
\begin{equation*}
\delta_{\varepsilon}(H_0-\lambda) =
 \frac{\varepsilon}{\pi} R_0(\lambda \pm i \varepsilon)^*R_0(\lambda\pm i \varepsilon)
=\frac{1}{2\pi i} \big(R_0(\lambda + i \varepsilon)-
R_0(\lambda - i \varepsilon)\big).
\end{equation*}
Furthermore, it is known that since the wave operators exist they are also obtained by the Abelian limit
\begin{equation}\label{waveop+u11}
\big\langle f, W_\pm g\big\rangle_\H=
\lim_{\varepsilon \searrow 0}
2\varepsilon
\int_0^{\infty}
\big\langle e^{-\varepsilon t} e^{\mp iHt}f,e^{-\varepsilon t} e^{\mp iH_0t}g\big\rangle_{\H}  \,\d t.
\end{equation}
Thus, by combining \eqref{waveop+u5} and \eqref{waveop+u11}, one gets
\begin{equation}\label{waveop+u13}
\big\langle f,W_\pm g\big\rangle_{\H}=
\lim_{\varepsilon \searrow 0}
\int_{-\infty}^{\infty}
\big\langle
\{1 - VR(\lambda \pm i\varepsilon)\}f, \delta_{\varepsilon}(H_0-\lambda)g
\big\rangle_{\H} \,\d \lambda.
\end{equation}

Now, it follows from the limiting absorption principle recalled before the statement of the lemma that for each $\lambda \in (0,\infty) \setminus \sigma_p(H)$ one has
\begin{equation}\label{eqlimit}
\lim_{\varepsilon \searrow 0}
\big\langle
\{1 - VR(\lambda \pm i\varepsilon)\}f, \delta_{\varepsilon}(H_0-\lambda)g
\big\rangle_{\H}
=
 \big \langle \{1 - V R(\lambda\pm i0) \}f ,
E_0^{\prime}(\lambda)g \big \rangle_{s, -s}\ .
\end{equation}
Thus, the statement of the lemma is obtained once the permutation of the integral and the limit in \eqref{waveop+u13} is justified.
For that purpose, recall that
\begin{equation}  \label{eqlimit+}
\frac{\varepsilon}{\pi} \big\langle
R(\lambda \pm i \varepsilon)f, R_0(\lambda \pm i \varepsilon)g\big\rangle_{\H}
=
\big\langle
\{1 - VR(\lambda \pm i\varepsilon)\}f, \delta_{\varepsilon}(H_0-\lambda)g
\big\rangle_{\H}.
\end{equation}
Then \eqref{eqlimit} and \eqref{eqlimit+}
enable us to
apply \cite[Lem.~5.2.2]{Yaf} which justifies the permutation and thus leads directly to the statement of the lemma. Note that the change of the two endpoints in the integral is also a consequence of that abstract result.
\end{proof}

In the next lemma we derive an explicit expression for the operator $E_0^\prime(\lambda)$ for any $\lambda \in (0,\infty)$. Before this,  let us simply recall that if $f \in \H_s$ with $s>3/2$ then $f$ belongs to $L^1(\R^3)$ and thus its Fourier transform $\hat f$ belongs to $C_0(\R^3)$.

\begin{lemma}\label{lem:trace}
For any $f,g\in \H_s$ with $s > 3/2$ and for any $\lambda>0$ one has
\begin{equation}\label{eqn:trace}
\big\langle f, E_0^{\prime}(\lambda) g \big\rangle_{s, -s}
= \lambda^2 \big\langle
\gamma(\lambda)  \hat f, \gamma(\lambda) \hat g\big\rangle_{L^2({\S}^2)}
\end{equation}
with $\gamma (\lambda)$ the trace operator onto the sphere $\big\{k\in \R^3 \mid |k|= \lambda \big \}$, {\it i.e.}~$
\big(\gamma (\lambda) \hat f\big) (\omega):= \hat f (\lambda \omega)$ for any $\omega \in \S^2$.
\end{lemma}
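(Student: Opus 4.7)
The plan is to work in momentum space where $H_0=|D|$ becomes multiplication by $|k|$, and to recognize the difference of resolvents as a Poisson-kernel approximation of a Dirac measure concentrated on the sphere $\big\{k\in\R^3\mid |k|=\lambda\big\}$.

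First, for $\varepsilon>0$ I would use Parseval's identity together with the functional calculus for $H_0$ to write
\begin{equation*}
\big\langle f,\tfrac{1}{2\pi i}\bigl(R_0(\lambda+i\varepsilon)-R_0(\lambda-i\varepsilon)\bigr)g\big\rangle_\H
=\frac{1}{\pi}\int_{\R^3}\hat f(k)\overline{\hat g(k)}\,\frac{\varepsilon}{(|k|-\lambda)^2+\varepsilon^2}\,\d k.
\end{equation*}
By the limiting absorption principle of \cite{BN} recalled before Lemma \ref{lem:wosr1}, the operators $R_0(\lambda\pm i\varepsilon)$ converge in $\B(\H_s,\H_{-s})$ for any $s>1/2$, so the left-hand side converges to $\langle f, E_0'(\lambda)g\rangle_{s,-s}$ as $\varepsilon\searrow 0$.

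Next, passing to polar coordinates $k=r\omega$, the right-hand side becomes
\begin{equation*}
\int_0^\infty \psi_{f,g}(r)\,P_\varepsilon(r-\lambda)\,\d r,\qquad
\psi_{f,g}(r):=r^2\int_{\S^2}\hat f(r\omega)\overline{\hat g(r\omega)}\,\d\omega,
\end{equation*}
with $P_\varepsilon(t):=\frac{1}{\pi}\frac{\varepsilon}{t^2+\varepsilon^2}$ the standard Poisson kernel. The hypothesis $s>3/2$ ensures by Cauchy-Schwarz that $f,g\in L^1(\R^3)$, hence $\hat f,\hat g\in C_0(\R^3)$ are continuous and bounded; it follows from dominated convergence in $\omega$ that $\psi_{f,g}$ is continuous on $[0,\infty)$, and Cauchy-Schwarz also gives $\|\psi_{f,g}\|_{L^1(\R_+)}\leq \|f\|_\H\|g\|_\H$. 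Extending $\psi_{f,g}$ by zero to the negative axis produces an $L^1(\R)$ function that is continuous at $\lambda>0$, so the classical Poisson-kernel convergence yields
\begin{equation*}
\lim_{\varepsilon\searrow 0}\int_0^\infty \psi_{f,g}(r)\,P_\varepsilon(r-\lambda)\,\d r
=\psi_{f,g}(\lambda)=\lambda^2\big\langle \gamma(\lambda)\hat f,\gamma(\lambda)\hat g\big\rangle_{L^2(\S^2)},
\end{equation*}
which combined with the first step gives \eqref{eqn:trace}.

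The main obstacle is justifying the Poisson-kernel limit cleanly: one needs both continuity of $\psi_{f,g}$ at the positive point $\lambda$ and an integrable dominating function. The weight assumption $s>3/2$ is exactly what furnishes the continuity of $\hat f$ and $\hat g$ needed to pass the $\omega$-integral through the limit, while the $L^2$ norms control $\|\psi_{f,g}\|_{L^1}$; the fact that $\lambda>0$ keeps us away from the possible discontinuity at the origin created by the extension by zero.
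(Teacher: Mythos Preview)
Your argument is correct and follows a genuinely different path from the paper's. The paper works entirely in position space: it invokes the explicit integral kernel of $R_0(\lambda\pm i0)$ computed in \cite{U}, observes that the difference has kernel $\frac{\lambda}{2\pi^2}\frac{\sin(\lambda|x-y|)}{|x-y|}$, and then uses the identity $\int_{\S^2}e^{-i\lambda\omega\cdot x}\,\d\omega=\frac{4\pi\sin(\lambda|x|)}{\lambda|x|}$ together with Fubini (justified because $f,g\in L^1$ when $s>3/2$) to unfold the kernel into the spherical pairing. You instead stay on the Fourier side, where $H_0$ is multiplication by $|k|$, and recognise $\delta_\varepsilon(H_0-\lambda)$ as a Poisson approximate identity in the radial variable; the same hypothesis $s>3/2$ now gives continuity of $\hat f,\hat g$, hence of $\psi_{f,g}$, which is what the approximate-identity argument requires. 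Your route avoids importing the kernel formulas from \cite{U} and is more self-contained and dimension-agnostic; the paper's route, on the other hand, is fully computed at the level of the boundary values $R_0(\lambda\pm i0)$ themselves and dovetails with the kernel machinery already developed for other purposes in the article.
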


\begin{proof}
For this proof, we use the integral kernels $g_{\lambda}^{\pm}(x-y) $ of the extended resolvents $R_0(\lambda\pm i0)$ obtained in \cite[Sect.~4]{U} and already recalled in \eqref{defg}.
Thus, for $f,g$ as in the statement and $x \in \R^3$ one has
\begin{eqnarray}\label{eqn:3q1}
\nonumber \big[\big(R_0(\lambda+i0) - R_0(\lambda-i0) \big)g\big](x)
&=&\frac{\lambda}{2\pi}\int_{\R^3}
\frac{e^{i \lambda |x-y|} -e^{-i \lambda |x-y|}}{|x-y|}
g(y)  \,\d y   \\
&=& \frac{\lambda i}{\pi} \int_{\R^3} \frac{\sin (\lambda |x-y|)}{|x-y|} g(y) \,\d y\ .
\end{eqnarray}
It then follows from the definition of $E_0^\prime(\lambda)$ and from \eqref{eqn:3q1} that
\begin{eqnarray}\label{train1}
\nonumber \big\langle  f, E_0^{\prime}(\lambda) g \big\rangle_{s, -s}
&=& \frac{1}{2\pi i}
\big \langle f, \big( R_0(\lambda+i0) - R_0(\lambda-i0) \big) g \big\rangle_{s, -s}   \\
&=& \frac{\lambda}{2\pi^2}
\iint_{\R^6}
f(x) \frac{\sin (\lambda|x-y|)}{|x-y|}
\overline{g(y)}  \,\d x\, \d y .
\end{eqnarray}
By appealing to the formula
\begin{equation*}
\int_{\S^2} e^{-i \lambda \omega \cdot x}  \, \d \omega
= \frac{4\pi \sin (\lambda |x|)}{\lambda |x|}
\end{equation*}
and by the change of order of integration (valid because of our assumptions on $f$ and $g$) we get
\begin{eqnarray*}
\iint_{\R^6} f(x) \frac{\sin (\lambda|x-y|)}{|x-y|}
\overline{g(y)} \, \d x\, \d y
&=& \frac{\lambda}{4\pi} \int_{\S^2}\big\{
\int_{\R^3}e^{-i \lambda \omega \cdot x}  f(x) \, \d x \big\}
\overline {\big\{
\int_{\R^3}e^{-i \lambda \omega \cdot y}  g(y) \, \d y
\big\}}  \, \d \omega   \\
&=&
2\pi^2 \lambda \int_{\S^2} \hat f (\lambda \omega)
\,\overline{\hat g (\lambda \omega)} \, \d\omega  \\
&=& 2\pi^2 \lambda
\big\langle\gamma(\lambda)  \hat f, \gamma(\lambda) \hat g)\big\rangle_{L^2(\S^2)}.
\end{eqnarray*}
By combining these equalities with \eqref{train1} one directly obtains \eqref{eqn:trace}.
\end{proof}

Let us now define the generalized Fourier transforms by the relations
\begin{equation}  \label{eqn:gft1}
\FF_{\pm}:= \FF W_{\pm}^*\ .
\end{equation}
In the next lemma, we derive standard and more explicit formulas for $\FF_\pm$.

\begin{lemma}  \label{lem:gef1}
Assume that $\sigma>2$ and let $s \in (3/2,\sigma-1/2)$. Let  $f,g \in \H_s$ with $E_0([a,b])g=g$ for some $[a,b] \subset (0,\infty) \setminus \sigma_p(H)$. Then
\begin{equation}  \label{eqn:gft2}
\big\langle \FF_{\pm} f, \, \FF g \big\rangle_\H
=
\int_{\R^3}
\big[\FF \{ 1 - V R(|k|\pm i0) \} f \big](k)
\, \overline{\FF[g](k)} \, \d k.
\end{equation}
\end{lemma}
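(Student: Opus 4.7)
The plan is to chain together the three previous lemmas of this section in a direct computation. First, the unitarity of $\FF$ and the very definition $\FF_\pm := \FF W_\pm^*$ immediately give
$$
\langle \FF_\pm f, \FF g\rangle_\H = \langle \FF W_\pm^* f, \FF g\rangle_\H = \langle W_\pm^* f, g\rangle_\H = \langle f, W_\pm g\rangle_\H,
$$
so it suffices to compute the right-hand side. Since the hypotheses on $f$, $g$, $a$, $b$ are precisely those of Lemma \ref{lem:wosr1} (our $s>3/2$ in particular satisfies $s>1/2$), that lemma yields
$$
\langle f, W_\pm g\rangle_\H = \int_a^b \big\langle \{1 - VR(\lambda\pm i0)\}f,\, E_0'(\lambda)g\big\rangle_{s,-s}\, \d\lambda.
$$

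The next step is to apply Lemma \ref{lem:trace} to the integrand for each fixed $\lambda\in[a,b]$. The hypothesis $s>3/2$ yields $f,g\in L^1(\R^3)$, so that $\hat f, \hat g \in C_0(\R^3)$ and their restrictions to any sphere are well defined. For the piece $VR(\lambda\pm i0)f$, the limiting absorption principle gives $R(\lambda\pm i0)f\in\H_{-s}$, and multiplication by $V$ (controlled by $\langle x\rangle^{-\sigma}$) then places it in $\H_{\sigma-s}$; combined with the condition $s<\sigma-1/2$ from the statement, this is precisely what is needed to bring $\{1-VR(\lambda\pm i0)\}f$ into the regularity class handled by Lemma \ref{lem:trace}, so that
$$
\big\langle \{1 - VR(\lambda\pm i0)\}f,\, E_0'(\lambda)g\big\rangle_{s,-s} = \lambda^2 \int_{\S^2} \big[\FF\{1-VR(\lambda\pm i0)\}f\big](\lambda\omega)\, \overline{\hat g(\lambda\omega)}\, \d\omega.
$$

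Finally, I would perform the polar change of variables $k = \lambda\omega$, $\d k = \lambda^2 \d\lambda\, \d\omega$, which converts the iterated integral over $[a,b]\times\S^2$ into an integral over the spherical shell $\{k\in\R^3 : a\le|k|\le b\}$. The spectral constraint $E_0([a,b])g = g$ forces $\hat g$ to be supported in this shell, so the domain of integration can be extended to the whole $\R^3$ without altering the value, and one recovers exactly \eqref{eqn:gft2}.

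The main obstacle is the technical verification in the middle step: one must ensure that $h_\lambda := \{1-VR(\lambda\pm i0)\}f$ enjoys enough integrability, uniformly in $\lambda\in[a,b]$, for Lemma \ref{lem:trace} to apply and for Fubini to justify the polar change of variables; this is where tracking the weighted-space mapping properties under the bounds $\sigma>2$ and $s\in(3/2,\sigma-1/2)$ becomes delicate, whereas the remaining manipulations are routine.
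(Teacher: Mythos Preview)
Your approach is exactly the one the paper takes: rewrite $\langle \FF_\pm f,\FF g\rangle_\H$ as $\langle f,W_\pm g\rangle_\H$ via the definition of $\FF_\pm$, invoke Lemma~\ref{lem:wosr1} for the integral over $[a,b]$, feed the integrand into Lemma~\ref{lem:trace}, and finish with the polar change of variables $k=\lambda\omega$ (extending to all of $\R^3$ using the support of $\hat g$). The paper's proof is in fact terser than yours and glosses over the regularity verification entirely.

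One sharpening of the point you correctly identify as delicate: from $R(\lambda\pm i0)f\in\H_{-s}$ you conclude $VR(\lambda\pm i0)f\in\H_{\sigma-s}$, and then say the bound $s<\sigma-1/2$ is ``precisely what is needed''. But Lemma~\ref{lem:trace} requires weight strictly above $3/2$, whereas $\sigma-s$ is only guaranteed to exceed $1/2$. The fix is to exploit the freedom in the limiting absorption principle: since $f\in\H_s\subset\H_{s'}$ for any $s'\in(1/2,s]$, one may take $s'$ just above $1/2$ to obtain $R(\lambda\pm i0)f\in\H_{-s'}$ and hence $VR(\lambda\pm i0)f\in\H_{\sigma-s'}$ with $\sigma-s'$ just below $\sigma-1/2>3/2$. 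With that adjustment the argument is complete.
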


\begin{proof}
It follows successively from \eqref{eqn:gft1}, \eqref{waveop+u3}  and \eqref{eqn:trace} that
\begin{eqnarray*}\label{eqn:gft2-2}
\big\langle \FF_{\pm} f, \FF g \big\rangle_\H
&=&
\int_a^b \big\langle  \{ 1 - V R(\lambda \pm i0) \}f ,
E_0^{\prime}(\lambda)g \big \rangle_{s, -s} \, \d\lambda  \\
&=& \int_a^b  \lambda^2
\big\{ \int_{\S^2}
\big[\FF\{1 - V R(\lambda\pm i0)\}f\big](\lambda \omega)
\,\overline{\FF[g](\lambda \omega)} \, \d \omega
\big\} \, \d\lambda .
\end{eqnarray*}
By the change of the variables $k:= \lambda \omega$, one obtains the result.
\end{proof}

For fixed $k \in \R^3$ let us now define $\varphi_0(\cdot,k)$ by $\varphi_0(x,k) := e^{i k \cdot x}$. Clearly, $\varphi_0(\cdot,k)\in \H_{-s}$ for any $s>3/2$.
Since the subset of all elements $g$ satisfying the condition of the previous lemma is dense in $\H$, it follows from \eqref{eqn:gft2} that for $f \in \H_s$ and almost every $k$ one has
\begin{eqnarray*}
[\FF_{\pm} f](k)
&=& \big[ \FF\big( 1 - V R(|k|\pm i0) \big) f \big](k) \\
&=& (2\pi)^{-3/2} \big\langle
\{ 1 - V R(|k|\pm i0) \} f,  \varphi_0(\cdot, k)
\big\rangle_{s, -s}  \\
&=& (2\pi)^{-3/2} \big \langle   f,  \{ 1 - R(|k|\mp i0)V \} \varphi_0(\cdot, k)
\big\rangle_{s, -s} \\
&=& (2\pi)^{-3/2} \int_{\R^3} \overline{\varphi^{\pm}(x, k)} f(x)  \, \d x,
\end{eqnarray*}
where we have used the definition of the generalized eigenfunctions $\varphi^\pm(\cdot,k)$ introduced in \cite[Eq.~(8.5)]{U}:
$$
\varphi^\pm(\cdot,k):=  \{ 1 - R(|k|\mp i0)V \} \varphi_0(\cdot, k)\ .
$$

Now, it follows from \eqref{eqn:gft1} that
$\FF_{\pm}^* = W_{\pm} \FF^*$, or equivalently $W_{\pm} = \FF_\pm^* \FF$.
Thus for $f$, $g$ as in the previous lemma one infers that
\begin{eqnarray*}
\big\langle f, W_\pm g\big \rangle_\H
&=&
\big\langle \FF_{\pm} f, \FF g\big\rangle_\H  \\
&=&
\int_{\R^3} \big\{(2\pi)^{-3/2} \int_{\R^3}  \overline{\varphi^{\pm}(x,k)} f(x)  \, \d x\big\}\, \overline{\hat g(k)} \, \d k  \\
&=&
\int_{\R^3} \big\{\overline{ (2\pi)^{-3/2} \int_{\R^3} {\varphi^{\pm}(x, k)}  \hat g(k) \, \d k}\big\}\,
f(x) \, \d x   .
\end{eqnarray*}
Note that for the interchange of the integrals, one has used that $\varphi^\pm$ satisfy the following bound \cite[Thm.~9.1]{U}: for any compact set $K\subset (0,\infty)\setminus \sigma_p(H)$ there exists $c= c(K)$ such that
$$
\sup_{x\in {\R}^3}\sup_{|k|\in K}|\varphi^\pm(x,k)|<c\ .
$$
By collecting these various results one has thus proved :

\begin{proposition}\label{proposition_stationary}
Assume that $\sigma>2$ and let $s \in (3/2,\sigma-1/2)$. Let  $f,g \in \H_s$ with $E_0([a,b])g=g$ for some $[a,b] \subset (0,\infty) \setminus \sigma_p(H)$. Then
$$
\big\langle f, W_\pm g\big \rangle_\H =
\big \langle f, (2\pi)^{-3/2} \int_{\R^3} \varphi^{\pm}(\cdot, k)  \hat g(k) \, \d k\big \rangle_{s,-s} \ .
$$
\end{proposition}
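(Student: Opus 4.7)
The plan is to chain together the three preceding lemmas and rewrite the result using the definition of the generalized eigenfunctions $\varphi^\pm(\cdot,k)$ from \cite[Eq.~(8.5)]{U}.

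First, I would apply Lemma \ref{lem:wosr1} to express $\langle f, W_\pm g\rangle_\H$ as an integral over $[a,b]$ of $\langle \{1 - VR(\lambda\pm i 0)\}f, E_0'(\lambda)g\rangle_{s,-s}$. Lemma \ref{lem:trace} then replaces $E_0'(\lambda)$ by the operator $\lambda^2 \gamma(\lambda)^*\gamma(\lambda)$ acting in the Fourier picture, and after performing the change of variable $k=\lambda\omega$ (turning the integral over $[a,b]\times\S^2$ into an integral over $\{a\leq|k|\leq b\}\subset\R^3$) one obtains the identity of Lemma \ref{lem:gef1}:
$$
\big\langle \FF_\pm f, \FF g\big\rangle_\H = \int_{\R^3}\big[\FF\{1-VR(|k|\pm i 0)\}f\big](k)\;\!\overline{\FF[g](k)}\;\!\d k.
$$

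Next, I would rewrite the Fourier transform in the integrand as a duality pairing with $\varphi_0(\cdot,k) := e^{ik\cdot x}$, namely
$$
\big[\FF\{1-VR(|k|\pm i 0)\}f\big](k) = (2\pi)^{-3/2}\big\langle \{1-VR(|k|\pm i 0)\}f,\varphi_0(\cdot,k)\big\rangle_{s,-s}.
$$
Moving the bracketed operator to the other side of the pairing (valid thanks to the limiting absorption principle of \cite{BN} for the admissible range of $s$) gives $(2\pi)^{-3/2}\langle f, \{1-R(|k|\mp i 0)V\}\varphi_0(\cdot,k)\rangle_{s,-s}$, which by definition of the generalized eigenfunctions equals $(2\pi)^{-3/2}\langle f,\varphi^\pm(\cdot,k)\rangle_{s,-s}$. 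Combining with the identity $W_\pm = \FF_\pm^*\FF$ from \eqref{eqn:gft1}, one arrives at
$$
\big\langle f, W_\pm g\big\rangle_\H = (2\pi)^{-3/2}\int_{\R^3}\big\langle f,\varphi^\pm(\cdot,k)\big\rangle_{s,-s}\;\!\overline{\hat g(k)}\;\!\d k.
$$

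The main obstacle is the last manoeuvre, namely moving the $k$-integration inside the duality bracket so as to recognise the right-hand side as $\langle f, (2\pi)^{-3/2}\int_{\R^3}\varphi^\pm(\cdot,k)\hat g(k)\;\!\d k\rangle_{s,-s}$. For this I would invoke the uniform bound $\sup_{x\in\R^3}\sup_{|k|\in K}|\varphi^\pm(x,k)|<c(K)$ of \cite[Thm.~9.1]{U}, which holds on any compact $K\subset(0,\infty)\setminus\sigma_p(H)$. Since $E_0([a,b])g=g$ implies that $\hat g$ is supported in the compact annulus $\{a\leq|k|\leq b\}$ avoiding $\sigma_p(H)$, the generalised eigenfunctions are uniformly bounded on the relevant range of $k$; combined with $\hat g\in L^1(\R^3)$ (which follows from $g\in\H_s$, $s>3/2$), this shows that the vector-valued integral $\int_{\R^3}\varphi^\pm(\cdot,k)\hat g(k)\;\!\d k$ defines an element of $\H_{-s}$, and Fubini's theorem legitimises the interchange, yielding the stated identity.
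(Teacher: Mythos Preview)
Your proposal is correct and follows essentially the same route as the paper: chain Lemmas \ref{lem:wosr1}--\ref{lem:gef1}, rewrite the Fourier transform as the pairing with $\varphi_0(\cdot,k)$, transpose $1-VR(|k|\pm i0)$ to obtain $\varphi^\pm(\cdot,k)$, and use the uniform bound of \cite[Thm.~9.1]{U} together with the compact $k$-support of $\hat g$ to justify the Fubini interchange. One cosmetic remark: the fact that $\hat g\in L^1(\R^3)$ is really a consequence of its compact support (already noted) together with $\hat g\in L^2$, not of $g\in\H_s$ with $s>3/2$; the latter gives $g\in L^1$, hence $\hat g\in L^\infty$, which is what the paper uses on the $f$-side.
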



\section{Asymptotic limit for the wave operators}\label{secwave}
\setcounter{equation}{0}

In this section we study the behavior of the wave operators under the dilation group. A related study for Schr\"odinger operators in $\R^3$ is contained in \cite[Sec.~I]{AGHH}. As explained in the Introduction and as it will appear in the sequel, this study is related to the $0$-energy properties of $H$.

So, let us recall the action of the dilation group $\{U_\tau\}_{\tau \in \R}$ on any $f \in \H$, namely $[U_\tau f](x) = e^{3\tau/2}f(e^\tau x)$ for any $x \in \R^3$. Then, the following equalities hold for any fixed $\tau \in \R$ and any $f \in \D(H_0)$:
\begin{equation*}
U_{-\tau}H_0U_\tau f = e^\tau H_0 f \qquad {\rm and}
\qquad U_{-\tau}VU_\tau = V_\tau
\end{equation*}
with $V_\tau(x) = V(e^{-\tau}x)$ for all $x \in \R^3$. As a consequence of these relations one infers from the time-dependent expression for the wave operators the important relations
\begin{equation}\label{transfer}
U_{-\tau}\ W_\pm(H_0+V,H_0)\ U_\tau = W_\pm(H_0+e^{-\tau}V_\tau,H_0).
\end{equation}
For clarity, the dependence of $W_\pm$ on both self-adjoint operators used to define them is mentioned. Our aim in this section is to study the limits of the corresponding stationary expressions as $\tau \to - \infty$.

For that purpose, observe that for $z \in \C\setminus \R$ one has
$U_{-\tau}R_0(z)U_\tau = e^{-\tau}R_0(e^{-\tau}z)$. Furthermore, by setting
\begin{equation*}
v_\tau := |V(e^{-\tau}\cdot)|^{1/2} \qquad {\rm and} \qquad
u_\tau := |V(e^{-\tau}\cdot)|^{1/2} \;{\rm sgn}\big(V(e^{-\tau}\cdot)\big),
\end{equation*}
and by considering these functions as operators of multiplication ({\it i.e.}~$v_\tau \equiv v_\tau(X)$ and similarly for $u_\tau$) one also obtains
\begin{align*}
\big(1 + e^{-\tau} V_\tau  R_0(z) \big)^{-1} V_{\tau}
&=
v_\tau \big(1+e^{-\tau} u_\tau R_0(z) v_\tau\big)^{-1} u_\tau  \\
&=
 U_{-\tau}\;\! v_0 \big(1+ u_0 R_0(e^\tau z) v_0\big)^{-1}u_0 \;\!U_\tau.
\end{align*}
Thus, by the resolvent equation it follows that
\begin{align}\label{resoleqsym}
\nonumber &\big(H_0+e^{-\tau}V_\tau -z\big)^{-1} \\
\nonumber &  = R_0(z) -e^{-\tau}R_0(z)
\big( 1+e^{-\tau} V_\tau R_0(z) \big)^{-1} V_\tau R_0(z) \\
&= R_0(z)
- e^{-\tau}R_0(z) U_{-\tau}\;\! v_0 \big(
1+ u_0 R_0(e^\tau z) v_0\big)^{-1}u_0
\;\!U_\tau R_0(z).
\end{align}
From now on, for simplicity and in accordance with the notations introduced in Section \ref{sec0behavior}, we shall simply write $v$ for $v_0$ and $u$ for $u_0$.

Now, let us come back to the setting of Lemma \ref{lem:wosr1} but for the perturbation $e^{-\tau} V_\tau$ instead of $V$. We state in the next lemma alternative stationary expressions for the wave operators. In the statement, the parameter $\tau \in \R$ is fixed.

\begin{lemma}\label{lemalternative}
Let $\sigma>2$
 in Condition \eqref{condition1}, $s>1/2$ and  assume that $f,g$ belong to $\H_s$ with $E_0([a,b])g=g$ for some $[a,b] \subset (0,\infty)$ with $[a,b]\cap \sigma_p(H_0 + e^{-\tau} V_\tau) = \emptyset$.
Then one has
\begin{align}\label{eachfactor}
\nonumber &\big\langle f, \big(W_\pm(H_0+e^{-\tau} V_\tau,H_0)-1\big) g\big \rangle _\H \\
&= - e^{-\tau}\int_a^b \big\langle
B(e^{\tau}\lambda\pm i0)u
\;\!U_\tau R_0(\lambda\pm i 0)
f, v U_\tau E_0'(\lambda)g
\big\rangle_{\H} \,\d \lambda,
\end{align}
where $B(z) := \big(
1+ u R_0(z) v\big)^{-1}$.
\end{lemma}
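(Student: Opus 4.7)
}

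The starting point is Lemma \ref{lem:wosr1} applied with the perturbation $V$ replaced by $e^{-\tau}V_\tau$, which also satisfies Condition \eqref{condition1} with the same exponent $\sigma$. Writing $R_{e^{-\tau}V_\tau}(z) := (H_0+e^{-\tau}V_\tau-z)^{-1}$, this yields a stationary formula for $\langle f,W_\pm(H_0+e^{-\tau}V_\tau,H_0)g\rangle_\H$ over $[a,b]$. Subtracting the identity $\langle f,g\rangle_\H=\int_a^b\langle f,E_0'(\lambda)g\rangle_{s,-s}\,\d\lambda$, which is valid because $E_0([a,b])g=g$, one gets
\begin{equation*}
\big\langle f,(W_\pm-1)g\big\rangle_\H=-\int_a^b \big\langle e^{-\tau}V_\tau R_{e^{-\tau}V_\tau}(\lambda\pm i0)f,\;\!E_0'(\lambda)g\big\rangle_{s,-s}\,\d\lambda.
\end{equation*}
The bulk of the work is then to massage the operator $e^{-\tau}V_\tau R_{e^{-\tau}V_\tau}(\lambda\pm i0)$ into a form involving $B(e^\tau\lambda\pm i0)$.

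The key factorization is
\begin{equation*}
e^{-\tau}V_\tau R_{e^{-\tau}V_\tau}(z)=e^{-\tau}U_{-\tau}\;\!v_0\;\!B(e^\tau z)\;\!u_0\;\!U_\tau R_0(z).
\end{equation*}
To derive it, I would start from $R_{e^{-\tau}V_\tau}(z)=(1+R_0(z)e^{-\tau}V_\tau)^{-1}R_0(z)$ (second resolvent equation), then use the standard algebraic identity $V_\tau(1+e^{-\tau}R_0(z)V_\tau)^{-1}=(1+e^{-\tau}V_\tau R_0(z))^{-1}V_\tau$ to bring the inverse to the left of $V_\tau$, and finally plug in the symmetrized expression $(1+e^{-\tau}V_\tau R_0(z))^{-1}V_\tau=U_{-\tau}v_0 B(e^\tau z)u_0 U_\tau$ that was already worked out just before \eqref{resoleqsym}. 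The boundary values $z=\lambda\pm i0$ are taken via the limiting absorption principle for $H_0$; the invertibility of $1+u_0 R_0(e^\tau\lambda\pm i0)v_0$, i.e.~the existence of $B(e^\tau\lambda\pm i0)$, is guaranteed by the hypothesis $[a,b]\cap\sigma_p(H_0+e^{-\tau}V_\tau)=\emptyset$ together with the relation $\sigma_p(H_0+e^{-\tau}V_\tau)=e^{-\tau}\sigma_p(H)$ (a direct consequence of the conjugation identity $e^\tau(H_0+e^{-\tau}V_\tau)=U_{-\tau}HU_\tau$) and the usual Birman--Schwinger correspondence.

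The last step is to substitute this factorization into the integral and to move the unitary $U_{-\tau}$ and the real multiplication operator $v_0$ across the $\langle\cdot,\cdot\rangle_{s,-s}$ pairing. Setting $A:=B(e^\tau\lambda\pm i0)u_0 U_\tau R_0(\lambda\pm i0)f$, one rewrites
\begin{equation*}
\big\langle U_{-\tau}v_0 A,\;\!E_0'(\lambda)g\big\rangle_{s,-s}=\big\langle A,\;\!v_0 U_\tau E_0'(\lambda)g\big\rangle_\H,
\end{equation*}
which directly gives \eqref{eachfactor}. The bookkeeping that has to be checked along the way amounts to verifying that the parameter $s$ can be taken in $(1/2,\sigma/2]$ (which is compatible with the assumption $\sigma>2$), so that the decay $|u_0|,|v_0|\lesssim\langle x\rangle^{-\sigma/2}$ ensures $u_0,v_0:\H_{-s}\to\H$, and hence that $A$ and $v_0 U_\tau E_0'(\lambda)g$ are genuine elements of $\H$ on which the displayed pairing manipulation is legitimate. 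This bookkeeping, together with identifying $\sigma_p(H_0+e^{-\tau}V_\tau)$ with $e^{-\tau}\sigma_p(H)$ to justify the existence of $B(e^\tau\lambda\pm i0)$ on $[a,b]$, is the only real technical point; the rest is a direct computation built on Lemma \ref{lem:wosr1} and the resolvent identities collected in the excerpt.
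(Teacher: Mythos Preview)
Your overall strategy is sound and, if anything, tidier than the paper's: you invoke Lemma~\ref{lem:wosr1} (with $V$ replaced by $e^{-\tau}V_\tau$) as a black box and then factorize the boundary-value expression, whereas the paper goes back to the Abelian formula \eqref{eqini}--\eqref{waveop+u11}, inserts the resolvent identity \eqref{resoleqsym} with $\varepsilon>0$ still present, and only then takes $\varepsilon\searrow 0$ term by term after interchanging limit and integral via \cite[Lem.~5.2.2]{Yaf}. Both routes arrive at the same destination, and your bookkeeping for the choice of $s\in(1/2,\sigma/2]$ is correct.

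There is, however, a real gap. You write that the existence of $B(e^\tau\lambda\pm i0)\in\B(\H)$ ``is guaranteed by \ldots\ the usual Birman--Schwinger correspondence''. That correspondence is automatic for $z\in\C\setminus\R$, but on the continuous spectrum it is not: one has to show that if $f\in\H$ satisfies $u_0 R_0(e^{\tau}\lambda\pm i0)v_0 f=-f$ then $e^\tau\lambda\in\sigma_p(H)$, and this implication requires input specific to the operator at hand. In the paper this is the substance of the proof: with $\tilde f:=v_0 f\in\H_s$ and $h^\pm:=R_0(e^\tau\lambda\pm i0)\tilde f$, one invokes \cite[Thm.~6.5(ii)]{U} to see that $h^\pm$ solves $(H-e^\tau\lambda)h^\pm=0$ in the distributional sense together with the outgoing/incoming radiation condition, and then \cite[Thm.~7.2(i)]{U} forces $h^\pm=0$, whence $f=0$. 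Without this radiation-condition argument (or an equivalent Agmon-type bootstrap), the operator $B(e^\tau\lambda\pm i0)$ appearing in \eqref{eachfactor} has no a~priori meaning, and your factorization at the boundary is not justified. This is precisely the step you should spell out rather than label as routine.
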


\begin{proof}
Let $\varepsilon>0$ and $f,g$ as in the statement. It follows from \eqref{eqini}, \eqref{waveop+u11} and \eqref{resoleqsym} that
\begin{align*}
&\big\langle f, \big(W_\pm(H_0+e^{-\tau} V_\tau,H_0)-1\big) g\big \rangle _\H \\
&= - e^{-\tau} \lim_{\varepsilon \searrow 0}
\int_{-\infty}^{\infty} \big\langle
U_{-\tau}\;\! v B \big(e^{\tau}(\lambda\pm i\varepsilon)\big)u
\;\!U_\tau R_0(\lambda\pm i \varepsilon)
f, \delta_{\varepsilon}(H_0-\lambda)g
\big\rangle_{\H} \,\d \lambda \\
&= - e^{-\tau}\int_a^b \lim_{\varepsilon \searrow 0} \big\langle
B \big(e^{\tau}(\lambda\pm i\varepsilon)\big)u
\;\!U_\tau R_0(\lambda\pm i \varepsilon)
f, v U_\tau \delta_{\varepsilon}(H_0-\lambda)g
\big\rangle_{\H} \,\d \lambda.
\end{align*}
Note that the permutation of the integral and of the limit as well as the change in the endpoints of the integral are a consequence of \cite[Lem.~5.2.2]{Yaf}, as already mentioned in the proof of Lemma \ref{lem:wosr1}. Furthermore, since $U_\tau$ leaves $\H_{-s}$ invariant, it follows from the limiting absorption principle that both limits
$s-\lim_{\varepsilon\searrow 0} u U_\tau R_0(\lambda\pm i \varepsilon)f$
and
$s-\lim_{\varepsilon\searrow 0} v U_\tau \delta_{\varepsilon}(H_0-\lambda)g$ exist
and belong to $\H$.

Let us finally show that the limits $\lim_{\varepsilon \searrow 0}B \big(e^{\tau}(\lambda\pm i\varepsilon)\big)$ exist in norm for any $\lambda \in [a,b]$. For that purpose, it is sufficient to prove that
$-1 \not \in \sigma(u R_0(e^{\tau}\lambda \pm i0)v)$
if $\lambda \not\in \sigma_p(H_0 + e^{-\tau} V_\tau)$.
However, it follows from \cite[Lem.~4.7.8]{Yaf} and from our assumption on $V$ that $-1 \not \in \sigma(u R_0(e^{\tau}\lambda \pm i0)v)$ is equivalent to $e^\tau\lambda \not \in \sigma_p(H_0+V)$.
Then, from the equality
\begin{equation}  \label{egalspect}
\sigma_p(H_0 + e^{-\tau} V_\tau) = e^{-\tau}\sigma_p(H_0+V),
\end{equation}
one also infers that $e^\tau\lambda \not \in \sigma_p(H_0+V)\Longleftrightarrow \lambda\not\in \sigma_p(H_0 + e^{-\tau} V_\tau)$, which corresponds precisely to our assumption.
\end{proof}

Thus, we are now left in understanding separately the limits as $\tau \to -\infty$ of each factor in \eqref{eachfactor}. The study of two terms relies on the following two lemmas. These results are certainly well known, but we could not find an explicit reference for them.

\begin{lemma}\label{smallemma1}
Let $u,g \in L^2(\R^3)$ and $\varepsilon\geq 0$. Let $T_\varepsilon$ be the operator defined by the kernel $t_\varepsilon(x,y):= u(x)\;\!g(\varepsilon x-y)$ for almost every $x,y \in \R^3$. Then $T_\varepsilon$ is a Hilbert-Schmidt operator and converges to $T_0$ in the Hilbert-Schmidt norm as $\varepsilon\searrow 0$.
\end{lemma}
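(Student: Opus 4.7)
The plan is to compute the Hilbert-Schmidt norm directly from the kernel, reducing everything to standard properties of $L^2(\R^3)$ functions. Recall that an integral operator with kernel $t \in L^2(\R^6)$ is automatically Hilbert-Schmidt with $\|T\|_{HS}^2 = \|t\|_{L^2(\R^6)}^2$, so both assertions will follow from corresponding $L^2(\R^6)$ statements about the kernels $t_\varepsilon$.

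First I would verify membership in the Hilbert-Schmidt class by applying Fubini together with the change of variable $z := \varepsilon x - y$ (which has unit Jacobian for every fixed $x$, regardless of whether $\varepsilon = 0$ or not). This gives
\begin{equation*}
\|t_\varepsilon\|_{L^2(\R^6)}^2 = \int_{\R^3} |u(x)|^2 \Big(\int_{\R^3} |g(\varepsilon x - y)|^2 \,\d y \Big) \d x = \|u\|_{L^2(\R^3)}^2 \;\! \|g\|_{L^2(\R^3)}^2,
\end{equation*}
which is in particular finite and independent of $\varepsilon$.

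Next, for the Hilbert-Schmidt convergence, the same Fubini/change-of-variable argument (this time the substitution $z = -y$) yields
\begin{equation*}
\|t_\varepsilon - t_0\|_{L^2(\R^6)}^2 = \int_{\R^3} |u(x)|^2 \;\!\big\| g(\cdot + \varepsilon x) - g(\cdot) \big\|_{L^2(\R^3)}^2 \,\d x.
\end{equation*}
For each fixed $x$ the integrand tends to $0$ as $\varepsilon \searrow 0$ by the strong continuity of the translation group on $L^2(\R^3)$, and it is dominated by the integrable function $4\;\! |u(x)|^2 \;\! \|g\|_{L^2(\R^3)}^2$. An application of the Lebesgue dominated convergence theorem then concludes the proof.

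There is no real obstacle here; the only point requiring a small amount of care is ensuring that the dominating function is integrable (which is immediate from $u \in L^2$) and invoking the strong continuity of translations in $L^2(\R^3)$ at the right place.
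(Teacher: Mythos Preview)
Your proof is correct and follows essentially the same approach as the paper: both compute the Hilbert--Schmidt norm directly from the kernel and reduce the convergence to the strong continuity of translations on $L^2(\R^3)$. The only cosmetic difference is that you close the convergence step via the dominated convergence theorem, whereas the paper carries out the equivalent argument by hand, splitting the $x$-integral over a compact set $K$ (where a uniform bound on $\|g(\cdot-\varepsilon x)-g(\cdot)\|_{L^2}$ applies) and its complement (where the tail $\|u\|_{L^2(K^\bot)}$ is small).
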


\begin{proof}
Since
\begin{equation*}
\|T_\varepsilon\|_{HS}^2 = \int_{\R^3}\int_{\R^3}|u(x)|^2 \;\!|g(\varepsilon x-y)|^2 \d x \;\!\d y=
\|u\|_{L^2(\R^3)}^2\;\!\|g\|_{L^2(\R^3)}^2,
\end{equation*}
it clearly follows that $T_\varepsilon$ is a Hilbert-Schmidt operator for any $\varepsilon\geq 0$.

For the convergence, let $K \subset \R^3$ be a compact set and let $K^\bot$ denote its complement. Then one has
\begin{align}
\nonumber \|T_\varepsilon-T_0\|_{HS}^2
&=\int_{\R^3}\int_{\R^3}|u(x)|^2\;\! |g(\varepsilon x-y)-g(-y)|^2 \d x \;\!\d y \\
\label{eqlem1} & \leq \|u\|_{L^2(K)}^2 \;\!\sup_{x \in K} \|g(\cdot -\varepsilon x)-g(\cdot)\|_{L^2(\R^3)}^2
+ 4 \|u\|_{L^2(K^\bot)}^2\;\! \|g\|_{L^2(\R^3)}^2.
\end{align}
By choosing a suitable set $K$ and then by taking the continuity of translations in $L^2(\R^3)$ into account, both terms in \eqref{eqlem1} can be made arbitrarily small for $\varepsilon$ small enough. This proves the statement.
\end{proof}

\begin{lemma}\label{smallemma2}
Let $u \in L^2(\R^3)$, $g \in L^1(\R^3)$ and $\varepsilon >0$. Let $T_\varepsilon$ be the operator defined by the kernel $t_\varepsilon(x,y):= u(x)\;\!g(\varepsilon x-y)$ for almost every $x,y \in \R^3$. Then $T_\varepsilon$ maps $L^\infty(\R^3)$ into $L^2(\R^3)$. Furthermore, for each $f \in L^\infty(\R^3)$, $T_\varepsilon f$ strongly converges to $u(\cdot)\int_{\R^3}g(-y)f(y)\d y$ in
 $L^2(\R^3)$ as $\varepsilon\searrow 0$.
\end{lemma}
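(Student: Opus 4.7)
The plan is to first verify the boundedness assertion by a direct pointwise estimate on the inner integral, and then to deduce the convergence by reducing it, via the Lebesgue dominated convergence theorem, to the classical continuity of translations in $L^1(\R^3)$. For the boundedness I would simply observe that for every $f \in L^\infty(\R^3)$ and every $x \in \R^3$ one has
$$\Big|\int_{\R^3} g(\varepsilon x - y)\;\!f(y)\;\!\d y\Big| \leq \|g\|_{L^1(\R^3)}\;\!\|f\|_\infty,$$
so that $|[T_\varepsilon f](x)| \leq \|g\|_{L^1(\R^3)}\;\!\|f\|_\infty\;\!|u(x)|$; since $u \in L^2(\R^3)$, this already shows that $T_\varepsilon$ sends $L^\infty(\R^3)$ into $L^2(\R^3)$ with $\|T_\varepsilon f\|_{L^2} \leq \|g\|_{L^1(\R^3)}\;\!\|f\|_\infty\;\!\|u\|_{L^2(\R^3)}$.

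For the convergence, I would rewrite
$$[T_\varepsilon f](x) - u(x)\int_{\R^3} g(-y)\;\!f(y)\;\!\d y = u(x)\int_{\R^3}\big[g(\varepsilon x - y) - g(-y)\big]\;\!f(y)\;\!\d y,$$
and then, after the change of variable $y \mapsto -z$, recognize the inner integral as $\int_{\R^3}\big[g(\varepsilon x + z) - g(z)\big]\;\!f(-z)\;\!\d z$, whose modulus is controlled by $\|f\|_\infty\;\!\|g(\cdot + \varepsilon x) - g\|_{L^1(\R^3)}$. By the standard continuity of translations in $L^1(\R^3)$, this upper bound tends to $0$ as $\varepsilon \searrow 0$ for each fixed $x\in \R^3$, which yields the pointwise convergence required inside the integral defining the squared $L^2$-norm of the difference.

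Finally, since the same inner integral is uniformly dominated by $2\;\!\|g\|_{L^1(\R^3)}\;\!\|f\|_\infty$, the integrand of $\|T_\varepsilon f - u(\cdot)\int_{\R^3}g(-y)f(y)\;\!\d y\|_{L^2}^2$ is bounded by the integrable function $4\;\!\|g\|_{L^1(\R^3)}^2\;\!\|f\|_\infty^2\;\!|u(x)|^2$, so the Lebesgue dominated convergence theorem applies and delivers the claimed $L^2$ convergence. The argument is essentially routine; the only point requiring some care is the translation-type reformulation of the inner integral that makes the $L^1$-continuity of translations applicable, and no serious obstacle is anticipated beyond that.
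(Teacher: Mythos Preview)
Your proof is correct and rests on the same core ingredient as the paper's, namely the continuity of translations in $L^1(\R^3)$ applied to the inner integral. The only difference is in how the passage from pointwise to $L^2$ convergence is carried out: the paper splits the $x$-integration domain into a compact set $K$ and its complement, controlling the first piece by $\sup_{x\in K}\|g(\cdot-\varepsilon x)-g\|_{L^1}$ and the second by the smallness of $\|u\|_{L^2(K^\bot)}$, whereas you invoke the dominated convergence theorem directly with the majorant $4\|g\|_{L^1}^2\|f\|_\infty^2|u(x)|^2$. Your packaging is slightly cleaner (no need to choose $K$), while the paper's version makes the uniformity on compacta explicit; neither approach offers a real advantage over the other, and both are entirely standard.
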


\begin{proof}
For $f \in L^\infty(\R^3)$ observe first that
\begin{align*}
\|T_\varepsilon f\|_{L^2(\R^3)}^2
& = \int_{\R^3}\Big|u(x) \int_{\R^3}g(\varepsilon x-y)f(y)\d y\Big|^2 \d x \\
&\leq \int_{\R^3}|u(x)|^2\Big[\int_{\R^3}|g(\varepsilon x-y)f(y)|\d y\Big]^2 \d x \\
&\leq \|u\|_{L^2(\R^3)}^2\;\!\|f\|_{L^\infty(\R^3)}^2\;\!\|g\|_{L^1(\R^3)}^2.
\end{align*}

For the convergence, let $K \subset \R^3$ be a compact set and let $K^\bot$ denote its complement. Then one has
\begin{align}
\nonumber \big\|T_\varepsilon f- u(\cdot)
\int_{\R^3}g(-y)f(y)\d y\big\|^2_{L^2(\R^3)}
&=\int_{\R^3}|u(x)|^2\;\! \Big|\int_{\R^3}\big(g(\varepsilon x-y)-g(-y)\big)f(y)\d y\Big|^2 \d x \\
\nonumber & \leq \|u\|_{L^2(K)}^2 \;\!\|f\|_{L^\infty}^2 \;\!\sup_{x \in K} \|g(\cdot -\varepsilon x)-g(\cdot)\|_{L^1(\R^3)}^2 \\
\label{eqlem2}  & + 4 \|u\|_{L^2(K^\bot)}^2\;\! \;\!\|f\|_{L^\infty(\R^3)}^2\;\!\|g\|_{L^1(\R^3)}^2.
\end{align}
By choosing a suitable set $K$ and then by taking the continuity of translations in $L^1(\R^3)$ into account, both terms in \eqref{eqlem2} can be made arbitrarily small for $\varepsilon$ small enough. This proves the statement.
\end{proof}

By collecting these results, we can now analyse part of the terms in \eqref{eachfactor}. This study is contained in the next lemma.

\begin{lemma}\label{endfunctions}
Let us assume that $\sigma>3$ in Condition \eqref{condition1}, that $s>3/2$ and that $\lambda \in (0,\infty)$.
\begin{enumerate}
\item[(a)] For any $f \in \H_s\cap L^\infty(\R^3)$, the strong limits of
$e^{-3\tau/2} u U_\tau R_0(\lambda\pm i 0)f$ exist in $\H$ as $\tau \to -\infty$.
\item[(b)] For any $g\in \H_s$, the strong limit of $e^{-3\tau /2}v U_\tau E_0'(\lambda)g$ exists in $\H$ as $\tau \to -\infty$.
\end{enumerate}
\end{lemma}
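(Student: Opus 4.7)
The plan is to exploit the explicit formulas for the integral kernels of $R_0(\lambda \pm i0)$ and $E_0'(\lambda)$ (from the proofs of Lemmas \ref{funclimit} and \ref{lem:trace}), and to split $g_\lambda^\pm$ into pieces that fall into the scope of the auxiliary Lemmas \ref{smallemma1} and \ref{smallemma2}, or, in one residual case, admit a direct dominated-convergence argument. Writing $\varepsilon := e^\tau$ and using $[U_\tau h](x) = e^{3\tau/2} h(e^\tau x)$, the operators appearing in (a) and (b) have the integral kernels $u_0(x)\;\!g_\lambda^\pm(\varepsilon x - y)$ and $v_0(x)\;\!\tilde g_\lambda(\varepsilon x - y)$ respectively, where $\tilde g_\lambda(z) := \frac{\lambda}{2\pi^2} \frac{\sin(\lambda|z|)}{|z|}$ is the kernel of $E_0'(\lambda)$ obtained in the proof of Lemma \ref{lem:trace}. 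The condition $\sigma > 3$ guarantees $u_0, v_0 \in L^2(\R^3)$, and the condition $s > 3/2$ guarantees $\H_s \subset L^1(\R^3)$.

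For part (a), I would decompose $g_\lambda^\pm = g^{(1)} + g^{(2)} + g^{(3)}$ with $g^{(1)} := \chi_{b(0,1)}\;\!g_\lambda^\pm$, $g^{(2)} := (1 - \chi_{b(0,1)})(g_0 + m_\lambda)$, and $g^{(3)} := (1 - \chi_{b(0,1)})\;\!k_\lambda^\pm$, which induces a splitting $T_\tau = T_\tau^{(1)} + T_\tau^{(2)} + T_\tau^{(3)}$ of the operator in (a). Each of $g_0$, $k_\lambda^\pm$, and $m_\lambda$ is integrable on the unit ball (for $m_\lambda$ using estimate \eqref{defg2}), so $g^{(1)} \in L^1(\R^3)$ and Lemma \ref{smallemma2} applied with $u = u_0$ yields the strong $L^2$ limit of $T_\tau^{(1)}f$ for any $f \in L^\infty(\R^3)$; this is precisely where the hypothesis $f \in L^\infty(\R^3)$ enters. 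Both $(1-\chi_{b(0,1)})\;\!g_0$ and $(1-\chi_{b(0,1)})\;\!m_\lambda$ decay at infinity like $|y|^{-2}$, hence $g^{(2)} \in L^2(\R^3)$ and Lemma \ref{smallemma1} gives Hilbert--Schmidt norm convergence of $T_\tau^{(2)}$, whence the strong limit of $T_\tau^{(2)}f$.

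The main obstacle will be the third piece, as $g^{(3)}$ decays only like $|y|^{-1}$ at infinity and does not belong to $L^2(\R^3)$, so neither Lemma \ref{smallemma1} nor Lemma \ref{smallemma2} applies. However, $g^{(3)}$ vanishes on the unit ball and is pointwise bounded by $\lambda/(2\pi)$ elsewhere, which yields the $\tau$-independent majorant
\[
\big|[T_\tau^{(3)} f](x)\big| \leq \frac{\lambda}{2\pi}\;\!|u_0(x)|\;\!\|f\|_{L^1(\R^3)},
\]
a function in $L^2(\R^3)$. Pointwise convergence of $[T_\tau^{(3)}f](x)$ to $u_0(x)\int g^{(3)}(-y) f(y)\;\!\d y$ follows from the continuity of $g^{(3)}$ off the unit sphere together with dominated convergence applied to the inner integral, and a further application of dominated convergence in $L^2(\R^3)$ then produces the strong limit. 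Summing the three contributions gives the total limit $u_0(\cdot)\int g_\lambda^\pm(-y) f(y)\;\!\d y$.

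Part (b) is simpler and requires no decomposition since the kernel $\tilde g_\lambda$ is already regular: the estimate $|\tilde g_\lambda(z)| \leq \lambda^2/(2\pi^2)$ (immediate from $|\sin r| \leq r$) produces the $\tau$-independent $L^2$ majorant $\frac{\lambda^2}{2\pi^2}|v_0(x)|\;\!\|g\|_{L^1(\R^3)}$ for the function applied to $g \in \H_s$. Two applications of dominated convergence, first in the inner integral and then in $L^2(\R^3)$, will then yield the strong limit $v_0(\cdot)\;\![E_0'(\lambda)g](0)$.
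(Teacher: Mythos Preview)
Your argument is correct and covers all cases. The organization differs from the paper's in two respects worth noting.

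First, the paper decomposes $g_\lambda^\pm$ along the three analytic pieces $g_0$, $m_\lambda$, $k_\lambda^\pm$ and then splits only $g_0$ and $k_\lambda^\pm$ by the unit ball, whereas you cut the whole kernel by the unit ball first. Both decompositions produce the same $L^1$, $L^2$, and $L^\infty$ pieces, so nothing is gained or lost either way.

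Second, and more interestingly, for the bounded-but-not-$L^2$ piece (your $g^{(3)}$, the paper's $k_{\lambda,2}^\pm$) and for part (b), the paper avoids a direct dominated-convergence argument by a swap trick: it rewrites $u_0(x)\int k_{\lambda,2}^\pm(e^\tau x - y) f(y)\,\d y$ as $u_0(x)\int f(e^\tau x - y) k_{\lambda,2}^\pm(y)\,\d y$ via the change of variable $y\mapsto e^\tau x - y$, and then applies Lemma~\ref{smallemma2} with the roles of $f\in L^1$ and $k_{\lambda,2}^\pm\in L^\infty$ interchanged. The same device handles part (b). This has the cosmetic advantage of reducing everything to the two auxiliary lemmas, while your direct two-layer dominated-convergence argument is more elementary and makes the majorant explicit. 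Both are valid; yours is arguably cleaner since it does not rely on the symmetry of the convolution kernel.
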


\begin{proof}
(a) From the definition of $U_\tau$ and the explicit formulas \eqref{defg0},
\eqref{sum3} and \eqref{defg}, it follows that for almost every $x \in \R^3$:
\begin{align}\label{eqtau}
\nonumber &\big[e^{-3\tau/2} u  U_\tau R_0(\lambda\pm i 0)f\big](x) \\
\nonumber &= u(x) \int_{\R^3} g_0(e^\tau x-y) f(y) \d y
+ u(x) \int_{\R^3} m_\lambda(e^\tau x-y) f(y) \d y \\
&\quad + u(x) \int_{\R^3} k_\lambda^\pm(e^\tau x-y) f(y) \d y
\end{align}
In order to apply Lemmas \ref{smallemma1} and \ref{smallemma2} below, we need the assumption that
$\sigma >3$, which implies that $u  \in L^2(\R^3)$.

Since $g_0 = g_1 + g_2$ with $g_j \in L^j(\R^3)$
(as already used in the proof of Lemma \ref{lemG0})
and since $m_\lambda \in L^2(\R^3)$ by
\eqref{defg1} and \eqref{defg2},
it follows from Lemmas \ref{smallemma1} and \ref{smallemma2} that the first two terms on the r.h.s.~of \eqref{eqtau} admit a strong limit as $\tau \to -\infty$, or more precisely:
\begin{align*}
&s-\lim_{\tau \to -\infty} \Big[u(\cdot) \int_{\R^3} g_0(e^\tau \!\! \cdot-y) f(y) \d y
+ u(\cdot) \int_{\R^3} m_\lambda(e^\tau \!\! \cdot-y) f(y) \d y\Big] \\
&= u(\cdot) \int_{\R^3}\big(g_0(-y)+ m_\lambda(-y)\big)f(y) \d y.
\end{align*}
Note that the r.h.s.~is well defined since $f \in L^2(\R^3)\cap L^\infty(\R^3)$.

For the third term in \eqref{eqtau},
we decompose $k_{\lambda}^\pm$ into two terms:
\begin{equation*}
k_{\lambda}^\pm=k_{\lambda, 1}^\pm + k_{\lambda, 2}^\pm
:= \chi_{b(0,1)} \;\!k_{\lambda}^\pm +
 (1-\chi_{b(0,1)})\;\!k_{\lambda}^\pm
\end{equation*}
in the same way as $g_0$ in Lemma \ref{lemG0}.
Observe that $k_{\lambda,1}^\pm \in L^2(\R^3)$ and $k_{\lambda,2}^\pm \in L^{\infty}(\R^3)$.
By Lemma \ref{smallemma1} the operators corresponding to the kernel
$u(x) k_{\lambda,1}^\pm(e^\tau x -y)$ are Hilbert-Schmidt and converge in the Hilbert-Schmidt norm to the operators with kernel $u(x) k_{\lambda,1}^\pm(-y)$ as $\tau \to -\infty$.
Furthermore, since $f \in L^1(\R^3)$ and
since $k_{\lambda,2}^\pm \in L^{\infty}(\R^3)$,
Lemma \ref{smallemma2} shows that
\begin{align*}
s-\lim_{\tau \to -\infty} u(\cdot) \int_{\R^3} k_{\lambda,2}^\pm(e^\tau \!\!\cdot-y) f(y) \d y
&=
s-\lim_{\tau \to -\infty} u(\cdot) \int_{\R^3} f(e^\tau \!\!\cdot-y)  k_{\lambda,2}^\pm(y) \d y  \\
&=
u(\cdot)\int_{\R^3} f(-y) k_{\lambda,2}^\pm(y)  \d y  \\
&=
u(\cdot)\int_{\R^3} k_{\lambda,2}^\pm(-y) f(y) \d y.
\end{align*}

(b) It follows from \eqref{eqn:3q1} that for almost every $x \in \R^3$
\begin{equation*}
\big[e^{-3\tau/2} v \;\!U_\tau E_0^{\prime}(\lambda) g\big](x)
=
\frac{\lambda}{2\pi^2} v(x) \int_{\R^3} \frac{\sin(\lambda |e^\tau x-y|)}{|e^\tau x -y|}g(y) \d y.
\end{equation*}
Since $g \in L^1(\R^3)$
and the map $\R^3 \ni x \mapsto \frac{\sin(|x|)}{|x|}\in \R$ belong to
$L^{\infty}(\R^3)$,
one proves as above that
\begin{equation*}
s-\lim_{\tau \to -\infty} v(\cdot)
\int_{\R^3} \frac{\sin(\lambda |e^\tau \!\! \cdot -y|)}{|e^\tau \!\!\cdot -y|} g(y) \d y =
v(\cdot)\int_{\R^3} \frac{\sin(\lambda |y|)}{|y|} g(y) \d y.
\end{equation*}
\end{proof}

We are now left with the study of the asymptotic behaviors of
the operators $B(e^\tau \lambda \pm i 0)$ in \eqref{eachfactor}
as $\tau \to -\infty$.
In order to deal with the assumption
$[a, \, b] \cap \sigma_p(H_0 + e^{-\tau} V_\tau)= \emptyset$ of Lemma \ref{lemalternative}, let us observe that the equality \eqref{egalspect} implies that if $\lambda_0$ is a positive eigenvalue of $H_0+V$, then $e^{-\tau}\lambda_0$ is a positive eigenvalue of $H_0 + e^{-\tau} V_\tau$.
Now, by choosing $\tau$ close enough to $-\infty$, the value $e^{-\tau}\lambda_0$ can be made arbitrarily large. Thus, one infers that with the following implicit condition,
the mentioned assumption becomes manageable.

\begin{assumption}\label{Himplicit}
The value $0$ is not an accumulation point of positive eigenvalues for the operator $H_0+V$.
\end{assumption}

Obviously, this assumption is rather natural and a large class of perturbations $V$ should satisfy it. In Section \ref{sectionac} we provide sufficient conditions such that the spectrum of $H$ on $\R_+$ is purely absolutely continuous. However, the absence of accumulation of positive eigenvalues at $0$ is certainly verified under weaker assumptions.
Now, note that Assumption \ref{Himplicit} together with \eqref{egalspect} have an important consequence: for any $[a,b]\subset (0,\infty)$, there exists $\tau_{ab}\in \R$ such that for any $\tau \leq \tau_{ab}$, one has
\begin{equation}\label{condspect}
\sigma_p(H_0 + e^{-\tau} V_\tau) \cap [a,b] = e^{-\tau}\sigma_p(H_0+V)\cap [a,b] = \emptyset.
\end{equation}
In fact, for any $\tau \leq \tau_{ab}$ the even stronger statement
$\sigma_p(H_0 + e^{-\tau} V_\tau)\cap(0,b] = \emptyset$ holds.

For the time being, we shall impose an additional condition
(Assumption \ref{terrible} below) on the behavior of the $0$-energy threshold. It is not clear yet if this condition is necessary or even if it is always satisfied (see also Remark \ref{possibleextention} after Proposition \ref{lembehavior0}). So, let us assume that $\sigma>2$ in Condition \eqref{condition1} and denote by $\G_0$ the finite dimensional subspace of $\H$ spanned by the eigenvectors of the compact operator $uG_0v$ associated with the eigenvalue $-1$. The orthogonal projection on this subspace is simply denoted by $P$. In the Schr\"odinger case, this space corresponds to the set of $0$-energy eigenvectors and $0$-energy resonances. Our additional condition corresponds to the invertibility of a certain operator when restricted on $\G_0$. More precisely, let $Q_0$ be the operator
whose kernel is $\frac{1}{4\pi}|x-y|^{-1}$. Clearly, this operator corresponds to the resolvent of the Laplace operator at $0$-energy.

\begin{assumption}\label{terrible}
The operator $PuQ_0v\big|_{\G_0}: \G_0 \to \G_0$ is invertible.
\end{assumption}

Before proving the main result about the operator $B(e^\tau \lambda \pm i 0)$ in \eqref{eachfactor}, let us show that Assumptions \ref{Himplicit} and \ref{terrible} are generically satisfied.
Indeed, we prove in Lemma \ref{lembehavior1}
below that the condition $0 \not \in \sigma_p(H)$ implies that both Assumptions \ref{Himplicit} and \ref{terrible} hold. Then, since the operator $H$ rarely has
the $0$-energy eigenvalue
(see Propositions \ref{prp:sparse1} and \ref{prp:sparse2} in Section \ref{sec0behavior}),
it follows that the mentioned assumptions are almost always satisfied.

\begin{lemma}\label{lembehavior1}
Let $\sigma>2$ in Condition \eqref{condition1} and suppose that $0 \not \in \sigma_p(H)$.
Then both Assumptions \ref{Himplicit} and \ref{terrible} hold.
\end{lemma}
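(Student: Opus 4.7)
The plan is to handle the two assumptions separately, reducing each to a result that has already been proved earlier in the paper. The key observation is that under the hypothesis $0 \notin \sigma_p(H)$, the subspace $\G_0$ actually collapses to $\{0\}$, which makes Assumption \ref{terrible} vacuous.

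For Assumption \ref{Himplicit}, there is essentially nothing to do. Since $\sigma>3>5/2$ and $0\notin\sigma_p(H)$, Theorem \ref{prp:0eigenvl} directly yields a constant $\lambda_0>0$ with $[0,\lambda_0)\cap\sigma_p(H)=\emptyset$, which is exactly the statement that $0$ is not an accumulation point of positive eigenvalues of $H=H_0+V$.

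For Assumption \ref{terrible}, the plan is to prove that $\G_0=\{0\}$. Let $f\in\G_0$, so that $u_0 G_0 v_0 f=-f$. I set $h:=G_0 v_0 f$ and verify in turn the following: (i) because $|v_0(x)|\leq C\langle x\rangle^{-\sigma/2}$ with $\sigma/2>3/2$, the function $v_0 f$ lies in $\H_s$ for some $s>3/2$, so Lemma \ref{lemG0} gives $h\in\H$; (ii) the identity $H_0 G_0=I$ (valid because in Fourier variables $G_0$ is multiplication by $|k|^{-1}$ and $H_0$ by $|k|$, as used already in the proof of Lemma \ref{absence}) shows that $H_0 h=v_0 f$ as a tempered distribution; (iii) using the defining relation for $f$, one has $Vh=v_0 u_0 G_0 v_0 f=-v_0 f$; hence $Hh=0$ in the sense of distributions. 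Because $h\in\H\subset\H_{-s}$ for every $s>0$ and $\sigma>3$ ensures that the range $(0,\min\{\sigma-3/2,5/2\})$ is non-empty, Lemma \ref{absence} applies and gives $h\in\H^1$. The hypothesis $0\notin\sigma_p(H)$ then forces $h=0$, and substituting back gives $f=-u_0 G_0 v_0 f=-u_0 h=0$. Thus $\G_0=\{0\}$, and the restriction $Pu_0 Q_0 v_0\big|_{\G_0}$ is trivially invertible.

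This mirrors the reasoning already used in Lemma \ref{lem:0eigenvl-1}, where the same chain ``eigenvector of $VG_0$ at $-1$'' $\Rightarrow$ ``$0$-energy state'' $\Rightarrow$ ``$0$ by Lemma \ref{absence}'' was run. The only step that requires a bit of care is the distributional identity $H_0 G_0=I$ applied to $v_0 f$, together with the verification that the sum $H_0 h+Vh=0$ makes sense as tempered distributions; this is the main (mild) obstacle, and it is handled exactly as in Lemma \ref{absence}, where the pairing $\langle G_0 g, H_0 f\rangle_\H=\langle g,f\rangle_\H$ was used to transfer $H_0$ across $G_0$.
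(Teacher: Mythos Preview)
Your proof is correct and follows essentially the same approach as the paper: invoke Theorem \ref{prp:0eigenvl} for Assumption \ref{Himplicit}, and for Assumption \ref{terrible} show that $-1$ is not an eigenvalue of $u_0G_0v_0$ by pushing any would-be eigenvector through $G_0$ to produce a $0$-energy solution $h$, then apply Lemma \ref{absence} and the hypothesis $0\notin\sigma_p(H)$ to conclude $h=0$ and hence $f=0$. The only cosmetic difference is that you use Lemma \ref{lemG0} (with $\sigma/2>3/2$) to place $h$ directly in $\H$, whereas the paper uses Lemma \ref{funclimit} to place $h$ in $\H_{-s}$ for $s\in(1,\sigma/2]$; both are sufficient for invoking Lemma \ref{absence}.
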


\begin{proof}
Under the same assumptions on $V$ and on $\sigma_p(H)$, it has already been shown in the proof of Theorem \ref{prp:0eigenvl} that $H$ has no eigenvalue in $[0,\lambda_0)$, for some $\lambda_0>0$. Thus, Assumption \ref{Himplicit} is satisfied. In addition, it has been proved in Lemma \ref{lem:0eigenvl-1} that if $0 \not \in \sigma_p(H)$ then $-1 \not \in \sigma(uG_0v)$, which immediately implies Assumption \ref{terrible} since the subspace $\G_0$ is then trivial.
\end{proof}

In addition to Lemma \ref{lembehavior1}, we would like to
mention that both Assumptions \ref{Himplicit} and \ref{terrible}
are verified for $H=H_0 + a V$ if $V$ satisfies Condition \eqref{condition1} with $\sigma>2$ and
$a \in \R$ is small enough.
Indeed, since $\sigma_p(H_0)=\emptyset$, this easily follows from Proposition \ref{prp:sparse1} and from  Lemma \ref{lembehavior1}.

\begin{proposition}\label{lembehavior0}
Let $\sigma>3$ in Condition \eqref{condition1} and suppose that Assumptions \ref{Himplicit} and \ref{terrible} hold.
Let $[a,b]\subset (0,\infty)$ and $\lambda \in [a,b]$. Then, there exists $\tau_{ab}\in \R$ such that for any $\tau \leq \tau_{ab}$, the operators $B(e^\tau\lambda \pm i 0)$ belong to $\B(\H)$ and the norm limits
\begin{equation}\label{thelimit}
\lim_{\tau \to -\infty} e^\tau B(e^\tau \lambda \pm i 0) \in \B(\H)
\end{equation}
exist.
\end{proposition}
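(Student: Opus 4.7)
Writing $\mu:=e^\tau\lambda$, the claim amounts to the existence in $\B(\H)$ of $\lim_{\mu\searrow 0}\mu\, B(\mu\pm i0)$; division by $\lambda$ then gives \eqref{thelimit}. For the very definition of $B(e^\tau\lambda\pm i0)$ as a bounded operator, note that, as already observed in the proof of Lemma \ref{lemalternative}, one has $-1\notin\sigma_p(u_0R_0(\mu\pm i0)v_0)$ as soon as $\mu\notin\sigma_p(H)$; Assumption \ref{Himplicit} then readily yields a threshold $\tau_{ab}\in\R$ below which $e^\tau\lambda$ avoids $\sigma_p(H)$ for all $\lambda\in[a,b]$, ensuring $B(e^\tau\lambda\pm i0)\in\B(\H)$.

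The core step is to refine Lemma \ref{funclimit} into a first-order low-energy expansion of $R_0(\mu\pm i0)$ at $\mu=0$. Starting from the decomposition \eqref{sum3}, the classical expansion of \cite{JK} gives $(-\Delta-\mu^2\mp i0)^{-1}=Q_0+O(\mu)$ in a suitable weighted topology, hence $u_0K_\mu^\pm v_0=2\mu\, u_0Q_0v_0+o(\mu)$ in $\B(\H)$, where $Q_0$ is the operator of kernel $(4\pi|x-y|)^{-1}$. For $M_\mu$, the small-argument behaviour $\sin r\, \ci r+\cos r\,\si r\to-\pi/2$ as $r\to 0$ gives the pointwise limit $m_\mu(x)/\mu\to-\tfrac{1}{4\pi|x|}$; combining this with the uniform bound \eqref{defg2} and the scaling $m_\mu(x)=\mu^2m_1(\mu x)$ and applying a dominated-convergence argument to the weighted Hilbert--Schmidt kernel of $\mu^{-1}u_0M_\mu v_0$ (this is where the stronger decay $\sigma>3$ enters, to tame the long-range part of $m_1$) produces $u_0M_\mu v_0=-\mu\, u_0Q_0v_0+o(\mu)$ in $\B(\H)$. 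Summing, one obtains the key expansion
\begin{equation*}
1+u_0R_0(\mu\pm i0)v_0=A+\mu\, S+r_\mu^\pm,\qquad\|r_\mu^\pm\|_{\B(\H)}=o(\mu),
\end{equation*}
with $A:=1+u_0G_0v_0$ and $S:=u_0Q_0v_0$, the remainder being independent of the $\pm$ sign at leading order.

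Since $u_0G_0v_0$ is compact, $A$ is Fredholm of index $0$ with $\Ker A=\G_0$, and I would conclude via a Grushin-type reduction. Introduce the bordered operator
\begin{equation*}
\mathcal{E}(\mu):=\begin{pmatrix} A+\mu\, S+r_\mu^\pm & \iota \\ P & 0 \end{pmatrix}\in\B(\H\oplus\G_0),
\end{equation*}
with $\iota:\G_0\hookrightarrow\H$ the inclusion and $P$ the orthogonal projection onto $\G_0$ (possibly adjusting the bordering to accommodate the fact that $T_0=u_0G_0v_0$ is not self-adjoint, so the cokernel of $A$ may differ from $\G_0$). Standard index-theoretic arguments ensure $\mathcal{E}(0)$ is invertible, hence so is $\mathcal{E}(\mu)$ for $|\mu|$ small; writing its inverse in block form as $\bigl(\begin{smallmatrix}E(\mu)&E_+(\mu)\\E_-(\mu)&E_{-+}(\mu)\end{smallmatrix}\bigr)$, the standard Grushin identity yields
\begin{equation*}
B(\mu\pm i0)=E(\mu)-E_+(\mu)E_{-+}(\mu)^{-1}E_-(\mu)
\end{equation*}
whenever $E_{-+}(\mu)\in\B(\G_0)$ is invertible. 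A first-order perturbation computation based on the previous expansion gives $E_{-+}(\mu)=-\mu\, Pu_0Q_0v_0\bigr|_{\G_0}+o(\mu)$, whose leading coefficient is exactly the operator appearing in Assumption \ref{terrible} and is therefore invertible by hypothesis. Consequently $\mu E_{-+}(\mu)^{-1}\to-\bigl(Pu_0Q_0v_0\bigr|_{\G_0}\bigr)^{-1}$ in $\B(\G_0)$, and substituting back shows that $\mu B(\mu\pm i0)$ has a limit in $\B(\H)$, whence \eqref{thelimit} follows after multiplication by $\lambda^{-1}$.

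The main technical obstacle is the norm estimate $u_0M_\mu v_0+\mu u_0Q_0v_0=o(\mu)$ in $\B(\H)$: converting the transparent pointwise limit $m_\mu/\mu\to-(4\pi|\cdot|)^{-1}$ into operator-norm convergence at the explicit rate $o(\mu)$ requires a careful dominated-convergence argument on the Hilbert--Schmidt kernel, and this is precisely where the strong decay hypothesis $\sigma>3$ is used. The Grushin inversion itself, once the expansion is secured, is a routine piece of algebra; the only extra care needed is in the choice of the bordering operators, because $T_0$ is not self-adjoint so the kernel and cokernel of $A$ may differ.
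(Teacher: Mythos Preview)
Your first-order expansion of $u_0R_0(\mu\pm i0)v_0$ agrees with what the paper does: in its step (a) it also expands $k_\mu^\pm$ and $m_\mu$ explicitly to extract the linear term $\mu\,u_0Q_0v_0$, obtaining in fact the sharper remainder $O(\mu^{3/2})$ (via the logarithmic piece of $m_\mu$) where you settle for $o(\mu)$ by dominated convergence; either suffices for what follows.

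The inversion step is where your route genuinely diverges. The paper does \emph{not} set up a Grushin problem. Instead it inserts an artificial scalar, writing
\[
1+u_0G_0v_0+e^\tau\lambda\,u_0Q_0v_0+o(e^\tau)
=\bigl(1+e^\tau+u_0G_0v_0\bigr)\Bigl[1+e^\tau\bigl(1+e^\tau+u_0G_0v_0\bigr)^{-1}\bigl(\lambda u_0Q_0v_0-1+o(1)\bigr)\Bigr],
\]
and then feeds in the Laurent expansion $(1+u_0G_0v_0+z)^{-1}=z^{-1}P+T+O(z)$ borrowed from \cite{AGHH,Kato}, with $P$ the Riesz projection onto the eigenspace at $-1$. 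A few lines of algebra give
\[
e^\tau B(e^\tau\lambda\pm i0)=\bigl(1+P(\lambda u_0Q_0v_0-1)+o(1)\bigr)^{-1}\bigl(P+O(e^\tau)\bigr),
\]
and since $P$ is a projection with range $\G_0$, invertibility of the leading factor reduces to Assumption \ref{terrible}. Your Grushin/Feshbach scheme is a legitimate alternative---it is precisely the \cite{JN} approach the paper itself points to in Remark \ref{possibleextention} as the natural way to push the analysis further---and it is more systematic and easier to iterate, whereas the paper's add-and-subtract trick is \emph{ad hoc} but shorter. The non-self-adjointness issue you flag (kernel versus cokernel of $A$) is real; note that the paper's Laurent expansion sidesteps it because the Riesz projection $P$ automatically has range $\G_0$ and kernel equal to $\mathrm{Ran}(A)$, so no separate bordering choice is needed. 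In your scheme you would likewise want to border with the Riesz projection rather than the orthogonal one, after which the effective operator $E_{-+}$ lands exactly on the operator of Assumption \ref{terrible}.
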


\begin{proof}
It was already shown in the proof of Lemma \ref{lemalternative} that $1 + uR_0(e^\tau\lambda \pm i 0)v$ are invertible in $\B(\H)$ if
$\lambda \not \in \sigma_p(H_0+e^{-\tau}V_\tau)$. Furthermore, it follows from the above considerations that there exists $\tau_{ab}$ such that for $\tau \leq \tau_{ab}$ one has $\sigma_p(H_0 + e^{-\tau} V_\tau)\cap(0,b] = \emptyset$, which clearly prevents $\lambda$ from being an eigenvalue of $H_0+e^{-\tau}V_\tau$. Thus, $1 + uR_0(e^\tau\lambda \pm i 0)v$
are invertible in $\B(\H)$ and the inverses are by definition the operators $B(e^\tau\lambda \pm i 0)$.

Now, we already know from Lemma \ref{funclimit} that
$uR_0(e^\tau\lambda \pm i 0)v$
converge in norm to $uG_0v$ as $\tau \to -\infty$.
However, depending if $-1$ belongs to the spectrum of $uG_0v$ or not,
the behaviors of $B(e^\tau\lambda \pm i 0)$ as $\tau \to -\infty$ change drastically.
Clearly, if $-1 \not \in \sigma (uG_0v)$, then
$B(e^\tau\lambda \pm i 0)$ converge in norm to $(1+uG_0v)^{-1}$ as $\tau \to -\infty$,
and in that case the limits in \eqref{thelimit} are equal to $0$. But if $-1 \in \sigma (uG_0v)$, a more refined work is necessary. The rest of the proof is divided into several steps.

(a) We first derive better approximations for the operators $K_{e^\tau\lambda}^\pm$ and $M_{e^\tau\lambda}$. For simplicity, let us set $\varepsilon :=e^\tau\lambda$ and observe that
\begin{equation*}
k_\varepsilon^\pm(x)= \frac{\varepsilon}{2\pi}\frac{1}{|x|} + \frac{\varepsilon}{2\pi}\frac{e^{\pm i \varepsilon |x|}-1}{|x|}
= \frac{\varepsilon}{2\pi}\frac{1}{|x|} \pm i \frac{\varepsilon^2}{2\pi} \int_0^1 e^{\pm i s \varepsilon |x|}\;\! \d s.
\end{equation*}
It follows that
\begin{equation*}
[uK_\varepsilon^\pm v](x,y)
=
\frac{\varepsilon}{2\pi} u(x)\frac{1}{|x-y|}v(y)
\pm i \frac{\varepsilon^2}{2\pi} u(x)
\Big[\int_0^1 e^{\pm i s \varepsilon |x-y|}\;\! \d s \Big]v(y).
\end{equation*}
By setting $Q_0$ for the operator with kernel $\frac{1}{4\pi}|x-y|^{-1}$ the previous equality reads
\begin{equation*}
uK_\varepsilon^\pm v
=
2 \varepsilon \;\! u Q_0 v \pm  \varepsilon^2 \;\!B_\varepsilon^\pm
\end{equation*}
where $B_\varepsilon^\pm$ are Hilbert-Schmidt operators with Hilbert-Schmidt norms bounded by a constant independent of $\varepsilon$.

For the operator $M_\varepsilon$, let us observe that
\begin{equation*}
m_\varepsilon(x)= -\frac{\varepsilon}{4\pi|x|}
+\frac{\varepsilon^2}{2\pi^2}\cdot \frac{1}{\varepsilon|x|}
\big(
\sin(\varepsilon|x|)\;\ci(\varepsilon|x|)+ \cos(\varepsilon|x|)\;\si(\varepsilon|x|)+ \frac{\pi}{2}\big).
\end{equation*}
Note now that the function $\rho \mapsto \frac{1}{\rho}\big(
\cos(\rho)\;\si(\rho)+\frac{\pi}{2}\big)$ is bounded on $(0,\infty)$. On the other hand, the function
$\rho\mapsto \frac{1}{\rho}\sin(\rho)\;\ci(\rho)$ is bounded in the  neighbourhood of $+\infty$ but only the map
$\rho\mapsto \frac{1}{\rho \ln \rho} \sin(\rho)\;\ci(\rho)$ is
bounded in the neighbourhood of $0$.
Taking account of these facts,
we introduce a cut-off function $\chi$, which is continuous on $[0, \, \infty)$, with
\ $\chi(\rho)=1$ for $\rho\in [0,\, 1/2]$, \ $\chi(\rho)=0$ for $\rho \in [3/4, \, \infty)$ and $0 \le \chi \le 1$,
and accordingly decompose $m_\varepsilon(x)$
in the following manner:
\begin{align}\label{devm}
& m_\varepsilon(x) =  -\frac{\varepsilon}{4\pi|x|}
+\varepsilon^2 \;\!\ell(\varepsilon |x|) + \varepsilon^2 \;\!\ln (\varepsilon |x|)  \;\!n(\varepsilon |x|),  \\
& \ell(\rho):= \frac{1}{2\pi^2}\Big[
\frac{1-\chi(\rho)}{\rho} \sin(\rho)\;\ci(\rho) +
\frac{1}{\rho}\big(
\cos(\rho)\;\si(\rho)+\frac{\pi}{2}\big) \Big],  \nonumber  \\
& n(\rho):=
\frac{1}{2\pi^2}\Big[
\frac{\chi(\rho)}{\rho \ln \rho} \sin(\rho)\;\ci(\rho)
\Big].    \nonumber
\end{align}
Let $D_\varepsilon$ denote the operator with kernel
$u(x)\ln (\varepsilon |x-y|)  \;\!n(\varepsilon |x-y|)v(y)$.
We then observe that for any $\gamma>0$
\begin{align*}
\|D_\varepsilon\|_{HS}^2
&= \int_{\R^6} \big|u(x)\ln (\varepsilon |x-y|)  \;\!n(\varepsilon |x-y|)v(y)\big|^2 \d x\;\!\d y \\
&\leq {\rm Const.}~\varepsilon^{-2\gamma}
\int_{\R^6} \Big|u(x)\frac{1}{|x-y|^{\gamma}} v(y)\Big|^2 \d x\;\!\d y \\
&\leq {\rm Const.}~\varepsilon^{-2\gamma}
\int_{\R^6} \langle x \rangle^{-\sigma}\frac{1}{|x-y|^{2\gamma}} \langle y \rangle^{-\sigma}
\d x\;\!\d y \\
&\leq \varepsilon^{-2\gamma} \ {\rm Const}(\gamma,\sigma).
\end{align*}
For the last equality, one has used estimates for convolution operator obtained in \cite[Lem.~11.1]{U}.

By collecting these results and by fixing $\gamma=1/2$ one has thus obtained that
\begin{gather}\label{asympt}
\begin{split}
u R_0(\varepsilon \pm i 0) v
&=
u(G_0+ K_\varepsilon^\pm +M_\varepsilon)v  \\
&=
u G_0 v +
\varepsilon \;\! u Q_0 v + \varepsilon^{3/2} D_\varepsilon +  \varepsilon^2 C_\varepsilon^\pm
\end{split}
\end{gather}
where $C_\varepsilon^\pm$ and $D_\varepsilon$ are Hilbert-Schmidt operators with Hilbert-Schmidt norms bounded by constants independent of $\varepsilon$.

(b) For the second step of the proof, we can rely on results obtained in \cite[Sec.~I.1.2]{AGHH} for the Schr\"odinger case. Indeed, the single difference between both contexts is the definition of the operator $G_0$, but the rest of the analysis can be mimicked.
Then, based on \cite[Chap.~III.6.5]{Kato}, it has been proved in \cite[Sec.~I.1.2]{AGHH} that for any $z \in \C \setminus \{0\}$ with $|z|$ small enough, the following norm convergent expansion holds:
\begin{equation}\label{series}
(1+ u G_0v+z)^{-1} = z^{-1}P + \sum_{m=0}^\infty (-z)^m T^{m+1},
\end{equation}
where $P$ is the projection onto the eigenspace of $uG_0v$ associated with the eigenvalue $-1$
and $T \in \B(\H)$.

(c) Let us now come to the main part of the proof. By taking the estimates \eqref{asympt} and \eqref{series} into account, observe that for $\tau \leq \tau_{ab}$ one has
\begin{align*}
&e^\tau B(e^\tau \lambda \pm i 0) \\
&= e^\tau \big(1+uR_0(e^\tau \lambda \pm i0)v\big)^{-1} \\
&= e^\tau \big(1+uG_0v + e^\tau \lambda uQ_0v + o(e^\tau)\big)^{-1} \\
&= e^\tau \Big( \big(1+e^\tau + uG_0v \big)
\big[1+ e^\tau \big(1+e^\tau + uG_0v\big)^{-1} \big(\lambda uQ_0v -1 + o(1)\big)\big]\Big)^{-1} \\
&=  \Big(1+ \big(P+O(e^\tau)\big) \big(\lambda uQ_0v -1 + o(1)\big)\Big)^{-1}\big(P+O(e^\tau)\big) \\
&= \big(1+ P (\lambda uQ_0v -1) + o(1)\big)^{-1}\big(P+O(e^\tau)\big),
\end{align*}
where the symbols $o(e^{j\tau})$ and $O(e^{j\tau})$ mean respectively that $\lim_{\tau \to -\infty}e^{-j\tau}\|o(e^{j\tau})\|_{\B(\H)}= 0$ and
$e^{-j\tau}\|O(e^{j\tau})\|_{\B(\H)} \in L^\infty(-\infty,\tau_{ab})$ for $j \in \{0,1\}$.
Thus, if the operator $1+ P (\lambda uQ_0v -1)$ is invertible with a bounded inverse,
then the norm limit $\tau \to -\infty$ can be performed in the previous expression and one obtains
\begin{equation*}
\lim_{\tau \to -\infty}  e^\tau B(e^\tau \lambda \pm i 0) =
\big(1+ P (\lambda uQ_0v -1)\big)^{-1}P.
\end{equation*}
Therefore, the final step in the proof consists in studying
the operator $1+ P(\lambda uQ_0v -1)$.

(d) Since $P$ is a projection, one observes that
the invertibility of $1+ P(\lambda uQ_0v -1)$ holds in $\B(\H)$
if the condition of  Assumption \ref{terrible} is satisfied.
\end{proof}

\begin{remark}\label{possibleextention}
It is possible to avoid assuming Assumption \ref{terrible} by still improving part of the previous proof. Indeed, by further developing the term $m_\varepsilon$ in \eqref{devm}, then by working more carefully and by considering another expression of $e^\tau$ in front of the term $B(e^\tau \lambda \pm i0)$, a better analysis
in the line of \cite{JN}
could be performed without the Assumption \ref{terrible}. For the time being, we do not carry out this computation. In comparison, let us mention that in the Schr\"odinger case, a similar study has been avoided in \cite{AGHH} by inserting an additional real-analytic function of $\tau$ just before $V$ and by adding sufficient conditions on this function. Thanks to this trick, the authors avoid a condition similar to our Assumption \ref{terrible} but it also prevents them from considering all the possible situations.
\end{remark}

Summing all the results obtained so far, one can readily prove the following statement.

\begin{proposition}\label{weakl}
Let us assume that $\sigma>3$ in Condition \eqref{condition1},
and suppose that Assumptions \ref{Himplicit} and \ref{terrible} hold. For $s>3/2$, let $f\in \H_s\cap L^\infty(\R^3)$ and $g \in \H_s$ with  $E_0([a,b])g=g$ for some $[a,b] \subset (0,\infty)$.
Then the limits \begin{equation*}
\lim_{\tau \to -\infty}\big\langle f, \big(U_{-\tau} W_\pm(H,H_0)U_\tau -1\big) g\big \rangle _\H = 0
\end{equation*}
hold.
\end{proposition}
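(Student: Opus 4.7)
My plan is to combine the three technical results established just before the statement: the integral representation of Lemma \ref{lemalternative}, the strong limits of Lemma \ref{endfunctions}, and the decisive norm-limit of $e^\tau B(e^\tau\lambda\pm i0)$ given by Proposition \ref{lembehavior0}. Once these are in hand, the whole argument reduces to a careful bookkeeping of powers of $e^\tau$, followed by dominated convergence on the fixed bounded interval $[a,b]$.

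First I would invoke the dilation identity \eqref{transfer} to turn the claim into a statement about the $\tau\to-\infty$ behaviour of $\langle f,(W_\pm(H_0+e^{-\tau}V_\tau,H_0)-1)g\rangle_\H$. In order to apply Lemma \ref{lemalternative}, one needs $[a,b]\cap\sigma_p(H_0+e^{-\tau}V_\tau)=\emptyset$; this is precisely guaranteed for $\tau\le\tau_{ab}$ (with $\tau_{ab}$ chosen small enough) by Assumption \ref{Himplicit} together with the identity \eqref{condspect}. For such $\tau$, Lemma \ref{lemalternative} expresses the pairing as
$$-e^{-\tau}\int_a^b\big\langle B(e^\tau\lambda\pm i0)u_0 U_\tau R_0(\lambda\pm i0)f,\,v_0 U_\tau E_0'(\lambda)g\big\rangle_\H\,\d\lambda.$$

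Next I would regroup the integrand so that each factor carries the normalization under which its limit is known, rewriting it as
$$e^\tau\big\langle\bigl[e^\tau B(e^\tau\lambda\pm i0)\bigr]\bigl[e^{-3\tau/2}u_0 U_\tau R_0(\lambda\pm i0)f\bigr],\,\bigl[e^{-3\tau/2}v_0 U_\tau E_0'(\lambda)g\bigr]\big\rangle_\H.$$
The powers of $e^\tau$ balance as $-\tau+\tau-\tfrac{3\tau}{2}-\tfrac{3\tau}{2}+3\tau=\tau$, so the entire integrand is $e^\tau$ times a scalar product whose three factors individually admit limits as $\tau\to-\infty$: strong in $\H$ for the two bracketed vectors by Lemma \ref{endfunctions}, and in operator norm for the middle factor by Proposition \ref{lembehavior0} (both Assumptions \ref{Himplicit} and \ref{terrible} being in force by hypothesis). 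In particular, the scalar product is bounded uniformly in $\tau\le\tau_{ab}$, while $e^\tau\to0$. Dominated convergence applied on the fixed finite interval $[a,b]$ then yields the desired vanishing.

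The only point that requires a moment of care is the uniformity in $\lambda\in[a,b]$ of the convergences coming from Lemma \ref{endfunctions} and Proposition \ref{lembehavior0}, which are stated pointwise; but this is where the proof rests rather than where it breaks. Inspecting the arguments used to establish those two results, the Hilbert-Schmidt estimates, the $L^p$ bounds on the convolution kernels $g_0$, $m_\lambda$, $k_\lambda^\pm$, and the Neumann-type expansion of $\bigl(1+u_0G_0v_0+z\bigr)^{-1}$ all involve constants depending on $\lambda$ only through its membership in a compact subset of $(0,\infty)$. Uniformity on $[a,b]$ therefore follows without additional ingredients, and no further obstacle is anticipated.
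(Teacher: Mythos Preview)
Your argument is correct and follows essentially the same route as the paper: invoke \eqref{transfer} and \eqref{condspect} to reduce to Lemma \ref{lemalternative}, redistribute the powers of $e^\tau$ exactly as you do, then use Lemma \ref{endfunctions} and Proposition \ref{lembehavior0} together with dominated convergence on $[a,b]$. Your final paragraph on uniformity in $\lambda$ is in fact more careful than the paper, which simply asserts that the interchange of limit and integral ``is easily obtained by an application of the Lebesgue's dominated convergence theorem''; note only that the arithmetic string $-\tau+\tau-\tfrac{3\tau}{2}-\tfrac{3\tau}{2}+3\tau$ does not actually equal $\tau$, though your displayed rewriting of the integrand is correct.
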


\begin{proof}
It is clear from
\eqref{transfer}, \eqref{condspect} and Lemma \ref{lemalternative}
that
there exists $\tau_{ab}\in \R$ such that for any $\tau \leq \tau_{ab}$ the stationary representations \eqref{eachfactor} hold.
Then, let us observe that
\begin{align*}
&\big\langle f, \big(W_\pm(H_0+e^{-\tau} V_\tau,H_0)-1\big) g\big \rangle _\H \\
&= - e^{-\tau}\int_a^b \big\langle
B(e^{\tau}\lambda\pm i0)u
\;\!U_\tau R_0(\lambda\pm i 0)
f, v U_\tau E_0'(\lambda)g
\big\rangle_{\H} \,\d \lambda \\
&= - e^{\tau}\int_a^b \big\langle
\big[e^\tau B(e^{\tau}\lambda\pm i0)\big]
e^{-3\tau/2}u \;\!U_\tau R_0(\lambda\pm i 0)
f, e^{-3\tau/2}v U_\tau E_0'(\lambda)g
\big\rangle_{\H} \,\d \lambda\ .
\end{align*}
It then follows from Lemma \ref{endfunctions} and Proposition \ref{lembehavior0} that
\begin{equation*}
\lim_{\tau \to -\infty} e^{\tau} \big\langle
\big[e^\tau B(e^{\tau}\lambda\pm i0)\big]
e^{-3\tau/2}u \;\!U_\tau R_0(\lambda\pm i 0)
f, e^{-3\tau/2}v U_\tau E_0'(\lambda)g
\big\rangle_{\H} = 0.
\end{equation*}
Finally, the permutation of the integral and of the limit is easily obtained by an application of the Lebesgue's dominated convergence theorem.
\end{proof}

\begin{theorem}\label{mainwave}
Let us assume that $\sigma>3$ in Condition \eqref{condition1},
and suppose that Assumptions \ref{Himplicit} and \ref{terrible} hold.
Then, the following limits hold:
\begin{equation*}
s-\lim_{\tau \to -\infty} U_{-\tau} W_\pm(H,H_0)U_\tau =1.
\end{equation*}
\end{theorem}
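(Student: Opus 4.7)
The strategy is to upgrade the scalar-product convergence of Proposition \ref{weakl} to strong operator convergence by exploiting the fact that the operators $W_\pm^\tau := U_{-\tau}W_\pm(H,H_0)U_\tau$ are isometries. Indeed, since $H_0=\sqrt{-\Delta}$ has purely absolutely continuous spectrum equal to $[0,\infty)$, $W_\pm(H,H_0)$ is an isometry on $\H$, and conjugation by the unitary operator $U_\tau$ preserves this property. Therefore $\|W_\pm^\tau g\|_\H = \|g\|_\H$ for every $g\in\H$, and a direct expansion yields the elementary identity
$$
\big\|(W_\pm^\tau - 1)g\big\|_\H^2
= 2\|g\|_\H^2 - 2\Re\big\langle g,W_\pm^\tau g\big\rangle_\H
= -2\Re\big\langle g,(W_\pm^\tau - 1)g\big\rangle_\H.
$$
In particular, strong convergence $W_\pm^\tau g \to g$ in $\H$ is equivalent to the scalar convergence $\big\langle g,(W_\pm^\tau - 1)g\big\rangle_\H \to 0$.

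The next step is to isolate a dense subspace $\D_0\subset\H$ on which Proposition \ref{weakl} applies with the choice $f=g$. I would take
$$
\D_0 := \FF^*\big(C_c^\infty(\R^3\setminus\{0\})\big).
$$
Any $g\in\D_0$ belongs to $\SS(\R^3)$, hence to $\H_s\cap L^\infty(\R^3)$ for every $s>3/2$. Moreover, since $\hat g$ is compactly supported in some annulus $\{a\leq |k|\leq b\}$ with $0<a<b<\infty$, one has $E_0([a,b])g=g$ with $[a,b]\subset(0,\infty)$. Thus the hypotheses of Proposition \ref{weakl} are satisfied simultaneously by $f$ and by $g$ when $f=g\in\D_0$, which yields $\big\langle g,(W_\pm^\tau-1)g\big\rangle_\H\to 0$ as $\tau \to -\infty$. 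Combined with the identity above, this gives $W_\pm^\tau g \to g$ strongly for every $g\in\D_0$.

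It remains to extend strong convergence from $\D_0$ to all of $\H$. Since $C_c^\infty(\R^3\setminus\{0\})$ is dense in $L^2(\R^3)$, the subspace $\D_0$ is dense in $\H$. Combined with the uniform bound $\|W_\pm^\tau\|_{\B(\H)} = 1$, a standard $\varepsilon/3$ argument delivers the conclusion $s-\lim_{\tau\to-\infty}W_\pm^\tau = 1$.

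I do not expect a genuine obstacle here, the substantive analytic work having been concentrated in Propositions \ref{lembehavior0} and \ref{weakl}. The only point requiring mild care is the selection of test vectors satisfying simultaneously both regularity assumptions of Proposition \ref{weakl} when used as $f$ and as $g$, and the choice $\D_0=\FF^*C_c^\infty(\R^3\setminus\{0\})$ handles this without additional effort.
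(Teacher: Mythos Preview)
Your proof is correct and follows essentially the same route as the paper: the same dense subspace $\FF^*C_c^\infty(\R^3\setminus\{0\})$, the same isometry expansion of $\|(W_\pm^\tau-1)g\|_\H^2$, and the same appeal to Proposition \ref{weakl} with $f=g$. The only differences are cosmetic---you package the cross terms as $-2\Re\langle g,W_\pm^\tau g\rangle_\H$ rather than writing both inner products separately, and you spell out the final $\varepsilon/3$ density argument explicitly.
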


\begin{proof}
By density, it is sufficient to show that $\lim_{\tau \to -\infty}\|(U_{-\tau} W_\pm(H,H_0)U_\tau -1)f\|_\H=0$ for any
$f \in \H$ with $\hat f \in C_c^\infty(\R^3\setminus\{0\})$.
Observe first that such $f$ satisfies all conditions imposed on $f$ and $g$ in the statement of Proposition \ref{weakl}.
Then, let us write $W_\pm(\tau)$ for the operator $U_{-\tau} W_\pm(H,H_0)U_\tau$ and compute
\begin{equation}\label{strong}
\big\|\big(W_\pm(\tau) -1\big) f\big \|_\H^2
= -\langle W_\pm(\tau)f,f\rangle_\H - \langle f, W_\pm(\tau)f\rangle_\H
+ \|W_\pm(\tau)f\|^2_\H + \|f\|^2_\H.
\end{equation}
By Proposition \ref{weakl} the first two terms converge to $-\|f\|^2_\H$ as $\tau \to -\infty$. In addition, observe that
$$
\|W_\pm(\tau)f\|^2_\H = \|W_\pm(H,H_0)U_\tau f\|^2_\H = \|U_\tau f\|^2_\H = \|f\|^2_\H
$$
because $W_\pm(H,H_0)$ are isometries. Thus, the expressions on the l.h.s.~of \eqref{strong} converge to $0$ as $\tau \to -\infty$.
\end{proof}

The previous result has also important consequences on the scattering operator $S$ as we shall show in the next section.


\section{Asymptotic limit for the scattering operator}\label{secS}
\setcounter{equation}{0}

Let us first recall that the scattering operator $S$ is defined by the product $W_+^*W_-$ and is a unitary operator. Then, an immediate consequence of Theorem \ref{mainwave} reads as follows:

\begin{corollary}\label{corol1}
Let us assume that $\sigma>3$ in Condition \eqref{condition1},
and suppose that Assumptions \ref{Himplicit} and \ref{terrible} hold.
Then the following limit holds:
\begin{equation}\label{limitS}
s-\lim_{\tau \to -\infty} U_{-\tau} S U_\tau =1.
\end{equation}
\end{corollary}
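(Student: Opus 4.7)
The plan is to reduce the statement directly to Theorem \ref{mainwave}. Writing $A_\tau^\pm := U_{-\tau} W_\pm(H,H_0) U_\tau$ and inserting the identity $U_\tau U_{-\tau} = 1$ between the two wave operators in $S = W_+^* W_-$, I would first observe the factorization
$$
U_{-\tau} S U_\tau = (A_\tau^+)^* \, A_\tau^-,
$$
which turns the question into one about the asymptotic behavior of the two factors as $\tau \to -\infty$. By Theorem \ref{mainwave}, $A_\tau^-$ converges strongly to $1$, so the only real work left is to handle the adjoint factor.

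This is the point I expect to require a small amount of care, since strong convergence is not preserved under taking adjoints in general. The trick is that $H_0$ has purely absolutely continuous spectrum, so the wave operators $W_\pm(H,H_0)$ are isometries, and therefore so are the conjugates $A_\tau^\pm$; in particular $(A_\tau^+)^* A_\tau^+ = 1$. For any $f \in \H$ this immediately yields
$$
((A_\tau^+)^* - 1)f = (A_\tau^+)^*\bigl(f - A_\tau^+ f\bigr),
$$
whose norm is bounded above by $\|f - A_\tau^+ f\|$, and this tends to $0$ by Theorem \ref{mainwave}. Hence $(A_\tau^+)^* \to 1$ strongly.

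To conclude, I would decompose
$$
U_{-\tau} S U_\tau - 1 = (A_\tau^+)^* (A_\tau^- - 1) + \bigl((A_\tau^+)^* - 1\bigr),
$$
and use the uniform bound $\|(A_\tau^+)^*\|_{\B(\H)} = 1$ together with the two strong convergences just established to conclude that each term, applied to an arbitrary $f \in \H$, tends to $0$. The main (and essentially only) obstacle is the adjoint issue addressed above; once it is handled by the isometry identity, the corollary follows immediately from Theorem \ref{mainwave}.
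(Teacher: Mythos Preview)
Your proof is correct. The argument differs slightly in packaging from the paper's. The paper writes $S_\tau := U_{-\tau} S U_\tau$ and first establishes the weak-type statement
\[
\langle f, S_\tau f\rangle_\H = \langle W_+(\tau)f, W_-(\tau)f\rangle_\H \longrightarrow \|f\|_\H^2
\]
via Theorem \ref{mainwave}, and then upgrades this to strong convergence by expanding $\|(S_\tau - 1)f\|_\H^2$ and invoking the unitarity of $S_\tau$ (the same ``trick'' as in the proof of Theorem \ref{mainwave}). You instead factor $S_\tau = (A_\tau^+)^* A_\tau^-$ and handle the adjoint directly via the isometry identity $(A_\tau^+)^* A_\tau^+ = 1$, which gives $\|((A_\tau^+)^* - 1)f\| \le \|(A_\tau^+ - 1)f\|$. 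Both routes rest on the same underlying mechanism---isometry allows one to pass from strong convergence of a sequence to strong convergence of its adjoints---but your version is arguably a touch more direct, since it bypasses the intermediate weak computation for $S_\tau$ and reduces everything to the strong convergence of $A_\tau^\pm$ already provided by Theorem \ref{mainwave}.
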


\begin{proof}
Let us set
$S_{\tau}$
for $U_{-\tau}SU_\tau$ and recall the notation $W_\pm(\tau):=U_{-\tau}W_\pm U_\tau$ introduced in the previous proof. Then, for any $f \in \H$ one deduces from Theorem \ref{mainwave} that
\begin{align*}
\lim_{\tau \to -\infty} \langle f, S_{\tau}f\rangle_\H
=\lim_{\tau \to -\infty} \langle W_+(\tau)f,W_-(\tau)f\rangle_\H
= \|f\|_\H^2.
\end{align*}
With a trick similar to the one already used in the proof of that theorem, one then deduces that $\lim_{\tau \to -\infty}\|(S_{\tau}-1)f\|^2_\H=0$.
\end{proof}

Let us also look at the consequence of the previous results on the scattering matrices. For that purpose, let $\HS:= L^2(\S^2)$ and $\FF_0 :\H\to L^2(\R_+, \d\lambda;\HS)=:\Hrond$ be the unitary transformation which diagonalizes the operator $H_0$, namely
$[\FF_0 H_0 f](\lambda)=\lambda [\FF_0 f](\lambda)$
for any $f$ belonging to $\H^1$ and for almost every $\lambda \in \R_+$. For the relativistic Schr\"odinger operator, the expression for $\FF_0$ is very simple, more precisely for any $f \in \SS(\R^3)$ one has
$\big[[\FF_0 f](\lambda)\big](\omega):=\lambda [\FF f](\lambda \omega)$ for any $\lambda >0$ and $\omega \in \S^2$.

Now, it is well known that $S$ is diagonal in the spectral representation of $H_0$, or in other words
that $\FF_0 S \FF_0^* = S(\Lambda)$,
where $S(\Lambda)$
denotes an operator of multiplication on $\R_+$ by an essentially bounded function with values
in $\B(\HS)$. More precisely, for any $\varphi \in \Hrond$ and $\lambda \in \R_+$ the action of
$S(\Lambda)$ reads
$[S(\Lambda) \varphi](\lambda) = S(\lambda) \varphi(\lambda) \in \HS$
and $S(\lambda) \in \B(\HS)$ is called the scattering matrix at energy $\lambda$.
Then, by taking into account this relation as well as the well known equality
$\FF U_\tau \FF^* = U_{-\tau}$,
one infers that
$[\FF_0 U_{\tau} \FF_0^* \varphi](\lambda)
= e^{-\tau/2} \varphi(e^{-\tau} \lambda)$
for any
$\varphi \in \Hrond$,
and hence obtains that
$[\FF_0  S_\tau \FF_0^* \varphi](\lambda) = S(e^\tau \lambda) \varphi(\lambda)$.
By introducing the notation
$[S_{\tau}(\Lambda) \varphi](\lambda):= S(e^{\tau}\lambda) \varphi(\lambda)$, we infers that
$\FF_0  S_\tau \FF_0^* = S_{\tau}(\Lambda)$.
Note that $S_0=S$ and $S_0(\Lambda) = S(\Lambda)$.

In that setting, relation \eqref{limitS} reads as follows: For any $\varphi \in \Hrond$, one has
\begin{equation*}
\lim_{\tau \to -\infty} \|
S_{\tau}(\Lambda)\varphi - \varphi
\|_{\Hrond}=0.
\end{equation*}
However, this relation is strictly weaker than the uniform limit $u-\lim_{\lambda \searrow 0} S(\lambda) = 1$, which has been mentioned in the Introduction. In order to obtain the latter result, we shall borrow in the proof of the next statement a usual stationary representation of the scattering matrix.

\begin{theorem}
Let us assume that $\sigma>3$ in Condition \eqref{condition1},
and suppose that Assumptions \ref{Himplicit} and \ref{terrible} hold.
Then $u-\lim_{\lambda \searrow 0}S(\lambda) = 1$ in $\B(\HS)$.
\end{theorem}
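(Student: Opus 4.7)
The plan is to establish a Birman--Kato stationary representation for the scattering matrix $S(\lambda)$ and then to extract uniform $O(\lambda)$ decay from Proposition \ref{lembehavior0} combined with uniform bounds on the associated boundary operators. I would first derive, starting from the stationary expressions for $W_\pm$ obtained in Section \ref{secsta} and the spectral representation $[\FF_0 f](\lambda)(\omega) = \lambda (\FF f)(\lambda \omega)$ of $H_0$, the identity
\begin{equation*}
S(\lambda) - 1 = -2\pi i \lambda^2\, \gamma(\lambda) \FF \big[V - V R(\lambda + i 0) V\big] \FF^* \gamma(\lambda)^*.
\end{equation*}
The factorisation $V = v_0 u_0$ together with the resolvent identity $R = R_0 - R_0 V R$ yields $u_0 R(\lambda + i0)v_0 = 1 - B(\lambda+i0)$, and hence $V - V R(\lambda+i0)V = v_0 B(\lambda+i0) u_0$; substituting this, one obtains
\begin{equation*}
S(\lambda) - 1 = -2\pi i \lambda^2\, \gamma(\lambda) \FF v_0 \, B(\lambda+i0) \, u_0 \FF^* \gamma(\lambda)^*.
\end{equation*}

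The next step is to bound the two boundary operators uniformly in $\lambda$ near $0$. For $\phi \in \HS$, pairing $\FF^* \gamma(\lambda)^*\phi$ with Schwartz test functions gives the explicit formula
\begin{equation*}
[\FF^*\gamma(\lambda)^*\phi](x) = (2\pi)^{-3/2}\int_{\S^2} \phi(\omega)\, e^{i\lambda\omega\cdot x}\,\d\omega,
\end{equation*}
whence the pointwise bound $\|\FF^* \gamma(\lambda)^*\phi\|_{L^\infty(\R^3)} \leq (2\pi)^{-3/2}\sqrt{4\pi}\,\|\phi\|_\HS$, which is uniform in $\lambda \geq 0$. The hypothesis $\sigma > 3$ in Condition \eqref{condition1} implies that $|u_0(x)|, |v_0(x)| \leq {\rm Const.}\,\langle x\rangle^{-\sigma/2}$ with $\sigma/2 > 3/2$, so that $u_0, v_0 \in L^2(\R^3)$. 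Consequently, the norms of $u_0 \FF^* \gamma(\lambda)^*\in \B(\HS, \H)$ and, by duality, of $\gamma(\lambda)\FF v_0\in \B(\H, \HS)$ are bounded by a constant independent of $\lambda$.

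It remains to control $\lambda B(\lambda+i0)$ as $\lambda \searrow 0$. After the obvious change of variable $\mu := e^\tau \lambda_*$ (for fixed $\lambda_* > 0$), Proposition \ref{lembehavior0} asserts that $\mu B(\mu + i0)$ has a norm limit in $\B(\H)$ as $\mu \searrow 0$, so that $M_0 := \sup_{\mu \in (0,\mu_0]} \|\mu B(\mu+i0)\|_{\B(\H)} < \infty$ for some $\mu_0 > 0$. Putting everything together, one arrives at
\begin{equation*}
\|S(\lambda) - 1\|_{\B(\HS)} \leq 2\pi\, \lambda \cdot M_0 \cdot \|\gamma(\lambda) \FF v_0\|_{\B(\H,\HS)} \cdot \|u_0 \FF^* \gamma(\lambda)^*\|_{\B(\HS,\H)} = O(\lambda)
\end{equation*}
as $\lambda \searrow 0$, which proves the claim.

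The main obstacle I expect is the first step, namely the derivation of the stationary Birman--Kato formula for $S(\lambda)$ within the conventions of Section \ref{secsta}: this requires identifying $\Gamma_0(\lambda) = \lambda\,\gamma(\lambda)\FF$ via the formula for $E_0'(\lambda)$ established in Lemma \ref{lem:trace}, and verifying that the various dualities make sense in the scale $\H_s$ under the assumption $\sigma > 3$. Once this identity and the factored form $V - V R V = v_0 B u_0$ are in place, the remainder of the argument is a direct combination of Proposition \ref{lembehavior0} with the square-integrability of $u_0$ and $v_0$.
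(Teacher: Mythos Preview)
Your proposal is correct and follows essentially the same route as the paper: the paper also writes the stationary Birman--Kato formula $S(\lambda)=1-2\pi i\,\FF_0(\lambda)\,u_0\,B(\lambda-i0)^*\,v_0\,\FF_0(\lambda)^*$ (algebraically equivalent to your $v_0\,B(\lambda+i0)\,u_0$ version, since $\FF_0(\lambda)=\lambda\,\gamma(\lambda)\FF$) and then combines Proposition~\ref{lembehavior0} with the $O(\lambda)$ behaviour of the boundary operators. The only cosmetic difference is that the paper obtains the latter via the Jensen--Kato type expansion $\FF_0(\lambda)=\lambda\gamma_0+o(\lambda)$ in $\B(\H_s,\HS)$ for $s>3/2$, whereas you bound $u_0\,\FF^*\gamma(\lambda)^*$ directly through the explicit plane-wave kernel and the square-integrability of $u_0,v_0$; both give the same $O(\lambda)$ conclusion.
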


\begin{proof}
For any $\lambda \in \R_+$, let us first introduce the operator $\FF_0(\lambda)$ defined on $f \in \SS(\R^3)$ by the relation $\FF_0(\lambda)f = [\FF_0 f](\lambda)\in \HS$. By analogy to the Schr\"odinger case, it is easily shown that this operator extends continuously to an element of $\B(\H_s,\HS)$ for any $s>1/2$.
Furthermore, by mimicking the approach presented in \cite[Sec.~5]{JK} an asymptotic expansion for $\FF_0(\lambda)$ as $\lambda \searrow 0$ can also be derived.
More precisely, one readily obtains that
$\FF_0(\lambda) =  \lambda \gamma_0
+ o(\lambda)$ in $\B(\H_s,\HS)$ for any $s>3/2$,
where $[\gamma_0 f](\omega)= \hat f (0)$.

Then, the following representation of the scattering matrix holds (see for example \cite[Sec.~2.8 \& 5.7]{Yaf} or \cite[Sec.~5]{JK}) :
\begin{equation*}
S(\lambda) = 1 -2\pi i \FF_0(\lambda) V \big(1+R_0(\lambda +i0) V\big)^{-1} \FF_0(\lambda)^*
\end{equation*}
In addition, by taking the following relations into account
\begin{equation*}
\big[V\big(1+R_0(z)V\big)^{-1}\big]^* =  \big(1+VR_0(\overline z)\big)^{-1}V
=  v \big(1+uR_0(\overline z)v\big)^{-1}u
\end{equation*}
one infers the useful relation
$$
S(\lambda) = 1 -2\pi i \FF_0(\lambda)
u \;\!B(\lambda -i0)^*\;\!v\FF_0(\lambda)^*
$$
where $B(z)$ was introduced in the statement of Lemma
\ref{lemalternative}.
In addition, recall from Proposition \ref{lembehavior0} that the norm limit $\lim_{\lambda \searrow 0} \lambda B(\lambda -i0)$ exists in $\B(\H)$. Thus, by taking into account the already mentioned properties of $\FF_0(\lambda)$ when $\lambda \searrow 0$, one directly deduces the statement of the theorem.
\end{proof}


\section{Absolute continuity of the spectrum on $\boldsymbol{[0,\infty)}$}\label{sectionac}
\setcounter{equation}{0}

The non-existence of embedded eigenvalues should certainly deserve more attention for the present model. However, since investigations on this question for Schr\"odinger operators always involve a rather heavy machinery, we do not expect that this question can be easily solved for the present relativistic model. On the other hand, by assuming stronger conditions on $V$, one can deduce from an abstract argument that the spectrum of $H$ on $\R_+$ is purely absolutely continuous. We clearly suspect that the following assumptions on $V$ are much too strong both for the non-existence of positive eigenvalues and for the absolute continuity of the spectrum on $\R_+$. But since the argument is rather simple, we have decided to present it for completeness. The proof is based on an abstract result obtained in \cite{Ri06}

Before going into the details of the application of \cite[Thm.~1]{Ri06}, let us recall one definition on regularity of operators with respect to $C_0$-groups. Let $\T_1,\T_2$ be two Banach spaces endowed with two $C_0$-groups $\{U^1_\tau\}_{\tau \in \R}, \{U^2_\tau\}_{\tau \in \R}$ of generators $A_1,A_2$, respectively. One says that an element $B \in \B(\T_1,\T_2)$ belongs to $C^1(A_1,A_2;\T_1,\T_2)$ if the map
$$
\R\ni \tau \mapsto U^2_{-\tau} \;\!B \;\!U^1_\tau \in \B(\T_1,\T_2)
$$
is strongly differentiable.

Now, recall that the dilation group has been introduced in Section \ref{secwave}. It is known that this group defines $C_0$-groups in all weighted Sobolev spaces $\H^t_s$, for $s,t \in \R$. Note that these groups are defined either by restrictions or by duality arguments, and that we keep the same notation $\{U_\tau\}_{\tau \in \R}$ for these groups in each of these spaces. Their generators are all denoted by $A$. Furthermore, the relation $U_{-\tau}H_0U_\tau = e^{\tau}H_0$ clearly holds in $\B(\H^1,\H)$ for all $\tau \in \R$.
As a consequence, the operator $H_0$ belongs to $C^1(A,A;\H^1,\H)\equiv C^1(A;\H^1,\H)$.

Let us now add the potential $V$. In the sequel, we assume that $V\in C^2_b(\R^3)$, which means that the potential, its first order derivatives as well as its second order derivatives are continuous and bounded.
We also assume that the function $\widetilde V$, defined by
$\widetilde V(x)= x\cdot [\nabla V](x)$ for all $x\in \R^3$, is a bounded function.
Since $U_{-\tau}VU_\tau$ is the operator of multiplication by the function $V_\tau$ defined by $V_\tau(x)=V(e^{-\tau}x)$ for any $x \in \R^3$, one easily observes that $V \in C^1(A; \H,\H)\equiv C^1(A;\H)$, and therefore $V \in C^1(A;\H^1,\H)$.
As a consequence, one deduces that $H$ belongs to $C^1(A;\H^1,\H)$ and the following equalities hold in $\B(\H^1,\H)$:
$$
\frac{\d}{\d \tau}\big(U_{-\tau}HU_\tau\big)\big|_{\tau=0} =
[iH,A]= H_0 - \widetilde V.
$$

For the application of \cite[Thm.~1]{Ri06}, one needs to impose a positivity condition as well as further decrease conditions. For that purpose, let us first recall Kato's inequality:
$H_0 \geq  2 {\pi}^{-1}|X|^{-1}$
({\it cf}.~\cite[Thm.~2.2.4]{BE},  \cite[p. 307]{Kato}).
Then, our positivity assumption takes the following form : there exist two constants $c_1, c_2\in [0,1)$ with $c_1+c_2<1$ such that
\begin{equation}\label{c1}
M:=2 {\pi}^{-1}c_2  \frac{1}{|X|}-c_1 V-\widetilde V>0.
\end{equation}
In other words, $M$ is the operator of multiplication by the non-negative function
$x \mapsto M(x):=2 {\pi}^{-1} c_2  \frac{1}{|x|}-c_1 V(x)-\widetilde V(x)$.
One infers from this inequality that the operator $T$, defined on $\H^{1}$ by $T:=-c_1H + [iH,A]$ satisfies
\begin{equation*}
T= (1-c_1)H_0-c_1 V -\widetilde V \geq (1-c_1-c_2)H_0 + M
>0.
\end{equation*}
One also gets the inequalities $T\geq M$, $T\geq (1-c_1-c_2)H_0$ and
$T\geq 2{\pi}^{-1}(1-c_1-c_2)|X|^{-1}$.

For the decrease conditions, let us assume that for all $x \in \R^3$:
\begin{equation}\label{c2}
|x\cdot [\nabla V](x)| \leq {\rm Const.}~\langle x\rangle^{-1}
\qquad \hbox{ and }\qquad
\big|x\cdot \nabla[(x\cdot \nabla)V](x)\big|\leq {\rm Const.}~\langle x\rangle^{-1}.
\end{equation}
Since
$H_0 \geq  2{\pi}^{-1}|X|^{-1}\geq 2{\pi}^{-1} \langle X\rangle^{-1}$,
one then infers that there exists a constant $c$ large enough such that the following inequalities hold:
\begin{align}
\label{l1} -c T\leq [iH,A]&\leq cT, \\
\label{l2} -c T\leq \big[i[iH,A],A\big]&\leq cT, \\
\label{l3} -c T\leq [iT,A]&\leq cT.
\end{align}
With these inequalities at hand, one can now prove:

\begin{proposition}
Assume that $V\in C^2_b(\R^3)$ and that the conditions contained in \eqref{c2} are satisfied. Assumed in addition that there exist two constants $c_1, c_2\in [0,1)$ with $c_1+c_2<1$ such that the condition \eqref{c1} is verified. Then, the operator $H$ has purely absolutely continuous spectrum on $[0,\infty)$.
\end{proposition}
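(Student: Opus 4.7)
The plan is to invoke \cite[Thm.~1]{Ri06} and reduce the statement to a verification of its abstract hypotheses. Most of the required structure has already been assembled above: $H$ is self-adjoint with domain $\H^1$, we have $H \in C^1(A;\H^1,\H)$, the auxiliary operator $T := -c_1 H + [iH,A]$ is strictly positive via $T \geq (1-c_1-c_2)H_0 + M$ with $M > 0$, and the two-sided bounds \eqref{l1}--\eqref{l3} encode exactly the iterated commutator control that a Mourre-type abstract theorem demands. The first step would be to translate \eqref{l1}--\eqref{l3} into the regularity statements required in \cite{Ri06}: \eqref{l2} should upgrade $H$ to an element of $C^2(A;\H^1,\H)$ and \eqref{l3} should do the same for $T$. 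These checks are routine, using that the dilation group acts as a $C_0$-group on all the weighted Sobolev spaces $\H^t_s$, so that operators form-bounded by $T$ can be differentiated along the group in the appropriate sense.

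The core analytic content is a strict Mourre-type estimate for $H$ that is uniform all the way down to the threshold $\lambda = 0$. Here the two lower bounds on $T$ play complementary roles: $T \geq (1-c_1-c_2)H_0$ gives control via the spectral theorem for $H_0$ at positive energies, while $T \geq 2\pi^{-1}(1-c_1-c_2)|X|^{-1}$, coming from Kato's inequality, provides a position-space lower bound that does not degenerate as one approaches $0$ in the spectrum of $H$. I expect the main difficulty to be in verifying that this combination really yields the strict positivity demanded by \cite[Thm.~1]{Ri06} \emph{uniformly} on some spectral interval $[0,\lambda_0]$ of $H$; this is the step where the absence of a compact remainder in the Mourre inequality is crucial, and it is precisely what permits the inclusion of the threshold $\lambda = 0$ in the interval of absolute continuity, rather than merely the open half-line $(0,\infty)$.

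Once these verifications are in place, the conclusion that $H$ has purely absolutely continuous spectrum on $[0,\infty)$, with no embedded eigenvalues and no singular continuous part, follows as a direct application of \cite[Thm.~1]{Ri06}.
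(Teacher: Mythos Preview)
Your overall strategy --- reduce to the abstract hypotheses of \cite[Thm.~1]{Ri06} --- is correct, but your translation of \eqref{l1}--\eqref{l3} into those hypotheses misidentifies what the theorem actually asks for. The method of \cite{Ri06} is the \emph{weakly conjugate operator} method, not standard Mourre theory: there is no localized estimate on a spectral interval $[0,\lambda_0]$ to be proved, and the ``absence of a compact remainder'' you worry about is not a step in the argument. The global positivity $T>0$ (equivalently $[iH,A]\geq c_1 H$) is itself the input hypothesis, and it is already established in the text.

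What you are missing is the correct regularity framework. The theorem does not require $H\in C^2(A;\H^1,\H)$; rather, one introduces the auxiliary Hilbert space $\T$ obtained by completing $\H^{1/2}$ in the norm $\|f\|_\T^2:=\langle f,Tf\rangle_{1/2,-1/2}$, and one must show that $[iH,A]$ extends to an element of $C^1(A;\T,\T^*)$. Inequality \eqref{l1} gives $[iH,A]\in\B(\T,\T^*)$ and \eqref{l2} gives the same for $\big[i[iH,A],A\big]$. The remaining nontrivial step --- absent from your plan --- is to check that $\{U_\tau\}$ extends to a $C_0$-group on $\T$ (and hence on $\T^*$), so that the class $C^1(A;\T,\T^*)$ is even defined; this is exactly where \eqref{l3} is used, together with a Gronwall argument on $\tau\mapsto\|U_\tau f\|_\T^2$. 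Once this is done, \cite[Thm.~1]{Ri06} and its corollary give the conclusion directly. Your proposed route through $C^2(A;\H^1,\H)$ and a uniform localized Mourre estimate does not match the structure of \cite{Ri06} and would not, as written, close.
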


\begin{proof}
The proof consists in checking that the abstract conditions of \cite[Thm.~1]{Ri06} are satisfied.
As already noticed before the statement of the proposition, one clearly has that $H$ belongs to $C^1(A;\H^1,\H)$ and that the operator $T=-c_1H + [iH,A]$ satisfies $T>0$ on $\H^1$. In addition, the operator $[iH,A]=H_0-\widetilde V$ is bounded from below. Thus both conditions contained in \cite[Eq.~(2)]{Ri06} are satisfied.

Now, let us keep writing $[iH,A]$ and $T$ for the continuous extensions of these operators to elements of $\B(\H^{1/2},\H^{-1/2})$. It then follows from \eqref{l1} that for all $f \in \H^{1/2}$ one has
\begin{equation}\label{c3}
\big|\langle f,[iH,A]f\rangle_{1/2,-1/2}\big|\leq  c\langle f,Tf\rangle_{1/2,-1/2}.
\end{equation}
Thus, if $\T$ denotes the completion of $\H^{1/2}$ with the norm $\|f\|_\T:= \langle f,Tf\rangle^{1/2}_{1/2,-1/2}$, it follows from \eqref{c3} that $[iH,A]$ extends to an element of $\B(\T,\T^*)$, where $\T^*$ denotes the adjoint space of $\T$. Note that relation \eqref{l2} leads to a similar conclusion for the operator
$\big[i[iH,A],A\big]$.

We finally check that $\{U_\tau\}_{\tau\in \R}$ extends
to a $C_0$-group in $\T$.
This easily reduces to the proof that
$\|U_\tau f\|_\T \leq c(\tau) \|f\|_\T$ for all $f \in \H^{1/2}$ and $\tau \in \R$.
By  \eqref{l3} one has~:
\begin{equation*}
\|U_\tau f\|^2_\T \ = \ \langle f, Tf\rangle +
\int_0^\tau \langle U_t f, [iT,A]U_t f\rangle \;\!\d t
\ \leq \ \|f\|^2_\T + c \;\!\Big|\int_0^\tau \|U_t f\|_\T^2 \; \d t\Big|\ .
\end{equation*}
The function $(0,\tau) \ni t \mapsto \|U_t f\|_\T^2 \in \R$ is bounded
(since $\H^{1/2} \hookrightarrow \T$), and hence by a simple form of the
Gronwall Lemma, we get the inequality $\|U_\tau f\|_\T \leq e^{\frac{c}{2}|\tau|}
\|f\|_\T$.
Thus $\{U_\tau\}_{\tau \in \R}$ extends to a $C_0$-group in $\T$, and by duality
$\{U_\tau\}_{\tau\in \R}$ also defines a $C_0$-group in $\T^*$.
This finishes the proof that $[iH,A]$ extends to
an element of $C^1(A;\T,\T^*)$.
All hypotheses of \cite[Thm.~1]{Ri06} have been checked,
and the statement follows from this theorem and from its corollary.
\end{proof}


\section{Appendix}
\setcounter{equation}{0}

In this appendix, we derive an explicit expression for the action of the unitary group generated by $H_0$. Apparently, such formula was not exhibited before.

For that purpose, let us consider $f\in C^\infty_c(\R^3)$,
$g \in \SS$ with $\hat g\in C^\infty_c(\R^3)$
and for $z \in \C$ one sets
$$
\zeta_{\pm}(z):=\int_{\R^3} e^{\pm iz|k|}\hat f(k) \;\!\overline{\hat g(k)}\;\!\d k.
$$
Clearly,
$\zeta_{\pm}$ are
entire functions on $\C$ and
one has
$\zeta_{\pm}(\mp t) = \big\langle e^{-itH_0}f,g\big\rangle$
for any $t \in \R$. On the other hand, one also has for any $t>0$
\begin{align*}
\zeta_\pm(\pm it)
= & \int_{\R^3}e^{-t|k|} \hat f(k) \;\!\overline{\hat g(k)}\;\!\d k \\
= & \big\langle e^{-tH_0}f,g\big\rangle \\
= & \int_{\R^3}\Big\{
\int_{\R^3}\frac{t}{\pi^2 (|x-y|^2+t^2)^2} \;\!f(y)\;\!\d y \Big\} \overline{g(x)}\;\!\d x,
\end{align*}
where the explicit form of the semi-group is borrowed from \cite[Eq.~(2.1)]{U}.
Now, by setting
\begin{equation*}
\eta_\pm(z)
:= \int_{\R^3}\Big\{
\int_{\R^3}\frac{\mp iz}
{\pi^2 (|x-y|^2-z^2)^2} \;\!f(y)\;\!\d y \Big\} \overline{g(x)}\;\!\d x
\end{equation*}
one easily observes that the maps
$\eta_\pm$
are holomorphic on
$\C_\pm$.
Furthermore, the equalities
$\zeta_\pm(\pm it)=\eta_\pm(\pm it)$
hold for any $t>0$. By analytic continuation, it follows that the functions
$\zeta_\pm$
and $\eta_\pm$ are equal on $\C_\pm$, respectively.

And as a consequence, one infers that for each fixed $t > 0 $ one has
\begin{align*}
\big\langle e^{-itH_0}f,g\big\rangle = &
\zeta_-(t) =
\lim_{\varepsilon \searrow 0}\zeta_-(t-i\varepsilon) = \lim_{\varepsilon \searrow 0}\eta_-(t-i\varepsilon) \\
= & \lim_{\varepsilon \searrow 0} \int_{\R^3}\Big\{
\int_{\R^3}\frac{it + \varepsilon}{\pi^2 (|x-y|+t-i\varepsilon)^2(|x-y|-t+i\varepsilon)^2} \;\!f(y)\;\!\d y \Big\} \overline{g(x)}\;\!\d x
\end{align*}
which formally reads
\begin{equation*}
\big\langle e^{-itH_0}f,g\big\rangle =
\int_{\R^3}\Big\{
\int_{\R^3}\frac{it}{\pi^2 (|x-y|+t)^2(|x-y|-t+i 0)^2} \;\!f(y)\;\!\d y \Big\} \overline{g(x)}\;\!\d x
\end{equation*}
where the distributions $s\mapsto \frac{1}{(s\pm i0)^2}$ are for example defined in \cite[Sec.~3.2]{Hoer}.
On the other hand, one infers for each fixed $t<0$ that
\begin{align*}
\big\langle e^{-itH_0}f,g\big\rangle = &
\zeta_+(-t) =
\lim_{\varepsilon \searrow 0}\zeta_+(|t|+i\varepsilon) = \lim_{\varepsilon \searrow 0}\eta_+(|t|+i\varepsilon) \\
= & \lim_{\varepsilon \searrow 0} \int_{\R^3}\Big\{
\int_{\R^3}\frac{it+\varepsilon}{\pi^2 (|x-y|-t+i\varepsilon)^2(|x-y|+t-i\varepsilon)^2} \;\!f(y)\;\!\d y \Big\} \overline{g(x)}\;\!\d x
\end{align*}
which formally reads
\begin{equation*}
\big\langle e^{-itH_0}f,g\big\rangle =
\int_{\R^3}\Big\{
\int_{\R^3}\frac{it}{\pi^2 (|x-y|-t)^2(|x-y|+t-i0)^2} \;\!f(y)\;\!\d y \Big\} \overline{g(x)}\;\!\d x
\end{equation*}

One has thus obtained:

\begin{lemma}
For any
$f \in C_c^\infty(\R^3)$, $g \in \SS$ with $\hat g\in C^\infty_c(\R^3)$ and $\pm t >0$, one has
\begin{equation*}
\big\langle e^{-itH_0}f,g\big\rangle
=
\int_{\R^3}\Big\{
\int_{\R^3}\frac{it}{\pi^2 (|x-y|\pm t)^2(|x-y| \mp t  \pm i 0)^2} \;\!f(y)\;\!\d y \Big\} \overline{g(x)}\;\!\d x,
\end{equation*}
in a formal sense (the precise sense being the one mentioned above).
\end{lemma}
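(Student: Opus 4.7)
My plan is to follow the strategy already hinted at in the excerpt: relate the unitary group to the heat-like semigroup $\{e^{-tH_0}\}_{t>0}$ by analytic continuation, exploiting that the explicit integral kernel of $e^{-tH_0}$ is known from \cite[Eq.~(2.1)]{U}. Concretely, I would fix $f\in C_c^\infty(\R^3)$ and $g\in\SS(\R^3)$ with $\hat g\in C_c^\infty(\R^3)$, and introduce the two auxiliary functions
\begin{equation*}
\zeta_\pm(z):=\int_{\R^3} e^{\pm iz|k|}\hat f(k)\,\overline{\hat g(k)}\,\d k,\qquad z\in\C,
\end{equation*}
which, because $\hat f$ is Schwartz and $\hat g$ is compactly supported, are entire. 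By spectral calculus one has $\zeta_\pm(\mp t)=\langle e^{-itH_0}f,g\rangle_\H$ for every $t\in\R$, and $\zeta_\pm(\pm it)=\langle e^{-tH_0}f,g\rangle_\H$ for every $t>0$.

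The second step is to rewrite $\langle e^{-tH_0}f,g\rangle_\H$ in configuration space using the Poisson-type kernel $\frac{t}{\pi^{2}(|x-y|^{2}+t^{2})^{2}}$ of \cite[Eq.~(2.1)]{U}, and to recognize that this expression is the evaluation at $z=\pm it$ of the function
\begin{equation*}
\eta_\pm(z):=\int_{\R^3}\!\!\Big\{\int_{\R^3}\frac{\mp iz}{\pi^{2}(|x-y|^{2}-z^{2})^{2}}\,f(y)\,\d y\Big\}\,\overline{g(x)}\,\d x.
\end{equation*}
I would verify that $\eta_\pm$ are holomorphic on $\C_\pm$ respectively, by checking that on any compact subset of $\C_\pm$ the denominator $(|x-y|^{2}-z^{2})^{2}$ stays uniformly bounded away from $0$ (since $|x-y|^{2}-z^{2}\in\R$ would force $\Im z=0$), so that Morera's theorem and Fubini apply to the double integral with $f,g$ compactly supported, resp.~Schwartz.

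The third step is the analytic continuation argument: since $\zeta_\pm$ and $\eta_\pm$ agree on the positive (resp.~negative) imaginary half-axis, which has an accumulation point in $\C_\pm$, they coincide on the whole of $\C_\pm$. For $t>0$ fixed, $\langle e^{-itH_0}f,g\rangle_\H=\zeta_-(t)$ is then the non-tangential boundary value from $\C_-$ of $\eta_-$, and similarly for $t<0$ using $\zeta_+(|t|)$ as the boundary value of $\eta_+$ from $\C_+$. Taking the limit $\varepsilon\searrow 0$ inside the integral and rewriting $(|x-y|^2-(t\mp i\varepsilon)^2)^2=(|x-y|+t\mp i\varepsilon)^2(|x-y|-t\pm i\varepsilon)^2$ yields exactly the displayed formula, where the factor $(|x-y|\mp t\pm i0)^{-2}$ is interpreted in the distributional sense of \cite[Sec.~3.2]{Hoer}.

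The main obstacle is the very last step: making the interchange of limit and integral rigorous, since the factor $(|x-y|\mp t\pm i\varepsilon)^{-2}$ becomes singular on the sphere $|x-y|=|t|$ as $\varepsilon\searrow 0$. I would handle this by viewing the inner $y$-integral against $f$ as the pairing of the smooth compactly supported function $y\mapsto f(y)$ with the family of distributions $(|x-y|-t+i\varepsilon)^{-2}$ in the radial variable, which converges in $\D'(\R)$ to $(|x-y|-t+i0)^{-2}$; the remaining $x$-integration against $\overline g$ is then harmless because $g\in\SS(\R^3)$ and the $(|x-y|+t)^{-2}$ factor is smooth and bounded on the support of the relevant integrand. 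This gives precisely the formal expression stated in the lemma.
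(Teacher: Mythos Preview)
Your proposal is correct and follows essentially the same route as the paper: defining the entire functions $\zeta_\pm$, identifying their restrictions to the imaginary axis with the semigroup kernel from \cite[Eq.~(2.1)]{U}, introducing the holomorphic extensions $\eta_\pm$ on $\C_\pm$, and then passing to the real boundary value to recover the formal kernel. Your treatment is in fact slightly more detailed than the paper's in justifying the holomorphy of $\eta_\pm$ and in discussing the distributional sense of the $\varepsilon\searrow 0$ limit, whereas the paper simply records the limiting expression as ``formal''.
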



\end{document}